\crefname{appsec}{Appendix}{Appendices}
\theoremstyle{plain}
\newtheorem{theorem}{Theorem}[section]
\newtheorem{proposition}[theorem]{Proposition}
\newtheorem{lemma}[theorem]{Lemma}
\newtheorem{corollary}[theorem]{Corollary}
\theoremstyle{definition}
\newtheorem{definition}[theorem]{Definition}
\newtheorem{example}[theorem]{Example}
\newtheorem*{assumption*}{Assumption}
\theoremstyle{remark}
\newtheorem{remark}[theorem]{Remark}
\crefname{lemma}{Lemma}{Lemmas}
\crefname{theorem}{Theorem}{Theorems}
\crefname{definition}{Definition}{Definitions}
\crefname{fact}{Fact}{Facts}
\crefname{claim}{Claim}{Claims}
\crefname{proposition}{Proposition}{Propositions}
\newcommand{\dif}{\,\mathrm{d}}
\newcommand{\Cov}{\mathrm{Cov}}
\newcommand{\ind}{\*1}
\newcommand{\norm}[1]{\left\lVert #1 \right\rVert}
\newcommand{\diag}{\mathrm{diag}}
\newcommand{\eps}{\varepsilon}
\newcommand{\N}{\mathbb{N}}
\newcommand{\R}{\mathbb{R}}
\newcommand{\II}{\mathcal{I}}
\newcommand{\MM}{\mathcal{M}}
\newcommand{\XX}{\mathcal{X}}
\newcommand{\TV}[2]{d_{\mathrm{TV}}\left({#1},\,{#2}\right)}
\newcommand{\e}{\mathrm{e}}
\renewcommand{\epsilon}{\varepsilon}
\renewcommand{\emptyset}{\varnothing}
\newcommand{\set}[1]{\left\{#1\right\}}
\newcommand{\tuple}[1]{\left(#1\right)} 
\newcommand{\inner}[2]{\left\langle #1,#2\right\rangle}
\newcommand{\tp}{\tuple}
\newcommand{\ol}{\overline}
\newcommand{\abs}[1]{\left\vert#1\right\vert}
\def\*#1{\boldsymbol{#1}} 
\def\+#1{\mathcal{#1}} 
\def\-#1{\mathrm{#1}} 
\def\=#1{\mathbb{#1}} 
\def\!#1{\mathfrak{#1}} 
\def\oPr{\mathop{\mathrm{Pr}}}
\renewcommand{\Pr}[2][]{ \ifthenelse{\isempty{#1}}
  {\oPr\left[#2\right]}
  {\oPr_{#1}\left[#2\right]} } 
\def\oE{\mathop{\mathbb{E}}}
\newcommand{\E}[2][]{ \ifthenelse{\isempty{#1}}
  {\oE\left[#2\right]}
  {\oE_{#1}\left[#2\right]} }
\def\oVar{\mathrm{Var}}
\newcommand{\Var}[2][]{ \ifthenelse{\isempty{#1}}
  {\oVar\left[#2\right]}
  {\oVar_{#1}\left[#2\right]} }
\def\oEnt{\mathrm{Ent}}
\newcommand{\Ent}[2][]{ \ifthenelse{\isempty{#1}}
  {\oEnt\left[#2\right]}
  {\oEnt_{#1}\left[#2\right]} }
\newcommand{\PhiEnt}[2][]{ \ifthenelse{\isempty{#1}}
  {\oEnt^\phi\left[#2\right]}
  {\oEnt^\phi_{#1}\left[#2\right]} }
\newcommand{\mathsc}[1]{{\normalfont\textsc{#1}}}
\title{Rapid Mixing on Random Regular Graphs beyond Uniqueness}
\author{Xiaoyu Chen\thanks{Massachusetts Institute of Technology, Cambridge, Massachusetts, USA. Email: \texttt{xiaoyu@mit.edu}.} 
\and Zejia Chen\thanks{Georgia Institute of Technology, Atlanta, Georgia, USA. Email: \texttt{zchen3091@gatech.edu}.}
\and Zongchen Chen\thanks{Georgia Institute of Technology, Atlanta, Georgia, USA. Email: \texttt{chenzongchen@gatech.edu}.}
\and Yitong Yin\thanks{Nanjing University, Nanjing, Jiangsu, China. Email: \texttt{yinyt@nju.edu.cn}.}
\and Xinyuan Zhang\thanks{Nanjing University, Nanjing, Jiangsu, China. Email: \texttt{zhangxy@smail.nju.edu.cn}.}
}
\date{\today}
\pgfplotsset{compat=1.18} 
\begin{document}

\maketitle

\begin{abstract}

The hardcore model is a fundamental probabilistic model extensively studied in statistical physics, probability theory, and computer science.
It defines a Gibbs distribution over independent sets of a given graph, parameterized by a vertex activity $\lambda > 0$. 
For graphs of maximum degree $\Delta$, a well-known computational phase transition occurs at the tree-uniqueness threshold $\lambda_c(\Delta) = \frac{(\Delta-1)^{\Delta-1}}{(\Delta-2)^\Delta}$, where the mixing behavior of the Glauber dynamics (a simple Markov chain) undergoes a sharp transition: it mixes in nearly linear time for $\lambda < \lambda_c(\Delta)$, in polynomial but super-linear time at $\lambda = \lambda_c(\Delta)$, and experiences exponential slowdown for $\lambda > \lambda_c(\Delta)$.

It is conjectured that random regular graphs exhibit different mixing behavior,
with the slowdown occurring far beyond the uniqueness threshold.
We confirm this conjecture by showing that,
for the hardcore model on random $\Delta$-regular graphs, 
the Glauber dynamics mixes rapidly with high probability when $\lambda = O(1/\sqrt{\Delta})$, which is significantly beyond the uniqueness threshold  $\lambda_c(\Delta) \approx e/\Delta$.
Our result establishes a sharp distinction between the hardcore model on worst-case and beyond-worst-case instances, showing that the worst-case and average-case complexities of sampling and counting are fundamentally different.

This result of rapid mixing on random instances follows from a new criterion we establish for rapid mixing of Glauber dynamics for any distribution supported on a downward closed set family. 
Our criterion is simple, general, and easy to check.
In addition to proving new mixing conditions for the hardcore model, we also establish improved mixing time bounds for sampling uniform matchings or $b$-matchings on graphs, the random cluster model on matroids with $q \in [0,1)$, and the determinantal point process.
Our proof of this new criterion for rapid mixing combines and generalizes several recent tools in a novel way, including a trickle-down theorem for field dynamics, spectral/entropic stability, and a new comparison result between field dynamics and Glauber dynamics.

\bigskip

\end{abstract}

\thispagestyle{empty}

\newpage 
\thispagestyle{empty}

\tableofcontents

\newpage

\setcounter{page}{1}

\newpage
\section{Introduction}
The hardcore model, first introduced in statistical physics to describe systems of non-overlapping gas particles, has become a fundamental example of hard-constraint undirected graphical models.
Given a graph $G=(V,E)$, let $\II = \II(G)$ denote the set of all independent sets of $G$, where an independent set is a subset of vertices that induces no edges.
The model is parameterized by $\lambda > 0$, known as the \emph{fugacity}, 
and defines the Gibbs distribution $\mu = \mu_{G,\lambda}$ over $\II$ as
\begin{align*}
    \mu(S) := \frac{\lambda^{|S|}}{Z}, \quad \forall S \in \II,
\end{align*}
where $Z = Z_{G,\lambda} := \sum_{S \in \II} \lambda^{|S|}$ is the normalizing constant, known as the \emph{partition function}.


A fundamental computational problem is sampling from the Gibbs distribution, which is poly-time equivalent to estimating the partition function $Z$ via standard reductions.

Over the past two decades, a computational phase transition for approximate counting and sampling has been rigorously established through a series of influential works. 
This phase transition occurs at a critical threshold, $\lambda_c(\Delta) = \frac{(\Delta-1)^{\Delta-1}}{(\Delta-2)^\Delta}$, 
known as the \emph{tree-uniqueness threshold}, which determines the uniqueness of the Gibbs measure on the infinite $\Delta$-regular tree. 
Specifically, when $\lambda \le \lambda_c(\Delta)$, the Gibbs measure is unique, whereas for $\lambda > \lambda_c(\Delta)$, multiple Gibbs measures coexist.
Remarkably, this threshold also dictates the computational tractability of the hardcore model on graphs with maximum degree $\Delta$.
When $\lambda < \lambda_c(\Delta)$, deterministic FPTASes exist for the partition function~\cite{weitz2006counting,peters2019conjecture}. 
In contrast, for $\lambda > \lambda_c(\Delta)$,  approximate counting and sampling become intractable in polynomial time unless  $\textsf{NP}=\textsf{RP}$ \cite{sly2010computational,sly2014counting,galanis2016inapproximability}.
%

A sharper and more practical computational phase transition concerns the \emph{Glauber dynamics} (also known as the \emph{Gibbs sampler}), a canonical Markov Chain Monte Carlo (MCMC) algorithm for sampling from high-dimensional distributions. 
At each step, the Glauber dynamics picks a vertex $v \in V$ uniformly at random and updates the current independent set $S_t \in \II$ as follows: 
Let $S' := S_t \setminus \{v\}$. If $S' \cup \{v\} \notin \II$, set $S_{t+1} := S'$; otherwise, set $S_{t+1} := S' \cup \{v\}$ with probability $\frac{\lambda}{1+\lambda}$ and $S_{t+1} := S'$ with the remaining probability.
This well-known Markov chain, denoted by $P_\mathsc{gd}$, is reversible and converges to the hardcore Gibbs distribution.
Its mixing time is defined as
\begin{align*}
    T_{\mathrm{mix}}(P_\mathsc{gd}) := \max_{S \in \II} \min_{t \in \N} \left\{ \TV{P_\mathsc{gd}^t (S,\cdot)}{\mu} \le 1/4 \right\}.
\end{align*}
For the hardcore model, 
the mixing behavior of Glauber dynamics undergoes a sharp transition at the tree-uniqueness threshold.
Starting with \cite{ALO2021spectral}, a recent line of works \cite{CLV2023rapid, CLV2023optimal, blanca2022mixing, CFYZ2024rapid, AJKPV2022entropic, CFYZ22optimal, chen2022localization} have established that for $\lambda < \lambda_c(\Delta)$, the mixing time is $O(n \log n)$, where $n$ is the number of vertices. 
In contrast, for $\lambda > \lambda_c(\Delta)$, the mixing time can grow as large as $\exp(\Omega(n))$ in the worst case \cite{mossel2009hardness}.
Most recently, the mixing time at criticality was settled in \cite{chen2024rapid}, showing that for $\lambda = \lambda_c(\Delta)$, the mixing time is at most $O(n^{2+4e + O(1/\Delta)})$ and at least $\Omega(n^{4/3})$ in the worst case.

A key property at the heart of this computational phase transition is the decay of correlation.
In the uniqueness regime, the hardcore model exhibits \emph{strong spatial mixing} \cite{weitz2006counting} and \emph{$\ell_\infty$-spectral independence} \cite{CLV2023rapid,CFYZ2024rapid} (also known as \emph{total influence decay}). 
Both properties characterize correlation decay, asserting that vertices far apart tend to have diminishing influence on each other under arbitrary pinning.
These correlation decay properties form the foundation of previous deterministic approximate counting algorithms and ultimately lead to the rapid mixing of Glauber dynamics via the \emph{spectral independence} framework~\cite{ALO2021spectral,alev2020improved}.
In contrast, both strong spatial mixing and total influence decay cease to exist in the supercritical regime, i.e., when $\lambda > \lambda_c(\Delta)$,
which is evident on regular trees by examining the influence between the leaves and the root.

These computational phase transition results hold for arbitrary (or worst-case) graphs with maximum degree $\Delta$.
For the hardcore model, and more generally anti-ferromagnetic 2-spin models, 
the hard instances are given by random $\Delta$-regular \emph{bipartite} graphs,
whose Gibbs measures converge locally to \emph{semi}-translation-invariant Gibbs measures 
on the infinite $\Delta$-regular tree \cite{sly2014counting}.
%
%

We study computational phase transitions from a \emph{beyond-worst-case} perspective 
and investigate how far rapid mixing can extend beyond the worst-case critical threshold, 
specifically for the average-case instances that evade worst-case hardness. 
For the hardcore model, a canonical class of average-case instances is random $\Delta$-regular graphs. 
Statistical physicists predict that the mixing behavior on these graphs differs significantly from worst-case instances, 
as weaker notions of correlation decay persist beyond the uniqueness threshold, which are presumed to better characterize the rapid mixing of Glauber dynamics.
Prominent examples include replica symmetry and non-reconstructability. 
The recently-introduced and powerful notion of spectral independence is also conjectured to hold beyond uniqueness on random regular graphs as well. 

One explanation of these predictions is that, due to the locally tree-like structure, the mixing behavior on random regular graphs should resemble that of complete regular trees.
It is known that in the replica symmetric regime, the local measure around a random vertex in a random regular graph converges weakly to the translation-invariant Gibbs measure on the infinite regular tree  \cite{bhatnagar2016decay}.
Furthermore, evidence suggests that the mixing time of the Glauber dynamics on complete regular trees is intimately tied to a critical threshold $\lambda_r(\Delta)$, marking the transition from non-reconstruction to reconstruction \cite{martinelli2004glauber, martinelli2007fast, sly2017glauber}.
This leads to a natural, albeit optimistic, conjecture: Glauber dynamics should mix rapidly on random regular graphs up to the reconstruction threshold $\lambda_r(\Delta)$, which is significantly higher than the uniqueness threshold (see \cite{bhatnagar2016decay}).
%
However, to the best of our knowledge, no algorithmic result or proof of spectral independence exists for the hardcore model on random regular graphs, even slightly above the uniqueness threshold $\lambda_c(\Delta) \approx e/\Delta$.

In this paper, we take the first step and show that the Glauber dynamics mixes rapidly on random $\Delta$-regular graphs when $\lambda = O(1/\sqrt{\Delta})$, which is significantly beyond the uniqueness threshold. 

\begin{theorem}\label{thm:HC-main}
     Let $\Delta \ge 3$ be an integer, and let $\delta \in (0,1)$. 
     Consider the Glauber dynamics for the hardcore model on a random $\Delta$-regular graph $G$ on $n$ vertices with fugacity $\lambda$.
     If $\lambda\le \frac{1-\delta}{2\sqrt{\Delta-1} - 1}$,
     then with high probability over the choice of $G$, the mixing time satisfies:
    \begin{align*}
        T_{\mathrm{mix}}(P_\mathsc{gd}) = O\left( \frac{n^2}{\delta} \log \Delta \right).
    \end{align*}
\end{theorem}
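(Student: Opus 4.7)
The plan is to apply the new general rapid-mixing criterion for downward-closed distributions that the paper establishes (by combining a trickle-down theorem for field dynamics, spectral/entropic stability, and a field-to-Glauber comparison) and to verify its hypothesis for the hardcore Gibbs measure on a typical random $\Delta$-regular graph $G$. I expect the criterion to reduce rapid mixing to a uniform-over-pinnings upper bound, by a constant strictly less than~$1$, on the spectral radius of a pairwise influence operator associated with $\mu_{G,\lambda}$. The task then becomes verifying this spectral bound using structural properties of a random $\Delta$-regular graph.

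First I would control the influence operator under an arbitrary pinning $\sigma$ by dominating it entrywise with a nonnegative graph operator: a standard tree-recursion / coupling argument for the hardcore model shows that the pairwise influence from $u$ to $v$ is bounded by a product along a nonbacktracking path from $u$ to $v$ of factors that are $O(\lambda)$ per step, so the influence operator is entrywise controlled by $\lambda$ times a nonbacktracking walk operator of $G$ (twisted by $\sigma$). Next I would use the spectrum of a random regular graph: with high probability $G$ is locally tree-like at any constant radius and, by Friedman's theorem, its nontrivial adjacency eigenvalues are at most $2\sqrt{\Delta-1}+o(1)$, equivalently its nonbacktracking spectral radius is $\sqrt{\Delta-1}+o(1)$. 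Combining the two pieces, the spectral radius of the dominating operator is at most $\lambda(2\sqrt{\Delta-1}-1)+o(1)$: the factor $\Delta-1$ comes from the nonbacktracking branching, and the $-1$ reflects that, after the initial step of degree $\Delta$, each subsequent step in a nonbacktracking walk has only $\Delta-1$ choices. The hypothesis $\lambda\le(1-\delta)/(2\sqrt{\Delta-1}-1)$ is exactly what makes this quantity at most $1-\delta$, and feeding this bound into the criterion yields the claimed $O(n^2\log\Delta/\delta)$ mixing time.

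The main obstacle I anticipate is handling \emph{arbitrary} pinnings $\sigma$ uniformly: conditioning on a partial configuration destroys the local tree structure near the pinned vertices, produces an irregular subgraph, and shifts the effective fugacity on the free vertices, so naive Friedman-type bounds on $G$ itself do not immediately carry over to the pinned model. This is precisely where the spectral/entropic stability ingredient of the criterion does its work, absorbing the perturbation caused by conditioning and reducing the bound back to the clean tree-recursion calculation, so that the extremal constant $2\sqrt{\Delta-1}-1$ is preserved rather than degraded to something cruder such as $2\sqrt{\Delta}$. A secondary but related point is that Friedman's theorem only controls the spectrum on the orthogonal complement of the Perron eigenvector; the Perron direction, which corresponds to a global density fluctuation, has to be handled separately, and doing so is again natural within the entropic stability framework.
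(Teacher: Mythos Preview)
Your proposal misidentifies the object that the paper's criterion actually asks you to control, and this sends you down a needlessly complicated and ultimately incorrect path. The criterion (Theorem~1.4 in the paper) requires, for every non-maximal $S$ in the support, the bound $M_S \preceq I + (1-\delta)\,\diag(\*r_S)^{-1}$, where $M_S(i,j) = \frac{\mu^S(\{i,j\})\mu^S(\emptyset)}{\mu^S(\{i\})\mu^S(\{j\})}-1$ is a \emph{local} quantity depending only on configurations of size $\le 2$ over $S$. For the hardcore model one computes in one line that $M_S = -A_{G[V_S]}$, the negative adjacency matrix of the induced subgraph on the free vertices, and $\*r_S = \lambda\,\*1$. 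There is no tree recursion, no coupling, no nonbacktracking walks, and no influence matrix: the ``pairwise influence'' picture you describe is the spectral-independence framework, which is related but not what is used here.

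With $M_S=-A_{G[V_S]}$ and $\*r_S=\lambda\*1$, the criterion becomes $-\lambda_{\min}(A_{G[V_S]}) \le 1 + (1-\delta)/\lambda$, i.e.\ $\lambda \le (1-\delta)/(-\lambda_{\min}(A_{G[V_S]})-1)$. Your worry about arbitrary pinnings is resolved by a single application of Cauchy interlacing: $\lambda_{\min}(A_{G[V_S]}) \ge \lambda_{\min}(A_G)$ for every induced subgraph, so the uniform-over-$S$ condition reduces to the single inequality $\lambda \le (1-\delta)/(-\lambda_{\min}(A_G)-1)$. Friedman's theorem then gives $-\lambda_{\min}(A_G) \le 2\sqrt{\Delta-1}+o_n(1)$ w.h.p.\ (the minimum eigenvalue is always a nontrivial eigenvalue, so there is no Perron issue), yielding the stated threshold. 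In particular, the $-1$ in $2\sqrt{\Delta-1}-1$ comes from the ``$+I$'' in the criterion, not from any nonbacktracking branching count; your derivation of that constant, and the claim that the dominating operator has spectral radius $\lambda(2\sqrt{\Delta-1}-1)$, are both incorrect as written.
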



To develop an intuitive understanding of the bound on $\lambda$ in \cref{thm:HC-main}, 
we parameterize the hardcore model using the \emph{occupancy fraction} $\alpha := \frac{1}{n} \mathbb{E}_\mu[|S|]$, which represents the expected fraction of occupied vertices in an independent set sampled from the Gibbs measure. 
On random $\Delta$-regular graphs, $\alpha$  asymptotically equals the probability that the root is occupied in the translation-invariant Gibbs measure on the infinite $\Delta$-regular tree, provided $\lambda$ is not too large; see \cite{bhatnagar2016decay} for an explicit relation between $\lambda$ and $\alpha$ on the infinite tree.

It is well known that the normalized independence number (i.e., the maximum size of an independent set divided by $n$) of a random $\Delta$-regular graph satisfies $\alpha_{\max}(\Delta) \sim \frac{2 \log \Delta}{\Delta}$ with high probability. 

At the reconstruction threshold $\lambda_r(\Delta)$, the occupancy fraction is approximately $\alpha_r(\Delta) \sim \frac{\log \Delta}{\Delta}$ \cite{bhatnagar2016decay}. 
Notably, this density, $\frac{\log \Delta}{\Delta}$, matches the best known efficient algorithm for finding large independent sets in random $\Delta$-regular graphs. Furthermore, strong evidence suggests that solving the problem beyond this threshold is computationally intractable \cite{COE15,GS17,RV17}; see also recent relevant results on Erd\H{o}s--R\'enyi random graphs \cite{GJW24,Wei22,HS25}.

At fugacity $\lambda^*(\Delta) \sim \frac{1}{2\sqrt{\Delta}}$ as in \cref{thm:HC-main}, we can sample independent sets from the hardcore model with occupancy fraction $\alpha^*(\Delta) \sim \frac{\log \Delta}{2 \Delta}$, which is a factor of 2 away from the conjectured reconstruction threshold $\alpha_r(\Delta)$.
As a baseline, at the uniqueness threshold $\lambda_c(\Delta)$, the occupancy fraction satisfies $\alpha_c(\Delta) \sim \frac{1}{\Delta}$, thus much smaller.
In this sense, \cref{thm:HC-main} is analogous to previous sampling results on random instances, such as the antiferromagnetic Ising model on random regular graphs \cite{koehler2022sampling, anari2024trickle}, the Sherrington--Kirkpatrick (SK) model \cite{eldan2022spectral, anari2024trickle}, and the $p$-spin model \cite{adhikari2024spectral,anari2024universality,mikulincer2024stochastic}.
In both cases, rapid mixing was established for inverse temperatures way beyond the corresponding critical temperatures, though still not extending across the entire regime of correlation decay (characterized by non-reconstruction or replica symmetry).

\cref{thm:HC-main} follows from a more general theorem, which relates the rapid mixing of Glauber dynamics for the hardcore model to the {minimum}  eigenvalue of the underlying graph.

\begin{theorem}\label{thm:HC-main-lambda-min}
    Let $\delta \in (0,1)$,
    and consider the Glauber dynamics for the hardcore model on a graph $G$ with $n$ vertices and fugacity $\lambda$.
    If
    \begin{align*}
        \lambda \le \frac{1-\delta}{- \lambda_{\min}(A_G) - 1},
    \end{align*}
    where $\lambda_{\min}(A_G)$ is the minimum eigenvalue of the adjacency matrix $A_G$ of $G$, 
    then the mixing time satisfies $T_{\mathrm{mix}}(P_\mathsc{gd}) = O\left(\frac{n^2}{\delta} \log \frac{1}{\lambda} \right)$.
\end{theorem}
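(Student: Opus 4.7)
The plan is to derive the theorem by a direct application of the paper's new general rapid mixing criterion for Glauber dynamics on distributions supported on downward closed set families. That criterion -- assembled internally from the trickle-down theorem for field dynamics, spectral/entropic stability, and the new field-to-Glauber comparison result -- can be treated here as a black box, and reduces the proof of rapid mixing to verifying a single spectral condition of the schematic form $\lambda_{\max}(\Psi) \le 1 - \delta$ on the top-level correlation/signed-influence matrix $\Psi$ associated with $\mu = \mu_{G,\lambda}$. The concrete task is thus to verify this condition for the hardcore model.

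I would verify it by analysing $\Psi$ directly using the structure of the hardcore model. Because the model is anti-ferromagnetic, $\Psi$ is dominated by the edge contributions: for $uv \in E$, conditioning on $u \in S$ forbids $v$, producing an entry of order $-\lambda$; non-edge entries are strictly higher order in $\lambda$ and come only from indirect influence through longer paths in $G$; and the diagonal carries an additional $-\lambda$ contribution from the self term. Packaging these entry-wise estimates into a single matrix inequality, the aim is the clean PSD bound
\begin{align*}
\Psi \;\preceq\; -\lambda\,(A_G + I).
\end{align*}
Taking the largest eigenvalue of both sides and using that $\lambda_{\max}(-cM) = -c\,\lambda_{\min}(M)$ for $c>0$,
\begin{align*}
\lambda_{\max}(\Psi) \;\le\; -\lambda\bigl(\lambda_{\min}(A_G)+1\bigr) \;=\; \lambda\bigl(-\lambda_{\min}(A_G)-1\bigr) \;\le\; 1-\delta
\end{align*}
by the hypothesis on $\lambda$, which is exactly the input the criterion needs.

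The criterion then delivers a Poincar\'e gap of order $\delta/n$ for the Glauber dynamics, and composing with the standard spectral-gap-to-mixing-time conversion $T_{\mathrm{mix}} = O(\log(1/\mu_{\min})/\mathrm{gap})$ together with the elementary estimate $\log(1/\mu_{\min}) = O(n\log(1/\lambda))$ for the hardcore Gibbs measure (from $\mu(S) \ge \lambda^{|S|}/(1+\lambda)^n$) yields the advertised bound $T_{\mathrm{mix}}(P_\mathsc{gd}) = O(n^2 \log(1/\lambda)/\delta)$. This is consistent with the bound having the form (spectral diameter)$\times n/\delta$ rather than the $n\log n$ one would get from a modified log-Sobolev argument, so only the Poincar\'e side of the criterion is needed here.

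The main obstacle, and essentially the only non-routine step, is the PSD bound $\Psi \preceq -\lambda(A_G + I)$. The leading-order intuition is transparent -- an edge contributes $-\lambda$ and the diagonal contributes $-\lambda$ -- but turning it into a tight global spectral inequality requires carefully handling two things: the higher-order corrections to the diagonal coming from conditional occupancy probabilities, and the non-edge entries, which encode indirect influence along multi-step paths in $G$ and must not accumulate enough positive mass to spoil the bound. The natural attack is via a monotonicity/coupling argument exploiting that the hardcore model is monotone under pinning in the FKG sense, possibly organised through the Weitz self-avoiding walk tree representation of conditional marginals, which would give a clean path-by-path account of how each entry of $\Psi$ is built up. An alternative route is a global rank-decomposition that writes the desired upper bound $-\lambda(A_G+I)-\Psi$ as a manifestly PSD sum indexed by edges and self-loops of $G$. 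Either way, this single matrix inequality is the technical heart of the proof, and the rest of the argument is a clean plug-in to the general criterion.
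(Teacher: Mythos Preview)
There is a genuine gap: you have misidentified the object that the paper's criterion asks you to control. The criterion (Theorem~\ref{thm:main-delta}) is \emph{not} stated in terms of the influence/correlation matrix $\Psi$ of $\mu$; it is stated in terms of the \emph{pairwise dependency matrices}
\[
M_S(i,j) \;=\; \frac{\mu(S\cup\{i,j\})\,\mu(S)}{\mu(S\cup\{i\})\,\mu(S\cup\{j\})} - 1,
\qquad i,j\in V_S,
\]
together with the marginal ratio vectors $\*r_S(i)=\mu(S\cup\{i\})/\mu(S)$. These depend only on the values of $\mu$ at sets of size $|S|,|S|+1,|S|+2$, and for the hardcore model they are computed \emph{exactly} in one line: $\*r_S=\lambda\*1$ and $M_S(i,j)=-\mathbb{1}[\{i,j\}\in E]$, i.e.\ $M_S=-A_{G[V_S]}$. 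There are no ``higher-order corrections to the diagonal'' and no ``non-edge entries from indirect influence through longer paths''; those phenomena live in the influence matrix $\Psi$, not in $M_S$. The required inequality $M_S\preceq I+(1-\delta)\diag(\*r_S)^{-1}$ becomes $-A_{G[V_S]}\preceq(1+\tfrac{1-\delta}{\lambda})I$, which by Cauchy interlacing ($\lambda_{\min}(A_{G[V_S]})\ge\lambda_{\min}(A_G)$) follows directly from the hypothesis on $\lambda$.

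Consequently, what you flag as ``the technical heart of the proof'' --- establishing a PSD bound $\Psi\preceq-\lambda(A_G+I)$ on the full influence matrix via Weitz trees, FKG coupling, or an edge-indexed rank decomposition --- is simply not needed, and is far harder than anything the paper does. The entire verification for the hardcore model is a two-line computation once the right matrix is identified; the heavy lifting is absorbed in the proof of the general criterion itself (the trickle-down theorem for field dynamics plus the comparison lemma), not in its application. Your final paragraph on reading off the mixing time from the Poincar\'e constant and $\log(1/\mu_{\min})=O(n\log(1/\lambda))$ is correct.
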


For any non-empty graph $G$, the {minimum}  eigenvalue $\lambda_{\min}(A_G)$ ranges over $[-1,-\Delta]$, where $\Delta=\Delta_G$ is the maximum degree.
The extremal values are attained when $G$ is a bipartite graph ($\lambda_{\min}(A_G)=-\Delta$) or a clique ($\lambda_{\min}(A_G)=-1$).
More generally, $\lambda_{\min}(A_G)$ quantifies the extent to which $G$ deviates from being bipartite.
Thus, \cref{thm:HC-main-lambda-min} establishes rapid mixing for the hardcore model in terms of the graph’s non-bipartiteness.

We establish \cref{thm:HC-main-lambda-min} by introducing a simple and universal criterion for the rapid mixing of Glauber dynamics, which is formally stated in \cref{subsec:main-thm}. 
When applied to the hardcore model, this criterion yields the condition in \cref{thm:HC-main-lambda-min} in terms of the minimum graph eigenvalue.
Other notable applications of this approach include uniform matching and $b$-matching.

\begin{theorem}\label{thm:matching-main}
    Let $G$ be a simple graph with $n$ vertices, $m$ edges, and maximum degree $\Delta$.
    Let $\mu_{\mathcal{M}(G)}$ and $\mu_{\mathcal{M}_b(G)}$ denote the uniform distributions over matchings and $b$-matchings in $G$, respectively. 
    \begin{itemize}
        \item The mixing time of the Glauber dynamics for $\mu_{\mathcal{M}(G)}$ is  $O\left(\sqrt{\Delta}mn\log n\right)$. 
        \item The mixing time of the Glauber dynamics for $\mu_{\mathcal{M}_b(G)}$ is $O\left(\min\{\Delta^b,bn\}\cdot bmn\log n\right)$.
    \end{itemize}
\end{theorem}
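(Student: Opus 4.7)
The plan is to cast both the uniform matching measure $\mu_{\mathcal{M}(G)}$ and the uniform $b$-matching measure $\mu_{\mathcal{M}_b(G)}$ as distributions on downward-closed families over the edge set $E$ of $G$, and then invoke the general criterion for rapid mixing of Glauber dynamics on such families that is established in \cref{subsec:main-thm}. Matchings of $G$ are precisely the independent sets of the line graph $L(G)$, and $b$-matchings admit the analogous encoding. By analogy with \cref{thm:HC-main-lambda-min}, where the criterion boils down to controlling $-\lambda_{\min}(A_G)$, the only problem-specific work here is to control the extremal eigenvalue of an edge-to-edge influence or correlation matrix.

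For matchings, this spectral quantity is tied to the roots of the matching generating polynomial of $G$, equivalently the independence polynomial of $L(G)$. The classical Heilmann--Lieb theorem asserts that this polynomial is real-rooted with all roots lying in $[-2\sqrt{\Delta-1},\,2\sqrt{\Delta-1}]$; feeding this into the trickle-down and spectral/entropic stability machinery that drives the criterion yields an influence bound of order $\sqrt{\Delta-1}$ rather than the trivial $\Delta$. Substituting into the criterion then gives a modified log-Sobolev or spectral gap constant of order $1/(\sqrt{\Delta}\,m)$ for the Glauber dynamics, and combined with the standard $O(n\log n)$ bound on $\log(1/\pi_{\min})$ over matchings this produces the claimed $O(\sqrt{\Delta}\,mn\log n)$ mixing time.

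For $b$-matchings the template is identical, with the $b$-matching polynomial (still real-rooted by the classical polymatching generalization of Heilmann--Lieb) replacing the matching polynomial, except that the clean $2\sqrt{\Delta-1}$ root bound is no longer available. I would bound the relevant influence-matrix eigenvalue in two complementary ways: a local enumeration over the $(b+1)$-valued states on the at most $\Delta$ edges incident to a single vertex, giving a $\Delta^b$ bound; and an aggregate argument using that the total endpoint multiplicity across any configuration is at most $bn$, giving the alternative $bn$ bound. Taking the minimum delivers the $\min\{\Delta^b,bn\}$ factor, while the extra $b$ in $bmn\log n$ tracks the $(b+1)$-valued per-edge Glauber update together with the enlarged entropy of the $b$-matching state space.

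The main obstacle throughout is aligning the quantitative form of the Heilmann--Lieb root bound (and its $b$-matching analog) with the specific spectral quantity that the criterion actually measures, so that the sharp $\sqrt{\Delta}$ dependence emerges in the matching case rather than the $\Delta$ that a degree-only argument would produce. For $b$-matchings there is the secondary obstacle that no comparably sharp root-location bound is available, forcing the combinatorial $\min\{\Delta^b,bn\}$ factor; ensuring that the combinatorial and spectral pieces plug into the criterion with only a single such factor (rather than a product) is what keeps the final bound as stated.
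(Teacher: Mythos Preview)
Your plan has a genuine gap in how the $\sqrt{\Delta}$ factor arises. The criterion's pairwise dependency matrix $M_S$ for the hardcore model on the line graph $L(G)$ is $-A_{L(G)[V_S]}$, and the only spectral input it needs is $\lambda_{\min}(A_{L(G)})\ge -2$, an elementary fact about line graphs having nothing to do with the $2\sqrt{\Delta-1}$ Heilmann--Lieb root bound. Plugging this into Condition~\eqref{eq:weak-assump-main} at fugacity $\lambda$ yields exactly $\lambda\le 1-\delta$; in particular, at the uniform weight $\lambda=1$ one is forced to take $\delta=0$, and the criterion by itself gives a trivial Poincar\'e constant. There is no way to squeeze a $\sqrt{\Delta}$ out of $M_S$ alone, because $M_S$ simply does not see the degree.

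What the paper actually does (see \cref{cor:matching}) is combine two bounds on the spectral-stability rate $C(\theta)$: the trickle-down theorem with $\delta=0$ gives $C(\theta)\le 1/\theta$, while the separately established $O(\sqrt{\Delta})$ spectral independence for the monomer--dimer model from~\cite{chen2021optimal} gives $C(\theta)\le 2\sqrt{1+\lambda\Delta}$ uniformly. Taking the pointwise minimum and integrating $\int_0^\theta C(\eta)/(1-\eta)\,d\eta$ as in \cref{thm:mixing-bound} yields a field-dynamics Poincar\'e constant $\Omega((1-\theta)/\sqrt{\Delta})$, and then the comparison lemma (\cref{lem:limit-field}) transfers this to a Glauber gap $\Omega(1/(\sqrt{\Delta}m))$. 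The $\sqrt{\Delta}$ is thus imported from the prior spectral-independence result, not extracted from the criterion. The $b$-matching bound works the same way: the $\min\{\Delta^b,bn\}$ factor is the spectral-independence constant imported from~\cite{chen2024fast}, not anything derived from a $b$-matching polynomial root bound. Your proposal would need to explain how Heilmann--Lieb root locations translate into a bound on $\lambda_{\max}(M_S)$ or on the covariance of tilted conditionals, and it is precisely this translation that does not go through directly; the paper sidesteps it by citing the spectral-independence literature.
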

This represents the first improvement in the mixing time for matchings on general graphs since the seminal work of Jerrum and Sinclair \cite{jerrum1989approximating,sinclair1992improved}.
More broadly, we prove rapid mixing for other applications, 
including the random cluster model on matroids; 
see \cref{subsec:applications} for discussion.

\begin{remark}
We note that a concurrent work \cite{chen2025faster}, which shares a subset of authors with the current paper, establishes mixing time bounds of $\widetilde{O}_\lambda(\Delta^2 m)$ for the Jerrum-Sinclair chain and $\widetilde{O}_\lambda(\Delta^3 m)$ for Glauber dynamics, both on the general monomer-dimer model with edge weight $\lambda>0$. Their approach extends the canonical path method, which is very different from ours. 
\end{remark}



Our proof approach is based on a \textit{trickle-down theorem for linear-tilt localization schemes} established in \cite{anari2024trickle}. 
By applying it to the field dynamics, we derive new mixing results for the hardcore model and many other applications; see \cref{subsec:proof-outline} for the proof outline. 
We note that the recent work \cite{anari2024trickle} established an analogous trickle-down theorem for stochastic localization and applied it to the Ising and SK models.

\subsection{A new criterion for rapid mixing of Glauber dynamics}
\label{subsec:main-thm}

Our new criterion for rapid mixing is stated for distributions supported on down-closed set families. 

Let $\XX \subseteq \set{0,1}^n$  correspond to a non-empty set family that is \emph{downward closed} (also called \emph{lower set}, \emph{abstract simplicial complex}, \emph{independence system}, etc.), i.e., if $T \in \XX$ and $S \subseteq T$, then $S \in \XX$.
For each $S \in \XX$, define $V_S = \{i \in [n] \setminus S: S\cup \{i\} \in \XX\}$ as the set of elements that can be added to $S$ while still remaining in $\XX$.
A set $S \in \XX$ is called \emph{maximal} if there is no such $T\in\XX$ that $S\subset T$, i.e., $V_S = \emptyset$.

For each $S \in \XX$, let $\XX^S = \{T \subseteq [n] \setminus S: S \cup T \in \XX\}$ be the induced set family on the ground set $[n] \setminus S$.
If $\XX$ is downward closed, then $\XX^S$ is also downward closed.
For example, if $([n], \XX)$ forms a matroid, then $([n] \setminus S, \XX^S)$ corresponds to the matroid contraction.

Suppose $\mu$ is a distribution fully supported on $\XX$, 
i.e., $\mu(S) > 0$ for all $S \in \XX$.
For each $S \in \XX$, 
define $\mu^S$ as the conditional distribution supported on $\XX^S$ s.t.~$\mu^S(T) \propto \mu(S \cup T)$ for each $T \in \XX^S$.

The Glauber dynamics for $\mu$ is then defined as follows.
In each transition step, the chain picks an $i \in [n]$ uniformly at random and updates the current state $S_t$ according to the following rule:
\begin{itemize}
    \item Let $S' := S_t \setminus \{i\}$. If $S' \cup \{i\} \notin \XX$, then set $S_{t+1} \gets S'$.
    \item Otherwise, set  $S_{t+1} \gets S' \cup \{i\}$ with probability $\frac{\mu(S' \cup \{i\})}{\mu(S') + \mu(S' \cup \{i\})}$ and set $S_{t+1} \gets S'$ with remaining probability $\frac{\mu(S')}{\mu(S') + \mu(S' \cup \{i\})}$.
\end{itemize}
Since the state space $\XX$ is downward closed, it is straightforward to verify that the Glauber dynamics is ergodic and reversible with respect to the (unique) stationary distribution $\mu$. 

The statement of our criterion for rapid mixing still requires the following two definitions.

\begin{definition}[Marginal ratios]\label{def:marginal-ratios}
    For each non-maximal $S \in \XX$, define the vector of \emph{marginal ratios} $\*r_S \in \R^{V_S}$ at $S$ by
    \begin{align*}
        \*r_S(i) := \frac{\mu^S(\{i\})}{\mu^S(\emptyset)}
        = \frac{\mu(S \cup \{i\})}{\mu(S)}, 
        \quad \forall i \in V_S.
    \end{align*}
    Furthermore, define the \emph{maximum marginal ratio} as
        $r_{\max} = \max_{S \in \XX} \; \max_{i \in V_S} \; \*r_S(i)$.
\end{definition}
The marginal ratio of $i \in V_S$ at $S \in \XX$ is the ratio of probability masses when adding $i$ to $S$. Under the binary indicator representation, it can be interpreted as the marginal ratio of the $i$-th coordinate conditioned on all coordinates in $S$ being $1$ and all coordinate in $[n] \setminus S \setminus \{i\}$ being $0$.

\begin{definition}[Pairwise dependency matrix]
    For each non-maximal $S \in \XX$, define the \emph{pairwise dependency matrix} $M_S \in \R^{V_S \times V_S}$ at $S$ by
    \begin{align*}
        M_S(i,j) := 
        \begin{cases}
            \frac{\mu^S(\{i,j\}) \mu^S(\emptyset)}{\mu^S(\{i\}) \mu^S(\{j\})} - 1, \quad& i \neq j;\\
            0, \quad & i = j.
        \end{cases}
    \end{align*}
\end{definition}

This notion of pairwise dependency matrix captures the strength of dependency between variables in a local sense.
Specifically, these matrices have the following nice properties. 

\begin{remark}\label{remark:invariant}
    The pairwise dependency matrix $M_S$ is symmetric. Furthermore, it is invariant under any external fields (also called exponential tilt).
    For any $\*\lambda \in \R_{>0}^n$, define the distribution $\*\lambda * \mu$ as 
    \begin{align*}
        (\*\lambda * \mu) (S) \propto \mu(S) \prod_{i \in S} \lambda_i, \quad \forall S \in \XX. 
    \end{align*}
    If $\*\lambda = \lambda \*1$, we write $\lambda * \mu = \lambda \*1 * \mu$ for simplicity.
    Then, it is straightforward to verify that
    \begin{align*}
        M_S(\*\lambda * \mu) = M_S(\mu).
    \end{align*}
\end{remark}

In \cref{subsec:applications}, we will further show by examples that, in many applications, the pairwise dependency matrices are easy to calculate and well-structured.

We are now ready to state our main theorems for general distributions.

\begin{theorem}\label{thm:main-0}
    Let $\XX \subseteq \{0,1\}^n$ be a non-empty downward closed set family, 
    and suppose $\mu$ is a distribution fully supported on $\XX$.
    If the following condition holds:
    \begin{align}\label{eq:strong-assump-main}
        M_S \preceq I, \quad \text{for each non-maximal $S \in \XX$},
    \end{align}
    then the mixing time of the Glauber dynamics for $\mu$ is upper bounded by
    \begin{align*}
        T_{\mathrm{mix}}(P_\mathsc{gd}) = O\left( (1+r_{\max})n \log\log\left( \frac{1}{\mu_{\min}} \right) \right),
    \end{align*}
    where $\mu_{\min} = \min_{S \in \XX} \mu(S)$.
\end{theorem}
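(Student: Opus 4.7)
The plan is to establish a modified log-Sobolev inequality (MLSI) for the Glauber dynamics with constant $\rho = \Omega\bigl(1/((1+r_{\max})n)\bigr)$; the claimed mixing bound $T_{\mathrm{mix}} = O((1+r_{\max})n\log\log(1/\mu_{\min}))$ then follows from the standard entropy-contraction to mixing-time conversion. The strategy is to work inside the field-dynamics / stochastic-localization framework: prove entropy contraction for the field dynamics via a trickle-down argument, then transfer it to the Glauber dynamics via a comparison inequality.

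\textbf{Step 1 (Stability of the hypothesis).} The first task is to verify that $M_S \preceq I$ is preserved under every operation the localization performs. By Remark 2.4 the matrix $M_S$ is invariant under external-field tilts $\mu \mapsto \*\lambda * \mu$. For each non-maximal $S \in \XX$, the conditional measure $\mu^S$ is supported on the downward-closed family $\XX^S$, and an elementary calculation shows that its pairwise dependency matrix at any $T \in \XX^S$ coincides with $M_{S\cup T}(\mu)$. Hence both tilting and conditioning preserve the hypothesis automatically, so the local spectral bound propagates throughout the localization process.

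\textbf{Step 2 (Trickle-down for field dynamics).} Building on the recent trickle-down theorem for stochastic localization used in the Ising/SK analyses, I would prove the analogue for the negative-field localization driving the field dynamics. Recall that the field dynamics with parameter $\theta$ freezes each currently-unoccupied coordinate independently with probability $1-\theta$ and re-samples the rest under a field boosted by $1/\theta$; letting $\theta$ run from $0$ up to $1$ produces a continuous localization whose infinitesimal driving operator is controlled by the covariance of the current tilted-and-restricted measure. The pointwise bound $M_S \preceq I$, together with its stability from Step 1, plays the role of the local spectral condition in trickle-down, and induction along $\theta$ yields an entropy contraction for the field dynamics at an absolute rate.

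\textbf{Step 3 (Comparison with Glauber dynamics).} The final ingredient is a new comparison inequality transferring entropy decay from the field dynamics to the Glauber dynamics. Heuristically, one field-dynamics update resamples a block of $\Theta(\theta n)$ coordinates, so it should be worth $\Theta(\theta n)$ single-site updates; the additional factor $1+r_{\max}$ enters when bounding how concentrated a single-coordinate re-sampling can be, which is exactly what the maximum marginal ratio captures. Combining this with Step 2 gives the MLSI with the claimed constant. I expect this step to be the main obstacle: earlier comparison arguments between block-type dynamics and Glauber dynamics typically lose polynomial factors that are too costly here, and the novelty is to absorb the loss into a single $1+r_{\max}$ factor while working at the level of the modified log-Sobolev constant rather than the mere spectral gap. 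Making this precise for arbitrary downward-closed families — with no structural assumption beyond $M_S \preceq I$ — is the technically delicate heart of the proof and is, as the introduction advertises, the new comparison theorem on which the whole approach rests.
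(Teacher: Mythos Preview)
Your proposal is correct and follows essentially the same three-step route as the paper: stability of $M_S \preceq I$ under tilts and conditioning (Remark~1.4), a trickle-down theorem for the field dynamics yielding entropic stability and hence an MLSI for the field dynamics (Theorems~1.9 and~1.11), and finally the new comparison lemma (Lemma~1.13) transferring this to Glauber dynamics at the cost of exactly $(1+r_{\max})n$. One minor slip in your description: the down step of the field dynamics removes each \emph{occupied} element with probability $1-\theta$ and the up step resamples from $(1-\theta)*\mu^X$ (field scaled by $1-\theta$, not $1/\theta$), but this does not affect the outline.
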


Condition~\eqref{eq:strong-assump-main} is equivalent to that the generating polynomial of $\mu$ is strongly log-concave, see \cref{prop:log-concave}. Hence, we can state the assumption of \cref{thm:main-0} as a strong log-concavity condition.
We refer to \cref{subsec:log-concave} for more discussions.

\begin{theorem}\label{thm:main-delta}
    Let $\XX \subseteq \{0,1\}^n$ be a non-empty downward closed set family, 
    and suppose $\mu$ is a distribution fully supported on $\XX$.
    Let $\delta \in (0,1)$.
    If the following condition holds:
    \begin{align}\label{eq:weak-assump-main}
        M_S \preceq I + (1-\delta) \diag(\*r_S)^{-1}, \quad \text{for each non-maximal $S \in \XX$},
    \end{align}
    then the mixing time of the Glauber dynamics for $\mu$ is upper bounded by
    \begin{align*}
        T_{\mathrm{mix}}(P_\mathsc{gd}) = O\left( (1+r_{\max}) \frac{n}{\delta} \log\left( \frac{1}{\mu_{\min}} \right) \right),
    \end{align*}
    where $\mu_{\min} = \min_{S \in \XX} \mu(S)$.
\end{theorem}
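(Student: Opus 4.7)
The plan is to follow the three-step strategy advertised in the introduction: (i) introduce a field dynamics at a carefully chosen parameter $\theta \in (0,1)$, (ii) establish a spectral contraction for this field dynamics by a trickle-down induction on rank, and (iii) transfer the mixing bound to the Glauber dynamics via a Dirichlet-form comparison.

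First I would define the field dynamics $P_\theta$ for $\mu$: from a state $S \in \XX$, retain each element of $S$ independently with probability $\theta$ to obtain $R \subseteq S$, then redraw $S_{t+1} \supseteq R$ from the $\theta^{-1}$-tilted conditional distribution $(\theta^{-1}\*1)*\mu$ restricted to sets containing $R$. This chain is reversible for $\mu$. A key point, using the external-field invariance from \cref{remark:invariant}, is that all pairwise dependency matrices $M_S$ are preserved when we pass to any external-field tilt of $\mu$, so the local hypothesis~\eqref{eq:weak-assump-main} persists at every tilted scale produced by the dynamics. The parameter $\theta$ will be tuned of order $\Theta(1/(1+r_{\max}))$ so that the effective maximum marginal ratio after tilting becomes $O(1)$ and the diagonal slack $(1-\delta)\diag(\*r_S)^{-1}$ in~\eqref{eq:weak-assump-main} can be absorbed into the identity.

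Second, I would run a trickle-down induction on the simplicial complex underlying $\XX$, starting from maximal sets (where there is nothing to prove) and descending. The inductive step combines the local bound $M_S \preceq I + (1-\delta)\diag(\*r_S)^{-1}$ at $S$ with a spectral/entropic stability estimate that controls how the one-step operator norm at $S$ is assembled from the inductively-guaranteed contractions on the links $\mu^{S \cup \{i\}}$. The output should be a Poincar\'e inequality for $P_\theta$ with constant $\Omega(\theta\delta)$, which is the negative-field analogue of the trickle-down theorem for stochastic localization of \cite{anari2024trickle}. Third, I would compare $P_\theta$ with $P_\mathsc{gd}$: a single step of $P_\theta$ can be simulated by $O(n/\theta)$ Glauber updates, and a standard Dirichlet-form argument gives $\EE_{P_\mathsc{gd}}(f,f) \ge \Omega\!\left(\theta/(n(1+r_{\max}))\right)\cdot \EE_{P_\theta}(f,f)$, where the $1+r_{\max}$ factor accounts for the fact that each single-site update can change $\mu$-weight by at most this factor. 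Combining with Step 2 yields $\-{Gap}(P_\mathsc{gd}) \ge \Omega(\delta/(n(1+r_{\max})))$, and the standard spectral-gap-to-mixing-time conversion then produces $T_{\mathrm{mix}}(P_\mathsc{gd}) = O((1+r_{\max})(n/\delta)\log(1/\mu_{\min}))$, matching the claim.

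The main obstacle I anticipate is the stability estimate at the heart of the trickle-down induction. The hypothesis~\eqref{eq:weak-assump-main} is strictly weaker than the strong log-concavity condition $M_S \preceq I$ of \cref{thm:main-0}: the extra slack $(1-\delta)\diag(\*r_S)^{-1}$ can be arbitrarily large at coordinates $i$ where $\*r_S(i)$ is small, and these are exactly the coordinates on which a naive one-step analysis fails. The delicate part is to show that the $\theta$-dropout of the field dynamics shrinks the marginals at these coordinates just enough so that their contribution to the relevant operator norm drops below $1$, effectively reducing the assumption to that of \cref{thm:main-0} after tilting, while ensuring that the comparison to Glauber does not spend more than $\poly(1+r_{\max})$ in the process.
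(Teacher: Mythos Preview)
Your three-step outline (field dynamics $\to$ trickle-down $\to$ comparison with Glauber) matches the paper's architecture, but the execution diverges in two places that matter.

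\textbf{Trickle-down.} You describe an induction on rank, ``starting from maximal sets and descending,'' in the style of Oppenheim's trickle-down on simplicial complexes. The paper does \emph{not} induct on rank. Its trickle-down theorem (\cref{thm:cor-bound}) runs along the continuous field-dynamics parameter: one writes the trickle-down equation $\Sigma = \E{\mathrm{Cov}(Y_1 \mid Y_h)} + h\,\Sigma\Pi^{-1}\Sigma + o(h)$ for the tilted measure $(1-\theta)*\mu^S$, derives a differential inequality for $f(\lambda) := \sup_S \|\Pi^{-1}\mathrm{Cov}(\lambda*\mu^S)\|_2$, and solves it (Gr\"onwall-style) to obtain $\mathrm{Cov}((1-\theta)*\mu^S) \preceq \frac{1}{1-(1-\delta)(1-\theta)}\,\diag(\*m((1-\theta)*\mu^S))$. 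The hypothesis \eqref{eq:weak-assump-main} enters only as the boundary condition $D^+f(0)\le 1-\delta$ at $\lambda=0$. Your rank-based description is vague precisely at the step where the inhomogeneous slack $(1-\delta)\diag(\*r_S)^{-1}$ has to be handled, and it is not clear a rank induction assembles these local bounds correctly.

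\textbf{Comparison.} You fix $\theta$ of order $1/(1+r_{\max})$ and simulate. The paper instead proves a new comparison lemma (\cref{lem:limit-field}) by taking the limit $\epsilon = 1-\theta \to 0$: it observes that $(P^{\mathrm{FD}}_{1\leftrightarrow 1-\epsilon}-I)(S,T) \le (1+r_{\max})n\epsilon\,(P_\mathsc{gd}-I)(S,T)+o(\epsilon)$, so $\mathcal{E}_{P_\mathsc{gd}}(f,f) \ge \frac{1}{(1+r_{\max})n\epsilon}\mathcal{E}_{P^{\mathrm{FD}}_{1\leftrightarrow 1-\epsilon}}(f,f) + o(1)$, and then sends $\epsilon\to 0$ using the explicit Poincar\'e constant $\lambda(P^{\mathrm{FD}}_{1\leftrightarrow 1-\epsilon}) = \frac{\epsilon\delta}{1-\epsilon(1-\delta)}$ from the trickle-down step, whose limit ratio is $\delta^\star = \delta$. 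This yields exactly $\lambda(P_\mathsc{gd}) \ge \delta/((1+r_{\max})n)$. Your fixed-$\theta$ bookkeeping, as written, combines $\Omega(\theta\delta)$ with $\Omega(\theta/(n(1+r_{\max})))$ and would produce $\Omega(\theta^2\delta/(n(1+r_{\max})))$, an extra $(1+r_{\max})^2$ loss. (Also, the up step of the field dynamics samples from $(1-\theta)*\mu^R$, not a $\theta^{-1}$-tilt.) The limit comparison is the cleaner and sharper route; tuning $\theta$ is unnecessary.
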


In \cref{thm:main-0}, the mixing time bound is derived using the modified log-Sobolev inequality, resulting in a double-logarithmic dependency on $1/\mu_{\min}$.
In \cref{thm:main-delta}, the mixing time is derived using the Poincar\'{e} inequality (i.e., the spectral gap), leading to a logarithmic dependency on $1/\mu_{\min}$.
We also remark that since both Conditions~\eqref{eq:strong-assump-main} and \eqref{eq:weak-assump-main} considers conditioning of any non-maximal $S \in \XX$, we can obtain from \cref{thm:main-0,thm:main-delta} efficient approximate counting algorithms for the associated partition functions in many applications via self-reducibility.

From the perspective of simplicial complexes, previous results have extensively studied the rapid mixing of down-up walks for distributions supported over maximal faces of a pure simplicial complex \cite{ALOVii,Opp18,alev2020improved}. 
Meanwhile, previous works studied the Glauber dynamics by transferring the target distribution on a product space (e.g., $\{0,1\}^n$) into a distribution over the maximal faces of a pure simplicial complex by considering all coordinate-label pairs; see \cite{ALO2021spectral}.
In contrast, our criterion applies to distributions over \emph{all} faces of the simplicial complex, not limited to the maximal faces, and works for non-pure simplicial complexes as well, where the standard down-up walk is not applicable. 
\subsection{Examples and applications}
\label{subsec:applications}

We provide several examples that serve as applications of \cref{thm:main-0,thm:main-delta}, highlighting the universality of these theorems.

\begin{example}[Product distribution]
    As a toy example, let $\*p =(p_1, \dots, p_n)\in (0,1)^n$, and consider the product distribution with mean vector $\*p$, defined as
     \begin{align*}
        \mu(S) = \prod_{i \in S} p_i \prod_{j \in [n] \setminus S} (1-p_j), \quad \forall S \subseteq [n].
    \end{align*}
    In this case, we have $M_S=0$ for any proper subset $S$ of $[n]$, so Condition~\eqref{eq:strong-assump-main} holds trivially.
    By \cref{thm:main-0}, the mixing time of the Glauber dynamics for the product distribution $\mu$ is $O_\eps(n \log n)$, assuming that each $p_i \in (\eps, 1-\eps)$.
    The dependence on $\eps$ arises from the bound on $1/\mu_{\min}$, which is standard in transferring from modified log-Sobolev inequalities to mixing time bounds.
\end{example}

\begin{example}[Independent sets of matroids]
    Let $\MM = ([n], \II)$ be a matroid  with independent sets $\II\subseteq \set{0,1}^n$. Consider the distribution $\mu$ over independent sets in $\II$ weighted by $\lambda > 0$, given by
    \begin{align*}
        \mu(S) \propto \lambda^{|S|}, \quad \forall S \in \II.
    \end{align*}
    For any non-maximal $S \in \II$ and distinct $i,j \in V_S$, 
    where $V_S = \{i \in [n] \setminus S: S\cup \{i\} \in \II\}$, 
    we have
    \begin{align*}
        M_S(i,j) = \frac{\mu^S(\{i,j\}) \mu^S(\emptyset)}{\mu^S(\{i\}) \mu^S(\{j\})} - 1 
        = - \ind \left\{ \{i,j\} \notin \II^S \right\}.
    \end{align*}
    If we define the contracted matroid $\MM/S = (V_S, \II^S)$ and its rank-$2$ truncation $(\MM/S)_{\leq 2}$ (including only independent sets of size at most 2), then $-M_S$ corresponds to the adjacency matrix of the dependency graph for $(\MM/S)_{\leq 2}$.
    
    A standard fact of matroids is that the dependency graph of any rank-$2$ matroid without trivial elements is a disjoint union of cliques. Thus, there exists a partition of the ground set such that two distinct elements form an independent set iff they belong to different clusters. Hence, we have

    \begin{align*}
        M_S = I - 
        \begin{pmatrix}
            J_1 & & & \\
            & J_2 & & \\
            & & \ddots & \\
            & & & J_k
        \end{pmatrix}
        \preceq I,
    \end{align*}
    where $J_1,\dots,J_k$ are all-one matrices (of possibly distinct sizes), corresponding to the partition of the ground set.
    Applying \cref{thm:main-0}, we conclude that the mixing time of the Glauber dynamics for $\mu$ is $O_\lambda(n \log n)$. This resolves an open question from~\cite[Remark 76]{liu2023spectral}; see \cref{subsec:RC} for applications to the random cluster model on matroids.
\end{example}

\begin{example}[Hardcore model]\label{example:hardcore}
    Let $G=(V,E)$ be a graph, and let $\mu$  be the Gibbs distribution for the hardcore model on $G$ with fugacity $\lambda > 0$, given by
    \begin{align*}
        \mu(S) \propto \lambda^{|S|}, \quad \forall S \in \II(G).
    \end{align*}
    For any non-maximal independent set $S \in \II$, define $V_S = V \setminus (S \cup N(S))$ as the set of vertices that can be added to $S$ to remain as an independent set ($N(S)$ contains neighbors of vertices in $S$). 
    Let $H = G[V_S]$ be the subgraph induced on~$V_S$.
    For any distinct $u,v \in V_S$, we have
    \begin{align*}
        M_S(u,v) = \frac{\mu^S(\{u,v\}) \mu^S(\emptyset)}{\mu^S(\{u\}) \mu^S(\{v\})} - 1 
        = - \ind \left\{ \{u,v\} \in E(H) \right\}.
    \end{align*}
    Thus, $M_S = -A_H$, where $A_H$ is the adjacency matrix of $H$.
    Moreover, by definition, $\*r_S = \lambda \*1$.
    
    Condition~\eqref{eq:weak-assump-main} for $S$ is then equivalent to
    \begin{align*}
        - A_H \preceq \left( 1 + \frac{1-\delta}{\lambda} \right) I
        \quad \Longleftrightarrow \quad
        - \lambda_{\min}(A_H) \le 1 + \frac{1-\delta}{\lambda}
        \quad \Longleftrightarrow \quad
        \lambda \le \frac{1-\delta}{- \lambda_{\min}(A_H) - 1},
    \end{align*}
    where $\lambda_{\min}(\cdot)$ denotes the smallest eigenvalue of a real symmetric matrix.
    Since $\lambda_{\min}(A_H) \ge \lambda_{\min}(A_G)$ for any induced subgraph $H$, Condition~\eqref{eq:weak-assump-main} for all non-maximal $S$ is ensured by the following condition:
    \begin{align*}
        \lambda \le \frac{1-\delta}{- \lambda_{\min}(A_G) - 1}.
    \end{align*}
    This proves \cref{thm:HC-main-lambda-min}. For specific graph families, we obtain the following corollaries:
    \begin{enumerate}
        \item If $G$ has maximum degree $\Delta$, then $\lambda_{\min}(A_G) \ge -\Delta$. 
        Hence, if $\lambda \le \frac{1-\delta}{\Delta-1}$, then by \cref{thm:main-delta}, the Glauber dynamics for $\mu$ has mixing time $O_\lambda(n^2/\delta)$.
        Notably, the condition $\lambda \le \frac{1-\delta}{\Delta-1}$ recovers the Dobrushin's condition for the hardcore model.

        \item If $G$ is a random $\Delta$-regular graph, then with high probability, $- \lambda_{\min}(A_G) = O(\sqrt{\Delta})$. 
        Hence, if $\lambda = O(1/\sqrt{\Delta})$,
        then by \cref{thm:main-delta}, the Glauber dynamics for $\mu$ has mixing time $O_\lambda(n^2/\delta)$.
        This proves \cref{thm:HC-main}. 
        We refer to \cref{subsec:anti-2-spin} for further applications to antiferromagnetic two-spin systems on random regular graphs.
    \end{enumerate}
\end{example}

\begin{example}[Matchings]
    Let $G=(V,E)$ be a \emph{simple} graph, and let $\mu$ be the Gibbs distribution  for the monomer-dimer model of matchings on $G$ with fugacity $\lambda > 0$, given by
    \begin{align*}
        \mu(S) \propto \lambda^{|S|}, \quad \forall S \in \MM(G),
    \end{align*}
    where $\MM(G)$ is the set of all matchings in $G$.
    As matchings correspond to independent sets in the line graph $L(G)$, we deduce from \Cref{example:hardcore} that Condition~\eqref{eq:weak-assump-main} holds for every non-maximal matching $S$ if
    \begin{align*}
        \lambda \le 1-\delta \le \frac{1-\delta}{- \lambda_{\min}(A_{L(G)}) - 1},
    \end{align*}
    where $A_{L(G)}$ is the adjacency matrix of the line graph $L(G)$, and the last inequality follows from a standard fact in algebraic graph theory that the minimum eigenvalue of the line graph of any simple graph satisfies $\lambda_{\min}(A_{L(G)})\ge -2$.
    Therefore, by \cref{thm:main-delta}, if $\lambda \le 1-\delta$, the mixing time of the Glauber dynamics for $\mu$ is at most $O_\lambda( \frac{mn}{\delta} \log n)$, where $n = |V|$ and $m = |E|$.

    We make two additional remarks, which prove \Cref{thm:matching-main}:
    \begin{enumerate}
    \item In \cref{thm:main-delta}, the mixing time bound diverges as $\delta \to 0 $. However, this can be fixed by replacing $1/\delta$ with an alternative bound that remains $O(\sqrt{\Delta})$. In particular, for uniform matchings $\lambda = 1$, this can establish a mixing time of $O( \sqrt{\Delta} mn \log n)$.
    \item The same approach extends naturally to $b$-matchings, where each vertex is incident to at most $b$ edges. For uniform $b$-matchings,
    similar approach gives a mixing time of $O(\min\{\Delta^b, bn\} \cdot bmn \log n)$. See \cref{subsec:Holant} for further applications to Holant problems.
\end{enumerate}
\end{example}

\medskip

We also give new mixing results for the Determinantal Point Process (DPP). 
This is a class of distributions defined as
\begin{align*}
    \mu(S) \propto \left( \mathrm{det}(L_{S,S}) \right)^\alpha, \quad \forall S \subseteq [n],
\end{align*}
where $L \in \R^{n \times n}$ is a symmetric and positive semidefinite matrix, and $\alpha \in [0,1]$.
We refer to \cref{subsec:DPP} for backgrounds on DPP and our results.

\subsection{Proof outline}
\label{subsec:proof-outline}

\begin{figure}[!htbp]
    \centering
    \begin{tikzpicture}[
        block/.style = {rectangle, rounded corners, minimum width=3cm, minimum height=1cm, text centered, text width=4cm, draw=black, fill=orange!30},
        arrow/.style = {thick,->,>=stealth}
        ]
    
        \node[block] (b1) at (0,0) {Conditions \eqref{eq:strong-assump-main}/\eqref{eq:weak-assump-main}};
        \node[block] (b2) at (11,0) {Entropic/Spectral Stability};
        \node[block] (b3) at (11,-2.5) {Entropy/Variance\\Contraction for\\Field Dynamics};
        \node[block] (b4) at (0,-2.5) {Entropy/Variance\\Contraction for\\Glauber Dynamics};

        \draw[arrow] (b1) -- (b2)
        node[midway, above] {Trickle-down for Field Dynamics}
        node[midway, below] {\cref{thm:cor-bound}};
        \draw[arrow] (b2) -- (b3)
        node[midway, left] {\cite{chen2022localization,chen2024rapid}}
        node[midway, right] {\cref{thm:mixing-bound}};
        \draw[arrow] (b3) -- (b4)
        node[midway, above] {New Comparison Lemma}
        node[midway, below] {\cref{lem:limit-field}};
        
    \end{tikzpicture}

    \caption{Proof Outline}
    \label{fig}
\end{figure}


The key components of our proof approach is summarized in \cref{fig}.
Our main technical contribution is the \emph{trickle-down theorem for field dynamics}.
The trickle-down equation for field dynamics establishes an upper bound on the maximum eigenvalue of the correlation matrix of the target distribution, which is an equivalent formulation of the spectral stability property. This is achieved by analyzing the pairwise dependency matrix and applying Conditions \eqref{eq:strong-assump-main}/\eqref{eq:weak-assump-main}.
With the eigenvalue bounds, we deduce rapid mixing of the field dynamics from spectral/entropic stability by recent works \cite{chen2022localization,chen2024rapid}. 
Finally, we establish a new comparison lemma that, roughly speaking, shows that the field dynamics converges to the continuous-time Glauber dynamics in an appropriate sense, thereby allowing us to bound the mixing time of Glauber dynamics.

Let $\XX \subseteq \set{0,1}^n$ be a non-empty downward closed set family over the ground set $[n]$, and let $\mu$ be a distribution fully supported on $\XX$.
We use $\*m(\mu) := \E[S \sim \mu]{\*1_S}$ to denote the mean vector of distribution $\mu$, and let $\Cov(\mu) := \E[S\sim \mu]{(\*1_S - \*m(\mu))(\*1_S - \*m(\mu))^\intercal}$ denote the covariance matrix of $\mu$, where $\*1_S \in \set{0,1}^n$ is the indicator vector for the set $S$. The mean vector $\*m(\mu^{S}) \in \mathbb{R}^{[n] \setminus S}$ and covariance matrix $\mathrm{Cov}(\mu^{S}) \in \mathbb{R}^{([n] \setminus S) \times ([n] \setminus S)}$ for conditional distribution $\mu^S$ are defined accordingly.


\begin{theorem}[Trickle-down theorem for field dynamics]\label{thm:cor-bound}
    Let $\XX \subseteq \set{0,1}^n$ be a non-empty downward closed set family, and let $\mu$ be a distribution fully supported on $\XX$. 
    Let $\delta \in [0,1]$.
    If for each non-maximal $S \in \XX$ it holds
    \begin{align}\label{eq:weak-assump-trickle}
        M_S \preceq I + (1-\delta)\diag(\*r_S)^{-1}, 
    \end{align}
    then for any non-maximal $S \in \XX$ and $\theta \in (0,1)$, it holds
    \begin{align}\label{eq:main-cov}
        \Cov\left( (1-\theta) * \mu^S \right) \preceq \frac{1}{1-(1-\delta)(1-\theta)} \diag\left( \*m\left( (1-\theta) * \mu^S \right) \right).
    \end{align}
\end{theorem}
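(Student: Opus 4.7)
The plan is to prove the covariance bound by combining induction on $|V_S|$ with a differential argument in $\theta$ for the key scalar
\begin{align*}
\Phi_{uv}(\theta) := \frac{\alpha_{S \cup \{u,v\}}(\theta)\,\alpha_S(\theta)}{\alpha_{S \cup \{u\}}(\theta)\,\alpha_{S \cup \{v\}}(\theta)},
\qquad
\alpha_{S'}(\theta) := \sum_{T \in \XX^{S'}} \mu(S' \cup T)(1-\theta)^{|T|}.
\end{align*}
A direct calculation with these unnormalized partition functions gives the identity
\begin{align*}
\Cov((1-\theta)*\mu^S)_{uv} = \*m_u(\theta)\,\*m_v(\theta) \left[\Phi_{uv}(\theta) - 1\right] \quad \text{for } u \neq v,
\end{align*}
where I write $\*m(\theta) := \*m((1-\theta)*\mu^S)$ and $C(\theta) := \Cov((1-\theta)*\mu^S)$. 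The quantity $\Phi_{uv}(\theta)$ interpolates between the global pairwise correlation (at $\theta = 0$) and the local value $\Phi_{uv}(1) = 1 + M_S(u,v)$ (at $\theta \to 1$), which is the boundary linking the differential argument to the hypothesis on $M_S$.

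For the base case $|V_S| \leq 1$, the covariance matrix is at most $1 \times 1$ with $C_{ii}(\theta) = \*m_i(\theta)(1 - \*m_i(\theta)) \leq \*m_i(\theta)$, and since $g(\theta) := 1/(1-(1-\delta)(1-\theta)) \geq 1$ the conclusion holds. For the inductive step, assume the claim for all $S' \in \XX$ with $|V_{S'}| < |V_S|$. Differentiating $\log \Phi_{uv}$ in $\theta$ yields the clean expression
\begin{align*}
\frac{d \log \Phi_{uv}(\theta)}{d\theta} = -\frac{1}{1-\theta}\left[\bar N^{S \cup \{u,v\}}(\theta) + \bar N^S(\theta) - \bar N^{S \cup \{u\}}(\theta) - \bar N^{S \cup \{v\}}(\theta)\right],
\end{align*}
where $\bar N^{S'}(\theta) := \E[T \sim (1-\theta)*\mu^{S'}]{|T|}$. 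By the inductive hypothesis applied to the links $\mu^{S \cup \{u\}}$, $\mu^{S \cup \{v\}}$, and $\mu^{S \cup \{u,v\}}$, the derivative $\frac{d \bar N^{S'}}{d\theta} = -\frac{1}{1-\theta}(\Cov((1-\theta)*\mu^{S'})\cdot\mathbf{1})_{\bullet}$ is controlled by $g(\theta)$ times sums of marginals. Integrating this ODE from the boundary $\theta = 1$ (where $\Phi_{uv}(1) - 1 = M_S(u,v)$ is controlled by the assumption $M_S \preceq I + (1-\delta)\diag(\*r_S)^{-1}$) downward to general $\theta$ yields the desired bound, with the specific factor $g(\theta)$ emerging naturally as the solution to a first-order ODE whose contraction parameter is exactly $(1-\delta)(1-\theta)$.

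The main obstacle is lifting the entrywise control of $\Phi_{uv}(\theta)$ to an operator-level (PSD-cone) inequality for the full matrix. Individual entries are bounded via the ODE argument together with the inductive covariance bounds on links, but aggregating these into a uniform spectral bound on $g(\theta) \diag(\*m(\theta)) - C(\theta)$ requires extra care---probably an extremal-eigenvector choice of the test direction, combined with an application of the inductive bound at that chosen direction via law-of-total-covariance-type identities. A secondary subtlety is confirming that the assumption $M_{S'} \preceq I + (1-\delta)\diag(\*r_{S'})^{-1}$ propagates to all distributions arising along the induction; this follows from the assumption being stated for every non-maximal $S' \in \XX$ together with the external-field invariance of $M_{S'}$ noted in Remark~\ref{remark:invariant}, which in particular shows that the $(1-\theta)$-tilt does not change the relevant pairwise dependency matrices.
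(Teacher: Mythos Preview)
Your proposal contains correct ingredients (the identity $\Cov_{uv} = m_u m_v(\Phi_{uv}-1)$, the boundary value $\Phi_{uv}(1)=1+M_S(u,v)$, and the observation that the $(1-\theta)$-tilt leaves $M_{S'}$ unchanged), but the step you yourself flag as ``the main obstacle'' is in fact a genuine gap that your outline does not close. Entrywise control of $\Phi_{uv}(\theta)$, however tight, does not yield the PSD inequality $C(\theta)\preceq g(\theta)\diag(\*m(\theta))$: the hypothesis $M_S\preceq I+(1-\delta)\diag(\*r_S)^{-1}$ is an operator bound and permits individual entries $M_S(u,v)$ anywhere in $[-1,1]$, so no scalar ODE for a single $\Phi_{uv}$ can see it. The ``extremal eigenvector'' fix you suggest would require you to track a \emph{$\theta$-dependent} top eigenvector and show that the quadratic form in that direction satisfies a closed differential inequality; but the derivative of that quadratic form again involves the full matrix $C(\theta)$ (through $\Sigma\Pi^{-1}\Sigma$-type terms), not just the chosen direction, so the argument does not close without already having the operator bound you are trying to prove. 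The auxiliary induction on $|V_S|$ does not help here either: the ODE you write for $\log\Phi_{uv}$ involves differences of $\bar N^{S'}$ across four different pinnings, and the inductive covariance bound on each link controls $\frac{d}{d\theta}\bar N^{S'}$ separately, not the second-order discrete difference you need.

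The paper's proof sidesteps this obstacle entirely by working at the operator level from the start. It defines the single scalar
\[
f(\lambda)\;:=\;\sup_{S\in\XX}\bigl\|\diag(\*m(\lambda*\mu^S))^{-1}\Cov(\lambda*\mu^S)\bigr\|_2,
\]
and uses the matrix trickle-down equation (law of total covariance along the continuous-time down process, \Cref{prop:trickle-down-eq}): $\Sigma=\E[\Cov(Y_1\mid Y_h)]+h\,\Sigma\Pi^{-1}\Sigma+o(h)$. Taking the supremum over $S$ makes the inequality self-referential---the conditional covariance $\Cov(Y_1\mid Y_h)$ is again bounded by $f((1-h)\lambda)$ times the appropriate diagonal---so conjugating by $\Pi^{-1/2}$ and taking operator norms gives directly the scalar differential inequality $D^-f(\lambda)\le (f(\lambda)^2-f(\lambda))/\lambda$. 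The assumption on $M_S$ enters only as the boundary condition $D^+f(0)\le 1-\delta$ (a short computation, \Cref{prop:ode-boundary}), and solving the ODE yields $f(\lambda)\le 1/(1-(1-\delta)\lambda)$. There is no induction on $|V_S|$, and no need to pass from entries to spectra: the operator norm is the primitive quantity throughout.
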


Instead of directly proving the mixing time bounds for the Glauber dynamics, we consider a recently introduced Markov chain called \emph{field dynamics} \cite{CFYZ2024rapid}. 

\begin{definition}\label{def:field-dynamics}
    Let $\XX \subseteq \{0,1\}^n$ be a non-empty downward closed set family, and let $\mu$ be a distribution fully supported on $\XX$.
    Let $\theta \in (0,1)$ be a fixed parameter. 
    In each transition step, the \emph{field dynamics} $P^{\mathrm{FD}}_{1 \leftrightarrow \theta} = P_{1 \to \theta} Q_{\theta \to 1}$ updates the current state $X \in \XX$ to a new $X'$ according to the following rules:
    \begin{enumerate}
    \item (Down step $P_{1 \to \theta}$) Independently remove each element $e \in X$ with probability $1-\theta$;
    \item (Up step $Q_{\theta \to 1}$) Sample $X'$ from the distribution $(1-\theta) * \mu^X$.
    \end{enumerate}
\end{definition}

\begin{theorem}[\cite{chen2022localization,chen2024rapid}]\label{thm:mixing-bound}
     Let $\XX \subseteq \{0,1\}^n$ be a non-empty downward closed set family, and let $\mu$ be a distribution fully supported on $\XX$. 
     \begin{enumerate}
     \item If there exists a rate function $C:(0,1) \to \mathbb{R}_{>0}$ such that for any non-maximal $S \in \XX$ and $\theta \in (0,1)$,
     \begin{align}\label{eq:require-spectral}
        \mathrm{Cov}\tp{(1-\theta) * \mu^S} \preceq C(\theta) \cdot \diag\tp{\*m((1-\theta) * \mu^S)},
     \end{align}
     then the Poincar\'{e} inequality for field dynamics $P^{\mathrm{FD}}_{1 \leftrightarrow \theta}$ holds with constant
     \begin{align*}
        \lambda(P^{\mathrm{FD}}_{1 \leftrightarrow \theta}) \ge \exp\tp{-\int_0^\theta \frac{C(\eta)}{1-\eta} \dif \eta}.
     \end{align*}
     \item 
     If for all $\*\lambda \in \mathbb{R}^n_{>0}$ and non-maximal $S \in \XX$, it holds
     \begin{align}\label{eq:require-entropy}
     \mathrm{Cov}(\*\lambda * \mu^S) \preceq \diag(\*m(\*\lambda * \mu^S)),
     \end{align}
     then the modified log-Sobolev inequality for field dynamics $P^{\mathrm{FD}}_{1 \leftrightarrow \theta}$ holds with constant
     \begin{align*}
        \rho_0(P^{\mathrm{FD}}_{1 \leftrightarrow \theta}) \ge \exp\tp{-\int_0^\theta \frac{1}{1-\eta} \dif \eta} = 1-\theta.
     \end{align*}
     \end{enumerate}
\end{theorem}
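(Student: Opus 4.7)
The plan is to interpret the field dynamics $P^{\mathrm{FD}}_{1\leftrightarrow\theta}$ as a heat-bath ``down-up'' walk, and then apply the localization-scheme framework of \cite{chen2022localization,chen2024rapid} to reduce the functional inequality for the walk to a differential inequality along a continuously shrinking external field. The first step is to verify reversibility: if $X \sim \mu$ and $Y$ is obtained from $X$ by keeping each element independently with probability $\theta$, then Bayes' rule identifies $X \mid Y$ as $(1-\theta) * \mu^Y$, which is exactly what the up step samples. Hence $\mu$ is stationary, and the Dirichlet form has the closed form
\begin{align*}
\EE_{\mathrm{FD}}(f,f) \;=\; \E[Y]{\Var[(1-\theta)*\mu^Y]{f}}.
\end{align*}
By the law of total variance (resp.\ the chain rule for entropy), the Poincar\'e (resp.\ MLSI) constant is controlled by how tightly $Y$ predicts $\E[]{f\mid Y}$ against the quantity above.

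For part (1), the plan is to interpolate between $\mu$ (at $\eta=0$) and the posterior $(1-\theta)*\mu^{Y_\theta}$ (at $\eta=\theta$) using the one-parameter localization where $Y_\eta$ is obtained from $X \sim \mu$ by dropping each element independently with probability $\eta$. Differentiating the functional $\eta \mapsto \E[Y_\eta]{\Var[(1-\eta)*\mu^{Y_\eta}]{f}}$ using the infinitesimal generator of the thinning process should produce, after a short computation, a quadratic form in the covariance matrices $\Cov((1-\eta)*\mu^{S})$ weighted by the marginal ratios $\*r_S$. Under the hypothesis $\Cov((1-\eta)*\mu^{S}) \preceq C(\eta)\,\diag(\*m((1-\eta)*\mu^S))$, this yields the differential inequality
\begin{align*}
-\frac{d}{d\eta}\,\E[Y_\eta]{\Var[(1-\eta)*\mu^{Y_\eta}]{f}} \;\le\; \frac{C(\eta)}{1-\eta}\,\E[Y_\eta]{\Var[(1-\eta)*\mu^{Y_\eta}]{f}},
\end{align*}
and Gr\"onwall's lemma integrated from $0$ to $\theta$ gives $\Var[\mu]{f} \le \exp\bigl(\int_0^\theta \tfrac{C(\eta)}{1-\eta}\,d\eta\bigr)\,\EE_{\mathrm{FD}}(f,f)$, which is exactly \eqref{eq:require-spectral}.

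For part (2), run the same localization but with relative entropy in place of variance. The derivative $-\frac{d}{d\eta}\,\E[Y_\eta]{\Ent[(1-\eta)*\mu^{Y_\eta}]{g}}$ expresses as a quadratic form in $\Cov(\*\lambda * \mu^S)$, where $\*\lambda$ is an external field introduced by a change of measure that tilts $\mu$ along the entropic gradient of $g$; this tilt is precisely the reason the hypothesis must be quantified over \emph{all} $\*\lambda \in \R_{>0}^n$. By \cref{remark:invariant}, the pairwise dependency matrix (and hence the local structure of the covariance bound) is stable under such tilts. Under the universal bound $\Cov(\*\lambda*\mu^S) \preceq \diag(\*m(\*\lambda*\mu^S))$ the effective rate function is the constant $1$, and $\int_0^\theta \tfrac{d\eta}{1-\eta} = \log\tfrac{1}{1-\theta}$ gives $\rho_0 \ge 1-\theta$ as in \eqref{eq:require-entropy}.

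The main obstacle is establishing the two differentiation identities that turn $-\frac{d}{d\eta}\,\E[Y_\eta]{\Var[\cdot]{f}}$ and $-\frac{d}{d\eta}\,\E[Y_\eta]{\Ent[\cdot]{g}}$ into quadratic forms in the covariance of the tilted conditionals. The variance version is a direct semigroup computation exploiting the identity $\mu^S(T) \propto \mu(S \cup T)$ for downward-closed $\XX$, but the entropy version requires an additional change-of-measure step that introduces the arbitrary external field $\*\lambda$; this is the source of the asymmetry between the spectral hypothesis in part~(1) and the universally-quantified entropic hypothesis in part~(2).
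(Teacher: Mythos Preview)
Your proposal is essentially correct and follows the same route as the paper. The paper packages the argument slightly differently: it first introduces abstract notions of \emph{spectral stability} and \emph{entropic stability} with respect to field dynamics (Definitions~3.1 and~3.5), cites the black-box Lemmas~3.2 and~3.6 (from \cite{chen2022localization,chen2024rapid}) that spectral/entropic stability imply the Poincar\'e/MLSI bounds via exactly the Gr\"onwall integration you describe, and then shows (Propositions~3.3 and~3.7) that the covariance hypotheses in the theorem are equivalent to (part~1) or sufficient for (part~2) these stability notions. Your direct differential-inequality description is precisely the content of those lemmas, so the two approaches coincide.

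One imprecision worth flagging in your part~(2) sketch: the entropy derivative is \emph{not} literally a quadratic form in $\Cov(\*\lambda*\mu^S)$. As computed in the paper's appendix (proof of Proposition~3.7), the infinitesimal entropy production along the localization is the marginal Bregman divergence
\[
\frac{1}{1-\eta}\sum_{i} \Bigl(q_i \log\frac{q_i}{p_i} - (q_i - p_i)\Bigr),
\]
with $\*p=\*m((1-\eta)*\mu^S)$ and $\*q=\*m(\nu)$. Bounding this by $C\cdot D_{\mathrm{KL}}(\nu\,\|\,(1-\eta)*\mu^S)$ is done via a concavity argument: the difference of the two sides, viewed as a function of $\*q$, has Hessian $\diag(\*q)^{-1} - C\cdot \Cov(\*\lambda*\mu^S)^{-1}$ (by Legendre duality, Lemma~A.1), where $\*\lambda$ is the tilt realizing $\*m(\*\lambda*\mu^S)=\*q$. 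The universal covariance bound makes this function concave on its whole domain; combined with the vanishing value and gradient at $\*q=\*p$, this yields the inequality. So the covariance enters at second order, but through a global concavity argument rather than a single quadratic form. Also, your appeal to \cref{remark:invariant} is a red herring here: the pairwise dependency matrix $M_S$ plays no role in this theorem.
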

\begin{remark}
    In our work, we establish an equivalent formulation of spectral stability—an upper bound on the correlation matrix—which is more convenient for analysis within localization schemes.
    We note that the conditions in~\eqref{eq:require-spectral} and~\eqref{eq:require-entropy} are indeed sufficient for spectral stability and entropic stability. For a detailed discussion, we refer the reader to~\Cref{subsec:stability}.
\end{remark}

The final ingredient in our proof is a new comparison lemma that establishes a connection between the mixing times of Glauber dynamics and field dynamics.

\begin{lemma}[Comparison lemma]\label{lem:limit-field}
Let $\XX \subseteq \{0,1\}^n$ be a non-empty downward closed set family, and let $\mu$ be a distribution fully supported on $\XX$. 
Suppose there exist a rate function $\delta: (0,1] \to [0,1]$ and a constant $\delta^\star > 0$ such that the following conditions hold:
\begin{enumerate}
    \item The field dynamics $P^{\mathrm{FD}}_{1 \leftrightarrow 1-\epsilon}$ on $\mu$ satisfies a Poincar\'e inequality with constant $\delta(\epsilon) > 0$, i.e.,
    \begin{align*}
    \forall \epsilon \in (0,1] \text{ and } f \in \mathbb{R}^\Omega, \quad \+E_{P^{\mathrm{FD}}_{1 \leftrightarrow 1-\epsilon}}(f,f) \ge \delta(\epsilon) \cdot \Var[\mu]{f}. 
    \end{align*}
    \item $\lim_{\epsilon \to 0^+} \frac{\delta(\epsilon)}{\epsilon} = \delta^\star$.
\end{enumerate}
Then, the Glauber dynamics $P$ on $\mu$ satisfies a Poincar\'e inequality with constant $\frac{\delta^\star}{(1+r_{\max})n}$, i.e.,
\begin{align*}
    \forall f \in \mathbb{R}^{\Omega}, \quad \+E_{P}(f,f) \ge \frac{\delta^\star}{(1+r_{\max})n} \cdot \Var[\mu]{f}.
\end{align*}
The same result holds if the Poincar\'e inequalities are replaced by modified log-Sobolev inequalities.
\end{lemma}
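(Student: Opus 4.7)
The plan is to analyze the field dynamics $P^{\mathrm{FD}}_{1\leftrightarrow 1-\epsilon}$ in the limit $\epsilon\to 0^+$ and show that, to first order in $\epsilon$, it behaves like a ``continuous-time'' version of the Glauber dynamics, rescaled by a factor of order $n(1+r_{\max})$. Concretely, I will compute the one-step transition probabilities up to $O(\epsilon^2)$, use these to identify the limiting Dirichlet form $\lim_{\epsilon\to 0^+}\tfrac{1}{\epsilon}\mathcal{E}_{P^{\mathrm{FD}}_{1\leftrightarrow 1-\epsilon}}$, compare it term-by-term against $\mathcal{E}_{P}$, and then pass the functional inequality for the field dynamics through the limit.

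First I would expand the transition kernel. Writing $\theta=1-\epsilon$, for a fixed state $X\in\XX$ I would show by direct calculation that
\begin{align*}
P^{\mathrm{FD}}_{1\leftrightarrow 1-\epsilon}(X,X\setminus\{e\}) &= \epsilon + O(\epsilon^2), \quad e\in X,\\
P^{\mathrm{FD}}_{1\leftrightarrow 1-\epsilon}(X,X\cup\{i\}) &= \epsilon\,\*r_X(i) + O(\epsilon^2), \quad i\in V_X,
\end{align*}
while all transitions $X\to X'$ with $|X\triangle X'|\ge 2$ contribute only $O(\epsilon^2)$ in total (this follows from writing $P^{\mathrm{FD}}_{1\leftrightarrow 1-\epsilon}(X,X')=\sum_{Y\subseteq X\cap X'}\epsilon^{|X|-|Y|}(1-\epsilon)^{|Y|}(\epsilon*\mu^Y)(X'\setminus Y)$ and noting that $(\epsilon*\mu^Y)(T)=O(\epsilon^{|T|})$). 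Using reversibility to collapse each ordered pair into an unordered edge, I then obtain, for any bounded $f:\XX\to\R$,
\begin{align*}
\lim_{\epsilon\to 0^+}\frac{1}{\epsilon}\,\mathcal{E}_{P^{\mathrm{FD}}_{1\leftrightarrow 1-\epsilon}}(f,f) \;=\; \mathcal{E}_\infty(f,f)\;:=\;\sum_{\{X,X\cup i\}}\mu(X)\,\*r_X(i)\,\bigl(f(X)-f(X\cup\{i\})\bigr)^2,
\end{align*}
where the sum is over unordered pairs with $i\in V_X$. Finiteness of $\XX$ makes this interchange of limit and sum straightforward.

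Next I would compare $\mathcal{E}_\infty$ with the Glauber Dirichlet form $\mathcal{E}_P$. A direct calculation using $\mu(X)\*r_X(i)=\mu(X\cup\{i\})$ gives, for each pair,
\begin{align*}
\mathcal{E}_P(f,f)\;=\;\sum_{\{X,X\cup i\}}\frac{\mu(X)\*r_X(i)}{n\bigl(1+\*r_X(i)\bigr)}\bigl(f(X)-f(X\cup\{i\})\bigr)^2.
\end{align*}
Comparing term by term and using $\*r_X(i)\le r_{\max}$ yields $\mathcal{E}_\infty(f,f)\le n(1+r_{\max})\,\mathcal{E}_P(f,f)$. Combining this with the assumed Poincaré inequality $\mathcal{E}_{P^{\mathrm{FD}}_{1\leftrightarrow 1-\epsilon}}(f,f)\ge \delta(\epsilon)\operatorname{Var}_\mu[f]$, dividing by $\epsilon$, and sending $\epsilon\to 0^+$ gives $\mathcal{E}_\infty(f,f)\ge \delta^\star\operatorname{Var}_\mu[f]$, so $\mathcal{E}_P(f,f)\ge \tfrac{\delta^\star}{n(1+r_{\max})}\operatorname{Var}_\mu[f]$, which is the claim.

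For the modified log-Sobolev version I would repeat the argument verbatim with the entropy-production functional $\mathcal{E}^{\mathrm{Ent}}_{P}(f)=\sum_{X,X'}\mu(X)P(X,X')\bigl(f(X)\log\tfrac{f(X)}{f(X')}-f(X)+f(X')\bigr)$; since this functional is again linear in the transition probabilities $P(X,X')$, exactly the same expansion and term-by-term comparison go through, producing the same factor $n(1+r_{\max})$. The main obstacle I anticipate is purely bookkeeping: organizing the $O(\epsilon^2)$ remainders so that the limit $\tfrac{1}{\epsilon}\mathcal{E}_{P^{\mathrm{FD}}_{1\leftrightarrow 1-\epsilon}}\to \mathcal{E}_\infty$ is justified uniformly in $f$ (for fixed $f$, finiteness of $\XX$ makes this routine, but the expansion of $(\epsilon*\mu^Y)(T)$ must be carried out carefully for all $Y\subsetneq X$ to confirm that higher-order transitions really do contribute $O(\epsilon^2)$). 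Once this is nailed down, the rest is an application of the inequality $\*r_X(i)\le r_{\max}$ and a single limit.
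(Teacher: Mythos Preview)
Your approach is correct and essentially identical to the paper's. Both proofs expand the field dynamics transition probabilities to first order in $\epsilon$, observe that only single-element moves survive with rates $\epsilon$ and $\epsilon\,\*r_X(i)=\epsilon\,\mu(X\cup\{i\})/\mu(X)$, compare term-by-term with the Glauber Dirichlet form via $\*r_X(i)\le r_{\max}$, and pass to the limit; the paper phrases the comparison as a pointwise inequality $(P^{\mathrm{FD}}_{1\leftrightarrow 1-\epsilon}-I)(S,T)\le (1+r_{\max})n\epsilon\,(P-I)(S,T)+o(\epsilon)$ and plugs directly into the Dirichlet-form identity, whereas you first isolate the limiting form $\mathcal{E}_\infty$ and then compare, but the content is the same.
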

    Previous comparison arguments in \cite{CFYZ2024rapid, AJKPV2022entropic, CFYZ22optimal} are based on the annealing framework for  Markov chains (see \cite{chen2022localization}), which establishes the rapid mixing of Glauber dynamics for $\mu$ using rapid mixing of field dynamics together with Glauber dynamics at a lower field (i.e., $\lambda * \mu$ for some $\lambda \in (0,1)$). In contrast, our new comparison lemma proves the rapid mixing property only using the field dynamics, without the need to consider Glauber dynamics at a lower field.

\begin{proof}[Proof of~\Cref{thm:main-0}]
Fix $\*\lambda \in \mathbb{R}_{>0}^n$ and non-maximal $S \in \XX$. By~\Cref{thm:cor-bound},~\Cref{remark:invariant} and~\eqref{eq:strong-assump-main}, we have
\begin{align*}
\mathrm{Cov}(\*\lambda * \mu^S) \preceq \diag(\*m(\*\lambda * \mu^S)).
\end{align*}
By~\Cref{thm:mixing-bound,lem:limit-field}, the modified log-Sobolev constant satisfies
\begin{align*}
    \rho_0(P_\mathsc{gd}) \ge \frac{1}{(1+r_{\max}) n}.
\end{align*}
The mixing time bound then follows.
\end{proof}

\begin{proof}[Proof of~\Cref{thm:main-delta}]
Fix non-maximal $S \in \XX$. By~\Cref{thm:cor-bound,thm:mixing-bound}, the Poincar\'e inequality for the field dynamics holds with constant
\begin{align*}
    \lambda(P_{1 \leftrightarrow \theta}^{\mathrm{FD}}) \ge \exp\tp{-\int_0^{\theta} \frac{1}{(1-\eta)(1-(1-\delta)(1-\eta))} \dif \eta} = \frac{(1-\theta)\delta}{1-(1-\theta)(1-\delta)}.
\end{align*}
By~\Cref{lem:limit-field}, this implies a Poincar\'e inequality for the Glauber dynamics with constant 
\begin{align}\label{eq:GD-poincare-1}
\lambda(P_\mathsc{gd}) \ge \frac{\delta}{(1+r_{\max}) n}.
\end{align}
The mixing time bound then follows.
\end{proof}


\section{Preliminaries}
\label{sec:preliminaries}


\subsection{Basics of Markov chains}
Let $\Omega$ be a finite state space, and $(X_t)_{t \ge 0}$ be a Markov chain with transition matrix $P \in \mathbb{R}_{\ge 0}^{\Omega \times \Omega}$. The Markov chain $(X_t)_{t \ge 0}$ is \emph{irreducible} if for any $x,y \in \Omega$, there exists $t \ge 0$ satisfying $P^t(x,y) > 0$, and is \emph{aperiodic} if for any $x \in \Omega$, $\mathrm{gcd} \set{t \ge 1 \mid P^t(x,x) > 0} = 1$.
By the fundamental theorem of Markov chain, an irreducible and aperiodic Markov chain $(X_t)_{t \ge 0}$ converges to its unique \emph{stationary distribution} $\mu$, i.e. a distribution $\mu$ satisfying $\mu P = \mu$. We may use the transition matrix $P$ to refer to the Markov chain $(X_t)_{t \ge 0}$.

Let $P$ be an irreducible and aperiodic Markov chain with stationary distribution $\mu$. The mixing time of the Markov chain $P$ is given by
\begin{align*}
T_{\mathrm{mix}}(P) = \max_{x \in \Omega} \min \set{t \ge 0 \mid d_{\mathrm{TV}}(P^t(x,\cdot),\mu) < \frac{1}{4}},
\end{align*}
where $d_{\mathrm{TV}}(\mu,\nu) = \frac{1}{2}\sum_{x \in \Omega} \abs{\mu(x)-\nu(x)}$ denotes the total variation distance.

\subsection{Functional inequalities and mixing time of Markov chains}
Let $\mu$ be a distribution over $\Omega \subseteq \{0,1\}^{n}$, and $f \in \mathbb{R}^\Omega$ and $g \in \mathbb{R}_{>0}^{\Omega}$ be functions. The \emph{expectation} $\E[\mu]{f}$ is defined as $\E[\mu]{f} = \sum_{x \in \Omega} \mu(x) f(x)$. The \emph{variance} $\Var[\mu]{f}$ and \emph{entropy} $\Ent[\mu]{f}$ is
\begin{align*}
    \Var[\mu]{f} = \mathbb{E}_\mu\left[{f^2}\right] - \tp{\mathbb{E}_\mu\left[{f}\right]}^2,
\end{align*}
and
\begin{align*}
\Ent[\mu]{g} = \mathbb{E}_\mu\left[g \log g\right] - \mathbb{E}_\mu\left[g\right] \log \mathbb{E}_\mu\left[g\right].
\end{align*}
The Poincar\'e inequality and modified log-Sobolev inequality are the important tool in analyzing the mixing time of Markov chains.
\begin{definition}  
    Let $(X_t)_{t \ge 0}$ be an irreducible and aperiodic Markov chain on a finite state space $\Omega$ with transition matrix $P \in \mathbb{R}^{\Omega \times \Omega}$, and $\mu$ be the unique stationary distribution. 
    \begin{itemize}
        \item We say the Poincar\'e inequality holds with constant $\lambda > 0$ if 
        \begin{align*}
            \forall f \in \mathbb{R}^{\Omega}, \quad \lambda \cdot \Var[\mu]{f} \le \+E_P(f,f),
        \end{align*}
        where $\+E_P(f,g) := \inner{f}{(I-P)f}_\mu$ denotes the Dirichlet form and $\inner{f}{g}_{\mu} := \sum_{x \in \Omega} \mu(x) f(x) g(x)$ denotes the inner product of $f$ and $g$.
        \item We say the modified log-Sobolev inequality holds with constant $\rho_0 > 0$ if 
        \begin{align*}
            \forall f \in \mathbb{R}_{>0}^\Omega,\quad \rho_0 \cdot \Ent[\mu]{f} \le \+E_P(f,\log f).
        \end{align*}
    \end{itemize}
    The following results on the mixing time of Glauber dynamics via functional inequalities are known.
    \begin{lemma}
        Let $(X_t)_{t \ge 0}$ be an irreducible and aperiodic Markov chain on a finite state space $\Omega$ with transition matrix $P \in \mathbb{R}^{\Omega \times \Omega}$, and $\mu$ be the unique stationary distribution. 
        \begin{itemize} 
            \item If the Poincar\'e inequality holds with constant $\lambda > 0$, then the mixing time satisfies
            \begin{align*}  
                T_{\mathrm{mix}} \le \frac{1}{\lambda} \cdot \log \frac{4}{\mu_{\min}},
            \end{align*}
            where $\mu_{\min} := \min_{x \in \Omega} \mu(x)$.
            \item If the modified log-Sobolev inequality holds with constant $\rho_0 > 0$, then the mixing time satisfies
            \begin{align*}
                T_{\mathrm{mix}} \le \frac{1}{\rho_0} \cdot \tp{\log \log \frac{1}{\mu_{\min}} + 4}.
            \end{align*}
        \end{itemize}
    \end{lemma}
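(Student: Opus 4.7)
The plan is to derive both bounds via the classical two-step template: use the functional inequality to obtain exponential decay of an appropriate $\mu$-divergence along the chain, then convert to total variation distance via a standard comparison inequality.

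For the Poincar\'e bound, I would track the $\chi^2$ divergence $\chi^2(P^t(x,\cdot)\,\Vert\,\mu) = \Var[\mu]{h_t}$, where $h_t(y) := P^t(x,y)/\mu(y)$. Since $\+E_P(f,f) = \inner{f}{(I-P)f}_\mu$ depends only on the symmetric part of $P$, a Poincar\'e inequality with constant $\lambda$ for $P$ is equivalent to one with constant $\lambda$ for the additive reversibilization $(P+P^*)/2$, and the latter is a genuine $L^2(\mu)$ spectral gap. A standard one-step contraction in $L^2(\mu)$ then gives $\Var[\mu]{h_t}\le (1-\lambda)^{2t}\,\Var[\mu]{h_0}\le (1-\lambda)^{2t}/\mu_{\min}$. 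Combining with the Cauchy--Schwarz bound $\TV{P^t(x,\cdot)}{\mu}\le \tfrac12\sqrt{\chi^2(P^t(x,\cdot)\,\Vert\,\mu)}$ and the elementary inequality $\log(1/(1-\lambda))\ge \lambda$, one checks that $t \ge \tfrac{1}{\lambda}\log\tfrac{4}{\mu_{\min}}$ already forces $\TV{P^t(x,\cdot)}{\mu} < 1/4$.

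For the MLSI bound, I would track the relative entropy $\DKL{P^t(x,\cdot)}{\mu}$. A standard discrete-time argument---either by passing to a continuous-time interpolation $\e^{-t(I-P)}$ or via a direct one-step entropy contraction---shows that the MLSI with constant $\rho_0$ yields $\DKL{P^t(x,\cdot)}{\mu} \le \e^{-\rho_0 t}\,\DKL{\delta_x}{\mu} = \e^{-\rho_0 t}\log(1/\mu(x)) \le \e^{-\rho_0 t}\log(1/\mu_{\min})$. Pinsker's inequality $\TV{\nu}{\mu}\le \sqrt{\DKL{\nu}{\mu}/2}$ then gives $\TV{P^t(x,\cdot)}{\mu}<1/4$ as soon as $\e^{-\rho_0 t}\log(1/\mu_{\min})<1/8$, i.e.~$t\ge \tfrac{1}{\rho_0}\bigl(\log\log(1/\mu_{\min})+\log 8\bigr)$, and the claimed bound follows because $\log 8 < 4$.

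The only mildly subtle point is that $P$ is not assumed reversible in the lemma statement, so both the $\chi^2$ contraction and the discrete-time entropy contraction must be phrased through the symmetric part $(P+P^*)/2$ rather than $P$ itself; since the Dirichlet form is invariant under this symmetrization, both steps go through without changing the constants. For the Glauber dynamics considered in this paper the chain is in fact reversible, so this adjustment is purely cosmetic for the intended applications. I expect both conversions to be essentially routine and available off the shelf from textbook references such as Levin--Peres--Wilmer or Bobkov--Tetali.
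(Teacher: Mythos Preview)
The paper does not actually prove this lemma; it is stated as a standard preliminary result without proof. Your outline is precisely the classical textbook derivation (via $\chi^2$ decay plus Cauchy--Schwarz for the Poincar\'e part, and relative-entropy decay plus Pinsker for the MLSI part), and the references you cite (Levin--Peres--Wilmer, Bobkov--Tetali) are the right ones.

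One small caveat: your claim that the Poincar\'e inequality for the additive reversibilization $(P+P^*)/2$ yields the one-step variance contraction $\Var[\mu]{h_{t+1}}\le (1-\lambda)^2\Var[\mu]{h_t}$ is not literally correct in discrete time. The spectral gap of $(P+P^*)/2$ controls the second-largest eigenvalue of that operator, but the discrete-time $\chi^2$ contraction for $P^*$ (which is what drives $h_t$) requires either reversibility together with control of negative eigenvalues, or else the continuous-time interpolation $\e^{-t(I-P)}$ that you mention later for the MLSI part. Since the paper only applies the lemma to Glauber dynamics and field dynamics, both of which are reversible (and Glauber is typically analyzed in its lazy form), this wrinkle is harmless here, but your write-up would be cleaner if you routed the Poincar\'e argument through the continuous-time semigroup as well, or explicitly invoked reversibility plus positive semidefiniteness of $P$.
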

\end{definition}

\subsection{Field dynamics}

Introduced in~\cite{CFYZ2024rapid}, field dynamics has proven to be an effective tool for analyzing the mixing time of Markov chains~\cite{AJKPV2022entropic,CFYZ22optimal,chen2022localization,chen2024rapid}. Building on the recent work~\cite{chen2024rapid}, we interpret the field dynamics as a continuous-time down-up walk.
    

\begin{definition} \label{def:continuous-time-down-up-walk}
    Let $\mu$ be a distribution over $\XX \subseteq \set{0,1}^n$, and $0 \le \eta \le \theta \le 1$ be fixed parameters.
    \begin{itemize}
    \item The \emph{continuous-time down process} $(X_t)_{t \in [0,1]}$ with initial distribution $X_0 \sim \mu$ is a Markov process constructed as follows. Initially, each site $i \in [n]$ is assigned with an independent random variable $r_i$ uniformly distributed on $[0,1]$. For $t \in [0,1]$, $X_t$ is given by 
    \begin{align*}
    X_t = \{i \in X_0 \mid r_i \ge t\}.
    \end{align*}

    \item The \emph{continuous-time up process} $(Y_t)_{t \in [0,1]}$ is the time-reversal of $(X_t)_{t \in [0,1]}$. In particular, the law of $X_t$ is identical to the law of $Y_{1-t}$ for all $t \in [0,1]$.
    
    \item The Markov kernels $P_{\theta \to \eta}(S,\cdot)$ and $Q_{\eta \to \theta}(S,\cdot)$ are defined as follows:
    \begin{align*}
        P_{\theta \to \eta}(S,\cdot) := \Pr[]{X_{1-\eta} = \cdot \mid X_{1-\theta} = S} \quad \text{and} \quad Q_{\eta \to \theta}(S,\cdot) := \Pr[]{Y_\theta = \cdot \mid Y_{\eta} = S}.
    \end{align*}
    The \emph{continuous-time down-up walk}, known as the \emph{field dynamics}, is defined as 
    \begin{align*}
    P^{\mathrm{FD}}_{1\leftrightarrow \theta} := P_{1 \to \theta} Q_{\theta \to 1}.
    \end{align*}
    \end{itemize}
\end{definition}

\section{Spectral stability, entropic stability, and log-concavity}
\label{subsec:stability}

In recent years, the concepts of spectral and entropic independence/stability have played a central role in establishing mixing time bounds for Markov chains, including Glauber dynamics, the down-up walk, field dynamics, and the proximal sampler. 
In this section, we review a notion of spectral/entropic stability \emph{with respect to the field dynamics} and examine its connection to the mixing time of Markov chains.

\subsection{Spectral stability}

The notion of spectral stability was first introduced in~\cite{chen2022localization} as a generalization of spectral independence to arbitrary \emph{localization schemes}. 
In this work, we focus specifically on the field dynamics (referred to as \emph{negative-fields localization} in~\cite{chen2022localization}) and define spectral/entropic stability with respect to this particular localization scheme. 
Our definitions follow those in~\cite{chen2024rapid}, which are slightly different from the formulation in~\cite{chen2022localization}.

\begin{definition}[Spectral Stability w.r.t.~Field Dynamics, \cite{chen2024rapid}]
    Let $\mu$ be a distribution over $\XX \subseteq \{0,1\}^n$, and let $C:[0,1) \to \mathbb{R}_{\ge 0}$ be a rate function. 
    
    We say that $\mu$ satisfies \emph{spectral stability with rate $C$ with respect to the field dynamics}, if for every $\eta \in [0,1)$, $S \in \XX$ and $f : \XX \to \mathbb{R}$, it holds
    \begin{align}\label{eq:spec-stab-def}
        \lim_{\theta \to \eta^+} \frac{1}{\theta - \eta} \Var[Q_{\eta \to \theta}(S,\cdot)]{Q_{\theta \to 1} f}
        \le\frac{C(\eta)}{1-\eta} \cdot \Var[Q_{\eta \to 1}(S,\cdot)]{f}.
    \end{align}
\end{definition}

Roughly speaking, the notion of spectral stability characterizes the speed of information loss along the continuous-time down process (or, equivalently, the speed of information gain along the continuous-time up process).
Although being abstract in nature, spectral stability directly provides a lower bound on the Poincar\'e constant for field dynamics.

\begin{lemma}[\cite{chen2022localization,chen2024rapid}]
\label{lem:mixing-bound-original}
     Let $\mu$ be a distribution over $\XX \subseteq \{0,1\}^n$ and $C:[0,1) \to \mathbb{R}_{\ge 0}$ be a rate function. 
    If the distribution $\mu$ satisfies spectral stability with rate $C$ with respect to the field dynamics, then the Poincar\'{e} constant for the field dynamics $P^{\mathrm{FD}}_{1 \leftrightarrow \theta}$ satisfies
     \begin{align}
        \lambda(P^{\mathrm{FD}}_{1 \leftrightarrow \theta}) \ge \exp\tp{-\int_0^\theta \frac{C(\eta)}{1-\eta} \dif \eta}.
     \end{align}
\end{lemma}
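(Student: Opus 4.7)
The plan is to realize the Dirichlet form of the field dynamics as a single residual-variance functional along the continuous-time down process, differentiate this functional in $\theta$, and close the argument by a Gronwall estimate whose dissipation rate is supplied by spectral stability.

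First, I would fix any test function $f:\XX\to\R$ and define, for $\theta\in[0,1)$,
\[
V(\theta) \;:=\; \E[S\sim\mu_{1-\theta}]{\Var[Q_{\theta\to 1}(S,\cdot)]{f}},
\]
where $\mu_{1-\theta}$ denotes the law of $X_{1-\theta}$ in the continuous-time down process of \Cref{def:continuous-time-down-up-walk}. Since $Q_{\theta\to 1}$ is the time-reversal of $P_{1\to\theta}$, we have $(Q_{\theta\to 1}f)(X_{1-\theta}) = \E{f(X_0) \mid X_{1-\theta}}$, and a direct computation gives
\[
\langle f, P^{\mathrm{FD}}_{1\leftrightarrow\theta} f \rangle_\mu \;=\; \E[S\sim\mu_{1-\theta}]{(Q_{\theta\to 1}f)(S)^2}.
\]
Combined with the law of total variance
\[
\Var[\mu]{f} \;=\; \E[\mu_{1-\theta}]{\Var[Q_{\theta\to 1}(S,\cdot)]{f}} \;+\; \Var[\mu_{1-\theta}]{Q_{\theta\to 1}f},
\]
this yields the identity $\+E_{P^{\mathrm{FD}}_{1\leftrightarrow\theta}}(f,f) = V(\theta)$; in particular $V(0) = \Var[\mu]{f}$ since $X_1 = \emptyset$ almost surely. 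Hence the desired Poincar\'e bound reduces to a lower bound on $V(\theta)/V(0)$.

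Second, I would derive a one-sided differential inequality for $V$. For $S$ at level $1-\eta$ and $0\le\eta<\theta<1$, the tower property gives
\[
\Var[Q_{\eta\to 1}(S,\cdot)]{f} \;=\; \E[T \sim Q_{\eta\to\theta}(S,\cdot)]{\Var[Q_{\theta\to 1}(T,\cdot)]{f}} \;+\; \Var[Q_{\eta\to\theta}(S,\cdot)]{Q_{\theta\to 1}f}.
\]
Averaging over $S\sim\mu_{1-\eta}$ and using that its pushforward under $Q_{\eta\to\theta}$ is $\mu_{1-\theta}$ gives
\[
V(\eta) - V(\theta) \;=\; \E[S\sim\mu_{1-\eta}]{\Var[Q_{\eta\to\theta}(S,\cdot)]{Q_{\theta\to 1}f}}.
\]
Dividing by $\theta-\eta$, letting $\theta\to\eta^+$, and integrating the pointwise spectral stability hypothesis~\eqref{eq:spec-stab-def} over $S\sim\mu_{1-\eta}$ yields
\[
-V'(\eta) \;\le\; \frac{C(\eta)}{1-\eta}\,V(\eta), \qquad \eta \in [0,1).
\]

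Finally, I would close the argument by Gronwall's inequality. Rearranging as $(\log V)'(\eta) \ge -C(\eta)/(1-\eta)$ and integrating from $0$ to $\theta$ using $V(0) = \Var[\mu]{f}$ gives
\[
V(\theta) \;\ge\; \Var[\mu]{f}\cdot\exp\left(-\int_0^\theta \frac{C(\eta)}{1-\eta}\dif\eta\right).
\]
Dividing by $\Var[\mu]{f}$ and infimizing over $f$ with $\Var[\mu]{f}>0$ delivers the claimed lower bound on $\lambda(P^{\mathrm{FD}}_{1\leftrightarrow\theta})$. The only real subtlety I anticipate is justifying the interchange of limit and expectation needed to promote the pointwise spectral stability hypothesis to the integrated inequality for $V'$, together with verifying sufficient regularity of $\theta\mapsto V(\theta)$ (continuity and almost-everywhere differentiability) so that Gronwall applies; on a finite state space both follow routinely from dominated convergence and the explicit smoothness of the kernels $P_{1\to\theta}$ and $Q_{\theta\to 1}$ in $\theta$.
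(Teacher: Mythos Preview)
Your argument is correct and is essentially the standard localization-scheme proof from \cite{chen2022localization,chen2024rapid}: identify the Dirichlet form of the down-up walk with the averaged residual variance $V(\theta)$, use the law of total variance along the chain rule $Q_{\eta\to 1}=Q_{\eta\to\theta}Q_{\theta\to 1}$ to obtain a one-sided differential inequality from spectral stability, and integrate via Gronwall. The paper does not supply its own proof of this lemma (it is cited directly from those references), so there is nothing further to compare.
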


The following proposition, implicitly from \cite{chen2024rapid}, provides a much more intuitive interpretation of the spectral stability with respect to field dynamics.
These conditions can effectively serve as alternative definitions of the notion. 
Notably, the first part of~\Cref{thm:mixing-bound} follows directly from~\Cref{lem:mixing-bound-original} and \cref{prop:equiv-SI-stability}.


\begin{proposition}[Equivalent Definitions of Spectral Stability w.r.t.~Field Dynamics]
\label{prop:equiv-SI-stability}
    Let $\mu$ be a distribution over $\XX \subseteq \{0,1\}^n$ and $C:[0,1) \to \mathbb{R}_{\ge 0}$ be a rate function. 
    Then, the following conditions are equivalent:

    \begin{enumerate}
        \item\label{item:spectral-1} The distribution $\mu$ satisfies spectral stability with rate $C$ with respect to the field dynamics; 
        
        \item\label{item:spectral-2} For any $\eta \in [0,1)$ and $S \in \XX$, it holds
        \begin{align}
            \sum_{i \in [n] \setminus S} p_i \tp{\frac{q_i}{p_i}-1}^2 \le C(\eta) \cdot D_{\chi^2}(\nu \parallel (1-\eta) * \mu^S),
            \qquad \text{for all } \nu \ll \mu^S
        \end{align}
        where $\*p = \*m((1-\eta) * \mu^S)$ and $\*q = \*m(\nu)$ are the mean vectors;

        \item\label{item:spectral-3} For any $\eta \in [0,1)$ and $S \in \XX$, it holds
        \begin{align}\label{eq:SS-Cov-DMean}
            \mathrm{Cov}((1-\eta)*\mu^S) \preceq C(\eta) \cdot \diag\tp{\*m((1-\eta)*\mu^S)}.
        \end{align}
    \end{enumerate} 
\end{proposition}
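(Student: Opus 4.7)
The plan is to show that all three conditions reduce to the same pointwise quadratic inequality for the distribution $\pi := (1-\eta)*\mu^S$ with mean vector $\*p = \*m(\pi)$, namely
\begin{align*}
\sum_{i \in V_S} \frac{\mathrm{Cov}_\pi(\phi, \*1_{i \in T'})^2}{p_i} \le C(\eta) \cdot \Var_\pi[\phi], \qquad \forall \phi \in \R^{\XX^S}, \tag{$\star$}
\end{align*}
and then observe that $(\star)$ is equivalent to (3) via Cauchy--Schwarz duality.

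First I would reduce condition~(1) to $(\star)$. Using the coupling $(r_i) \sim U[0,1]^n$ underlying \cref{def:continuous-time-down-up-walk}, I can write $Q_{\eta \to \theta}(S,\cdot)$ explicitly as follows: sample $T' \sim \pi$, then retain each $i \in T'$ independently with probability $\epsilon := (\theta-\eta)/(1-\eta)$ to obtain the intermediate state $S \cup Z$. To first order in $\epsilon$, this distribution equals $S$ with probability $1 - \epsilon \sum_i p_i + O(\epsilon^2)$ and $S\cup\{i\}$ with probability $\epsilon p_i + O(\epsilon^2)$. A direct Taylor expansion then yields
\begin{align*}
\Var[Q_{\eta \to \theta}(S,\cdot)]{g} = \epsilon \sum_{i \in V_S} p_i \bigl(g(S\cup\{i\}) - g(S)\bigr)^2 + O(\epsilon^2)
\end{align*}
for any bounded $g$. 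Setting $g = Q_{\eta \to 1}f$ and writing $\phi(T') := f(S \cup T')$, so that $g(S) = \mathbb{E}_\pi[\phi]$ and $g(S \cup \{i\}) = \mathbb{E}_\pi[\phi \mid i \in T'] = \mathbb{E}_\pi[\phi \*1_{i \in T'}]/p_i$, one obtains $g(S \cup \{i\}) - g(S) = \mathrm{Cov}_\pi(\phi, \*1_{i \in T'})/p_i$; substituting and dividing by $(\theta-\eta)/(1-\eta)$ shows (1) is exactly $(\star)$.

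Next I would reduce condition~(2) to $(\star)$. Writing $h = \diff \nu / \diff \pi$, we have $q_i = \mathbb{E}_\pi[h \*1_{i \in T'}]$, so $q_i - p_i = \mathrm{Cov}_\pi(h, \*1_{i \in T'})$ (since $\mathbb{E}_\pi[h-1]=0$), while $D_{\chi^2}(\nu \Vert \pi) = \Var_\pi[h]$. Thus (2) reads $\sum_i \mathrm{Cov}_\pi(h, \*1_{i \in T'})^2/p_i \le C(\eta)\Var_\pi[h]$ for all densities $h$. Both sides depend only on $h-1$ and are quadratic in $h-1$, so perturbing via $h = 1 + t \tilde\phi$ for an arbitrary mean-zero $\tilde\phi$ and small $t > 0$ (which is a valid density for $t \le 1/\|\tilde\phi\|_\infty$) shows that (2) is equivalent to $(\star)$ for all $\phi$.

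Finally I would show $(\star) \Leftrightarrow (3)$ by double dualization. For any fixed $\phi$, Cauchy--Schwarz gives $\sum_i a_i^2/p_i = \sup_{\*v \ne 0} (\sum_i a_i v_i)^2/\sum_i p_i v_i^2$ with $a_i = \mathrm{Cov}_\pi(\phi,\*1_{i \in T'})$; applying this with $\sum_i a_i v_i = \mathrm{Cov}_\pi(\phi, \sum_i v_i \*1_{i \in T'})$ and then taking sup over $\phi$ via another application of Cauchy--Schwarz yields that $(\star)$ is equivalent to $\*v^\intercal \mathrm{Cov}(\pi) \*v \le C(\eta) \*v^\intercal \diag(\*p) \*v$ for all $\*v \in \R^{V_S}$, which is exactly (3). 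The main technical step is the first one: carefully computing the derivative of $\Var[Q_{\eta \to \theta}(S,\cdot)]{Q_{\theta\to 1}f}$ at $\theta = \eta^+$ and justifying the Taylor expansion rigorously; the remaining steps are essentially functional-analytic duality and should be routine.
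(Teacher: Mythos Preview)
Your proposal is correct and, for the equivalence $(2)\Leftrightarrow(3)$, more direct than the paper's. Both you and the paper handle $(1)\Leftrightarrow(2)$ in the same way: expand the kernel $Q_{\eta\to\theta}(S,\cdot)$ to first order in $\epsilon=(\theta-\eta)/(1-\eta)$ and identify the derivative of the variance (the paper quotes \cite[(78)]{chen2024rapid}, you sketch the computation). Where the approaches diverge is $(2)\Leftrightarrow(3)$: the paper routes through the entropic functional $F_C(\*q)=\sum_i(q_i\log(q_i/p_i)-(q_i-p_i))-C\inf_{\*m(\nu)=\*q}D_{\mathrm{KL}}(\nu\Vert\mu)$, computing its Hessian via Legendre duality with the log-partition function (their Lemma~A.1) and then linearizing the concavity at $\*p$ to recover the $\chi^2$ statement (Corollary~A.2). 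Your double Cauchy--Schwarz argument is shorter and avoids any entropy detour; the paper's machinery pays off because the same $F_C$ is reused verbatim to prove the entropic-stability equivalence in \cref{prop:entropic-sufficient-condition}. The paper also records an alternative proof of $(1)\Leftrightarrow(3)$ via the linear-tilt Cauchy--Schwarz lemma from the localization-scheme framework \cite{chen2022localization}, which is close in spirit to your $(\star)\Leftrightarrow(3)$ step. One small imprecision worth tightening: when you write $g(S)=\mathbb{E}_\pi[\phi]$ with $g=Q_{\theta\to 1}f$, the correct value is $\mathbb{E}_{(1-\theta)*\mu^S}[\phi]=\mathbb{E}_\pi[\phi]+O(\theta-\eta)$; since the variance already carries a factor $\epsilon$, this only shifts the answer by $o(\epsilon)$ and does not affect the limit, but you should make this explicit in the rigorous version you allude to.
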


The proof of~\Cref{prop:equiv-SI-stability} is included in \cref{appendix-stab-proof} for completeness.

\begin{remark}[Comparison of Spectral Stability w.r.t.~Field Dynamics and Spectral Independence]
\label{rmk:SI-SS}
    Since its introduction \cite{ALO2021spectral}, spectral independence has become a powerful tool for analyzing the Glauber dynamics.
    In the framework of localization schemes \cite{chen2022localization},
    the notion of spectral independence can be viewed as spectral stability with respect to the uniform-block dynamics (termed as \emph{coordinate-by-coordinate localization} in \cite{chen2022localization}). 
    When no pinning is applied, it can be defined formally by
    \begin{align}\label{eq:rmk-SI}
        \mathrm{Cov}(\mu) \preceq C \cdot \diag\tp{\mathrm{Var}(\mu)},
    \end{align}
    where $C$ is some constant and $\diag\tp{\mathrm{Var}(\mu)}$ is a diagonal matrix of variances.
    Equivalently, and more commonly, spectral independence is defined by
    \begin{align*}
        \lambda_{\max}(\Psi) \le C,
    \end{align*}
    where $\Psi \in \R^{n \times n}$ is the influence matrix defined as
    \begin{align*}
        \Psi(i,j) = \mathbb{P}_{S \sim \mu}(j \in S \mid i \in S) - \mathbb{P}_{S \sim \mu}(j \in S \mid i \notin S),
        \quad \forall i,j \in [n].
    \end{align*}
    
    There is a delicate difference between the usual spectral independence in literature and the spectral stability in this paper with respect to field dynamics. 
    When no pinning is applied (i.e., $\eta = 0$ and $S=\emptyset$), spectral stability \cref{eq:SS-Cov-DMean} can be written as
    \begin{align}\label{eq:rmk-SS}
        \mathrm{Cov}(\mu) \preceq C \cdot \diag\tp{\*m(\mu)},
    \end{align}
    where $C$ is some constant and $\diag\tp{\*m(\mu)}$ is a diagonal matrix of the mean vector.
    It is easy to see that this is further equivalent to the condition 
    \begin{align*}
        \lambda_{\max}(\tilde{\Psi}) \le C,
    \end{align*}
    where $\tilde{\Psi} \in \R^{n \times n}$ is a reweighted influence matrix, called the \emph{correlation matrix} in \cite[Definition 15]{AASV21sector}, defined as
    \begin{align*}
        \tilde{\Psi}(i,j) = \mathbb{P}_{S \sim \mu}(j \in S \mid i \in S) - \mathbb{P}_{S \sim \mu}(j \in S), 
        \quad \forall i,j \in [n].
    \end{align*}

    In the framework of simplicial complexes, the notion of spectral independence was also studied for distributions supported on the maximal faces of a pure simplicial complex, which also takes the form of \cref{eq:rmk-SS} \cite{AASV21sector}. In fact, spectral independence \cref{eq:rmk-SI} for a distribution on the product space $\{0,1\}^n$ is derived by considering the pure simplicial complex on the ground set $[n] \times \{0,1\}$, where every configuration in $\{0,1\}^n$ is viewed as an $n$-subset of the ground set $[n] \times \{0,1\}$, and the Glauber dynamics is equivalent to the down-up walk in this complex \cite{ALO2021spectral}.
    In contrast, our spectral stability for field dynamics \cref{eq:rmk-SS} is defined for a distribution supported over \emph{all} faces of a simplicial complex that is not necessarily pure, and hence generalizes the previous setting of maximal faces in pure simplicial complexes.

    We remark that since $\diag\tp{\mathrm{Var}(\mu)} \preceq \diag\tp{\*m(\mu)}$ for any distribution $\mu$ over $\{0,1\}^n$, \cref{eq:rmk-SI} implies \cref{eq:rmk-SS}; namely, spectral independence is a slightly stronger condition than spectral stability with respect to field dynamics.
    In most previous works, the two definitions do not make a significant difference in applications, since most of times one only aims to show an $O(1)$ upper bound for either notion. However, in this work we need precisely the spectral stability with respect to the field dynamics to perform the trickle-down argument. Furthermore, spectral stability is the correct form to characterize log-concavity of the generating polynomials, see \cref{subsec:log-concave}.
\end{remark}

\subsection{Entropic stability}
By changing the variance in spectral stability to entropy, we obtain the notion of entropic stability.

\begin{definition}[Entropic Stability w.r.t.~Field Dynamics,~\cite{chen2024rapid}]
    Let $\mu$ be a distribution over $\XX \subseteq \{0,1\}^n$, and let $C:[0,1) \to \mathbb{R}_{\ge 0}$ be a rate function. 
    
    The distribution $\mu$ satisfies \emph{entropic stability with rate $C$ with respect to the field dynamics}, if for any $\eta \in [0,1)$ and $S \in \XX$, we have
    \begin{align}\label{eq:ent-stab-def}
        \lim_{\theta \to \eta^+} \frac{1}{\theta - \eta} \Ent[Q_{\eta \to \theta}(S,\cdot)]{Q_{\theta \to 1} f} \le \frac{C(\eta)}{1-\eta} \cdot \Ent[Q_{\eta \to 1}(S,\cdot)]{f},
        \qquad \text{for all } f: \XX \to \mathbb{R}_{\ge 0}.
    \end{align}
\end{definition}

Just as spectral stability implies the Poincar\'{e} inequality for the field dynamics (see \cref{lem:mixing-bound-original}), entropic stability implies the modified log-Sobolev inequality for the field dynamics, as stated in the lemma below. 

\begin{lemma}[\cite{chen2022localization,chen2024rapid}]\label{lem:mixing-bound-original-entropic}
     Let $\mu$ be a distribution over $\XX \subseteq \{0,1\}^n$ and $C:[0,1) \to \mathbb{R}_{\ge 0}$ be a rate function. 
    If the distribution $\mu$ satisfies entropic stability with rate $C$ with respect to the field dynamics, then the modified log-Sobolev constant for the field dynamics $P^{\mathrm{FD}}_{1 \leftrightarrow \theta}$ satisfies
     \begin{align}\label{eq:require-entropic}
        \rho_0(P^{\mathrm{FD}}_{1 \leftrightarrow \theta}) \ge \exp\tp{-\int_0^\theta \frac{C(\eta)}{1-\eta} \dif \eta}.
     \end{align}
\end{lemma}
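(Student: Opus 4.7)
The plan follows the standard localization-scheme argument that parallels the proof of \cref{lem:mixing-bound-original}, with variance replaced throughout by entropy. The strategy is to track the residual conditional entropy along the continuous-time down process, derive a differential inequality from the infinitesimal entropic stability, apply Gronwall, and finally invoke the localization-framework correspondence between the resulting continuous entropy decay rate and the modified log-Sobolev constant of the field dynamics.

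Concretely, for any positive test function $f \in \R_{>0}^\XX$, I would define the residual conditional entropy
\begin{align*}
    \Phi(\eta) := \E[S \sim \mu \circ P_{1 \to \eta}]{\Ent[Q_{\eta \to 1}(S,\cdot)]{f}}, \qquad \eta \in [0,1].
\end{align*}
Two applications of the tower property of entropy, together with the marginalization identity $\mu \circ P_{1 \to \eta} \circ Q_{\eta \to \theta} = \mu \circ P_{1 \to \theta}$ arising from the joint law of the continuous-time down process, give $\Phi(0) = \Ent[\mu]{f}$, $\Phi(1) = 0$, and
\begin{align*}
    \Phi(\eta) - \Phi(\theta) = \E[S \sim \mu \circ P_{1 \to \eta}]{\Ent[Q_{\eta \to \theta}(S,\cdot)]{Q_{\theta \to 1} f}} \qquad \text{for all } \eta < \theta.
\end{align*}

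Dividing by $\theta - \eta$ and letting $\theta \to \eta^+$, the entropic stability hypothesis~\eqref{eq:ent-stab-def} controls the right-hand side pointwise in $S$; exchanging limit and expectation (routine on the finite state space) yields the differential inequality $-\Phi'(\eta) \le \frac{C(\eta)}{1-\eta} \Phi(\eta)$. Gronwall's inequality then gives
\begin{align*}
    \Phi(\theta) \ge \exp\tp{-\int_0^\theta \frac{C(\eta)}{1-\eta} \dif \eta} \cdot \Ent[\mu]{f}.
\end{align*}
To conclude, I would invoke the correspondence from~\cite{chen2022localization,chen2024rapid} between this residual-entropy decay rate along the continuous-time localization process and the modified log-Sobolev constant of the discrete field dynamics $P^{\mathrm{FD}}_{1 \leftrightarrow \theta} = P_{1 \to \theta} Q_{\theta \to 1}$, which delivers precisely $\rho_0(P^{\mathrm{FD}}_{1 \leftrightarrow \theta}) \ge \exp\tp{-\int_0^\theta C(\eta)/(1-\eta) \dif \eta}$.

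The main obstacle is this final translation. The variance analogue in \cref{lem:mixing-bound-original} uses a clean orthogonal decomposition of variance into conditional and observable components that matches the Poincar\'e Dirichlet form identity directly. In the entropy setting the tower decomposition still holds, but Jensen's inequality for $x \log x$ introduces a gap, so connecting the continuous conditional-entropy decay to the discrete modified log-Sobolev Dirichlet form requires a more delicate convexity argument (essentially a Donsker--Varadhan style variational identity adapted to the reversible down-up factorization). This is precisely the subtlety addressed by the localization-scheme machinery of the cited works, and I would follow those arguments to complete the proof.
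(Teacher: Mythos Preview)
The paper does not prove this lemma; it is quoted directly from the cited references \cite{chen2022localization,chen2024rapid} without argument. Your outline is correct and is essentially the standard proof from those references: track $\Phi(\eta) = \E[S]{\Ent[Q_{\eta\to 1}(S,\cdot)]{f}}$, use the exact chain rule for entropy to get the increment formula, apply the entropic stability hypothesis pointwise to obtain $-\Phi'(\eta) \le \frac{C(\eta)}{1-\eta}\Phi(\eta)$, and integrate.

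Your assessment of the ``main obstacle'' overstates the difficulty. The final translation from $\Phi(\theta)\ge\kappa\,\Ent[\mu]{f}$ to the mLSI for the down-up walk is a one-line Jensen step, not a delicate Donsker--Varadhan argument. Writing $\nu=\mu P_{1\to\theta}$ and using that $Q_{\theta\to 1}$ is the $(\nu,\mu)$-adjoint of $P_{1\to\theta}$, a direct computation gives
\[
\+E_{P^{\mathrm{FD}}_{1\leftrightarrow\theta}}(f,\log f)-\Phi(\theta)=\sum_{S}\nu(S)\,(Q_{\theta\to 1}f)(S)\Bigl[\log(Q_{\theta\to 1}f)(S)-(Q_{\theta\to 1}\log f)(S)\Bigr]\ge 0,
\]
where the inequality is Jensen applied to the concave function $\log$. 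Thus $\+E_{P^{\mathrm{FD}}_{1\leftrightarrow\theta}}(f,\log f)\ge\Phi(\theta)\ge\kappa\,\Ent[\mu]{f}$, which is exactly the claimed mLSI. The entropy tower decomposition is exact (no gap there), and the only place Jensen enters is this last comparison, where it works in your favor.
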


The second part of~\Cref{thm:mixing-bound} then follows from~\Cref{lem:mixing-bound-original-entropic} and the following proposition.

\begin{proposition}[Equivalent Definitions \& Sufficient Condition of Entropic Stability w.r.t.~Field Dynamics]
\label{prop:entropic-sufficient-condition}
    Let $\mu$ be a distribution over $\XX \subseteq \{0,1\}^n$ and $C:[0,1) \to \mathbb{R}_{\ge 0}$ be a rate function.
    Then the following are equivalent:

    \begin{enumerate}
        \item\label{item:entropic-1} The distribution $\mu$ satisfies entropic stability with rate $C$ with respect to the field dynamics;

        \item\label{item:entropic-2} For any $\eta \in [0,1)$ and $S \in \XX$, we have
        \begin{align}
            \sum_{i \in [n] \setminus S} q_i \log \frac{q_i}{p_i} - (q_i - p_i) \le C(\eta) \cdot D_{\mathrm{KL}}(\nu \parallel (1-\eta)*\mu^S), 
            \qquad \text{for all } \nu \ll (1-\eta)*\mu^S
        \end{align}
        where $\*p = \*m((1-\eta)*\mu^S)$ and $\*q = \*m(\nu)$ are the mean vectors;
    \end{enumerate}
    Furthermore, if there is a constant $C^\star$ such that the following holds:
    \begin{enumerate}
       \setcounter{enumi}{2}
        \item \label{item:entropic-3} For any $\*\lambda \in \mathbb{R}_{>0}^{n}$ and $S \in \XX$, we have
        \begin{align}
            \mathrm{Cov}(\*\lambda*\mu^S) \preceq C^\star \cdot \diag\tp{\*m(\*\lambda*\mu^S)};
        \end{align}
    \end{enumerate}
    then the distribution $\mu$ satisfies entropic stability with constant rate $C(\eta) \equiv C^\star$ with respect to the field dynamics.
\end{proposition}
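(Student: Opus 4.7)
The plan is to split the proposition into two independent parts: the equivalence of (1) and (2), which amounts to unfolding the entropic stability definition into an infinitesimal inequality, and the sufficient condition (3) $\Rightarrow$ (1), which is the assertion that a uniform spectral bound under all external fields implies entropic independence. Both parts are entropic analogues of the corresponding spectral statements in \Cref{prop:equiv-SI-stability}; the overall strategy is to mimic the proofs there, replacing variance and $\chi^2$-divergence by entropy and KL divergence throughout.

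For (1) $\Leftrightarrow$ (2), I will parameterize test functions as Radon--Nikodym derivatives. Given $\nu \ll (1-\eta)*\mu^S$, set $f = d\nu / d\tp{(1-\eta)*\mu^S}$, so that after normalization $\Ent[Q_{\eta \to 1}(S,\cdot)]{f}$ equals $D_{\mathrm{KL}}(\nu \parallel (1-\eta)*\mu^S)$. The core computation is to evaluate $\lim_{\theta \to \eta^+} \frac{1}{\theta - \eta} \Ent[Q_{\eta \to \theta}(S,\cdot)]{Q_{\theta \to 1} f}$. Using the continuous-time down-up process from \Cref{def:continuous-time-down-up-walk}, a direct Bayes computation shows that $Q_{\eta \to \theta}(S, S \cup \{i\})$ is of order $(\theta - \eta)$ for each $i \in [n] \setminus S$ with a coefficient determined by the marginal $p_i = m_i((1-\eta)*\mu^S)$, while $Q_{\eta \to \theta}(S, S \cup U) = O((\theta - \eta)^{|U|})$ for $|U| \ge 2$. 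Taylor-expanding $\phi(x) = x \log x$ around $\mathbb{E}[Q_{\theta \to 1} f]$ at $S$ and collecting first-order terms reduces the limit to $\frac{1}{1-\eta}\sum_{i}[q_i \log(q_i/p_i) - (q_i - p_i)]$, where $\*q = \*m(\nu)$. Matching both sides of the resulting infinitesimal inequality with (2) (the factor $\frac{1}{1-\eta}$ cancels) yields the equivalence, since the correspondence $\nu \leftrightarrow f$ is bijective.

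For (3) $\Rightarrow$ (1), I will prove (2) with constant rate $C(\eta) \equiv C^\star$ directly. By the maximum-entropy (exponential-family) principle, the minimizer of $D_{\mathrm{KL}}(\nu \parallel (1-\eta)*\mu^S)$ over distributions with a prescribed mean vector is the exponential tilt $\*\lambda * (1-\eta)*\mu^S$; since the left-hand side of (2) depends only on $\*m(\nu)$, it suffices to establish the inequality when $\nu$ is such a tilt. Writing $\*\beta = \log \*\lambda$ and $F(\*\beta) = \log \mathbb{E}_{(1-\eta)*\mu^S}[e^{\*\beta^\intercal \*X}]$, we have $\nabla F(\*\beta) = \*m(e^{\*\beta} * (1-\eta)*\mu^S)$ and $\nabla^2 F(\*\beta) = \mathrm{Cov}(e^{\*\beta} * (1-\eta)*\mu^S)$, so condition (3) translates into the upper Hessian bound $\nabla^2 F(\*\beta) \preceq C^\star \diag(\nabla F(\*\beta))$. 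Passing to the Legendre conjugate $F^*$, whose natural variables are marginals, yields the pointwise lower Hessian bound $\nabla^2 F^*(\*m) \succeq \frac{1}{C^\star} \diag(1/\*m)$. Since in the exponential family $F^*(\*q) - F^*(\*p) - \nabla F^*(\*p)^\intercal (\*q - \*p)$ equals $D_{\mathrm{KL}}(\nu \parallel (1-\eta)*\mu^S)$, a straight-line Taylor expansion combined with the elementary identity $\int_0^1 (1-t) \frac{(q_i - p_i)^2}{p_i + t(q_i - p_i)} dt = q_i \log(q_i/p_i) - (q_i - p_i)$ produces $D_{\mathrm{KL}}(\nu \parallel (1-\eta)*\mu^S) \ge \frac{1}{C^\star} \sum_i [q_i \log(q_i/p_i) - (q_i - p_i)]$, which is exactly (2).

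The main obstacle I anticipate is making the infinitesimal expansion for (1) $\Leftrightarrow$ (2) fully rigorous --- in particular, controlling higher-order terms in $\theta - \eta$ uniformly in $f$ and verifying that the normalization factors reconcile between the entropy and KL formulations. The second part is conceptually cleaner; a minor subtlety is that the Legendre conjugate $F^*$ may fail to be smooth at boundary mean vectors, which is resolved by restricting to the relative interior of the achievable marginal polytope and appealing to continuity to extend the inequality.
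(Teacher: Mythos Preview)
Your proposal is correct and follows essentially the same approach as the paper. For (1)~$\Leftrightarrow$~(2) both you and the paper expand $\Ent[Q_{\eta \to \theta}(S,\cdot)]{Q_{\theta\to 1}f}$ to first order in $\theta-\eta$ using the transition probabilities of the continuous-time up walk; for (3)~$\Rightarrow$~(2) the paper packages the argument slightly differently---defining the auxiliary function $F_{C^\star}(\*q)=\sum_i[q_i\log(q_i/p_i)-(q_i-p_i)]-C^\star\inf_{\*m(\nu)=\*q}D_{\mathrm{KL}}(\nu\parallel(1-\eta)*\mu^S)$ and showing via the same Legendre-dual Hessian computation that (3) makes $F_{C^\star}$ globally concave with a critical point at $\*p$, hence $F_{C^\star}\le 0$---whereas you integrate the Hessian bound directly along the segment and invoke the identity $\int_0^1(1-t)\frac{(q_i-p_i)^2}{p_i+t(q_i-p_i)}\,dt=q_i\log(q_i/p_i)-(q_i-p_i)$; these are the same argument in different clothing.
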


The proof of~\Cref{prop:entropic-sufficient-condition} can be found in \cref{appendix-stab-proof} for completeness.

\subsection{Log-concavity}
\label{subsec:log-concave}

Our \cref{thm:cor-bound} shows that 
Condition \eqref{eq:strong-assump-main} implies spectral and entropic stability with constant rate $1$ with respect to field dynamics.
In fact, the converse is also true, and all these properties are equivalent to the log-concavity of the generating polynomial.
We summarize these equivalences in the proposition below. 

For a distribution $\mu$ supported on $\XX \subseteq \{0,1\}^n$, define its \emph{generating polynomial} $g_\mu: \mathbb{R}^{[n]} \to \R$ to be
\begin{align*}
    g_\mu(\*\lambda) = \sum_{S \in \XX} \mu(S) \prod_{i \in S} \lambda_i.
\end{align*}

\begin{proposition}[Equivalent Definitions of Log-concavity]
\label{prop:log-concave}
    Let $\XX \subseteq \{0,1\}^n$ be a non-empty downward closed set family, and let $\mu$ be a distribution fully supported on $\XX$.
    Then all of the following are equivalent:
    \begin{enumerate}
        \item The distribution $\mu$ satisfies \emph{Condition \eqref{eq:strong-assump-main}}; i.e., for any non-maximal $S \in \XX$, we have $M_S \preceq I$;

        \item For any $\*\lambda \in \mathbb{R}_{>0}^{n}$, the distribution $\*\lambda*\mu$ satisfies \emph{spectral stability} with constant rate $1$ with respect to the field dynamics;

        \item For any $\*\lambda \in \mathbb{R}_{>0}^{n}$, the distribution $\*\lambda*\mu$ satisfies \emph{entropic stability} with constant rate $1$ with respect to the field dynamics;

        \item The generating polynomial $g_\mu$ is \emph{strongly log-concave}; i.e., for any $S \in \XX$, its partial derivative $\partial^S g_\mu = g_{\mu^S}$ is log-concave on $\mathbb{R}_{>0}^{n}$.
    \end{enumerate}
\end{proposition}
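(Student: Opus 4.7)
The plan is to establish the four-way equivalence through a cycle of implications, with all four conditions read off from the Hessian of $\log g_{\mu^S}$ at an appropriate point. The central computation is that for any distribution $\nu$ fully supported on a downward closed family containing $\emptyset$, multilinearity of $g_\nu$ forces $\partial_i^2 g_\nu \equiv 0$, so a direct calculation from $\partial_i g_\nu(\*1) = m_i(\nu)$ and $\partial_i\partial_j g_\nu(\*1) = \nu(\{i,j\}\subseteq T)$ gives
\[
\nabla^2 \log g_{\nu}(\*1) \;=\; \Cov(\nu) - \diag(\*m(\nu)).
\]
By a change of variables $\lambda_i \mapsto \lambda_i/\lambda_{0,i}$, log-concavity of $g_{\mu^S}$ at an arbitrary positive point $\*\lambda_0$ is therefore equivalent to $\Cov(\*\lambda_0 * \mu^S) \preceq \diag(\*m(\*\lambda_0 * \mu^S))$.

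\textbf{(4) $\Leftrightarrow$ (2).} Since $(1-\eta)*(\*\lambda * \mu)^S = ((1-\eta)\*\lambda|_{[n]\setminus S}) * \mu^S$, as $\*\lambda$ ranges over $\mathbb{R}^n_{>0}$ and $\eta$ over $[0,1)$, the resulting external field on $\mu^S$ sweeps out every positive vector. Combined with the Hessian identity and \cref{prop:equiv-SI-stability}, this shows (2) is equivalent to log-concavity of $g_{\mu^S}$ at every positive point for every $S \in \XX$, i.e.\ SLC. \textbf{(4) $\Leftrightarrow$ (1).} Because $g_{\mu^S}(\*0) = \mu^S(\emptyset) > 0$, the Hessian extends continuously to the boundary, and a direct computation using $\partial_i \partial_j g_{\mu^S}(\*0) = \mu^S(\{i,j\})$ and $\partial_i^2 g_{\mu^S} \equiv 0$ yields
\[
\nabla^2 \log g_{\mu^S}(\*0) \;=\; D_S\, (M_S - I)\, D_S, \qquad D_S := \diag\!\left(\mu^S(\{i\})/\mu^S(\emptyset)\right)_{i\in V_S}.
\]
Thus SLC propagates to the boundary and forces $M_S \preceq I$. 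For the converse, the field-invariance of $M_S$ (\cref{remark:invariant}) means condition (1) is inherited by every tilt $\*\lambda*\mu$, and applying \cref{thm:cor-bound} with $\delta = 1$ to the tilted distribution yields $\Cov((1-\theta)*(\*\lambda*\mu)^S) \preceq \diag(\*m((1-\theta)*(\*\lambda*\mu)^S))$ for all $\*\lambda > 0$ and $\theta \in (0,1)$, which is (2).

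\textbf{(2) $\Leftrightarrow$ (3).} The forward direction is item \ref{item:entropic-3} of \cref{prop:entropic-sufficient-condition}. The reverse follows by linearization: substituting $f = 1 + \epsilon h$ with $\E[\mu]{h} = 0$ into \eqref{eq:ent-stab-def} and using $\Ent[\nu]{1+\epsilon h} = \tfrac{\epsilon^2}{2}\Var[\nu]{h} + O(\epsilon^3)$, the leading-order term recovers the spectral stability inequality \eqref{eq:spec-stab-def} with the same rate $C\equiv 1$. The main technical obstacle is the boundary Hessian identity $\nabla^2 \log g_{\mu^S}(\*0) = D_S(M_S - I)D_S$: this is what reveals $M_S$ to be the precise combinatorial invariant encoding log-concavity at the origin. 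Once this identity is in place, the remaining steps are routine bookkeeping with change of variables, field invariance, and a standard Taylor expansion.
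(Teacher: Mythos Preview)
Your proof is correct and follows essentially the same route as the paper: the Hessian identity $\nabla^2\log g_{\mu^S} = \Cov - \diag(\*m)$ (the paper computes it at a general point $z$ rather than reducing to $\*1$ via change of variables) gives (2)$\Leftrightarrow$(4), its evaluation at the origin (the paper phrases this as the limit $\lambda\to 0^+$ via \eqref{eq:small-lambda}) gives (4)$\Rightarrow$(1), \cref{thm:cor-bound} with $\delta=1$ gives (1)$\Rightarrow$(2), and \cref{prop:entropic-sufficient-condition} gives (2)$\Rightarrow$(3). The only minor difference is that for (3)$\Rightarrow$(2) you linearize \eqref{eq:ent-stab-def} directly via $f=1+\eps h$, whereas the paper routes through the potential $F_C$ and \cref{cor:property-f}; both are standard and equivalent in content.
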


Though our main result \cref{thm:main-0} does not rely on \cref{prop:log-concave}, we believe these equivalences are interesting by themselves. 
A homogeneous version of \cref{prop:log-concave} (i.e., $\mu$ is supported on subsets of a fixed size and $g_\mu$ is a homogeneous polynomial) was already known by seminal works \cite{anari2019logconcaveII,branden2020lorentzian}.
One way to establish \cref{prop:log-concave} is to utilize  
the equivalence of strong/complete log-concavity of any polynomial $p$ and a suitable homogenization $p_{\mathrm{homo}}$ of it, see \cite{anari2019logconcaveII,branden2020lorentzian}.
This also provides an alternative way to derive Condition \eqref{eq:strong-assump-main}.
In \cref{appendix-stab-proof}, we give a more direct proof based on facts established in this paper, circumventing the need of homogenization.

\section{Proof of Main Results}

In this section, we prove the main technical results \cref{thm:cor-bound,lem:limit-field} from the introduction.
In \cref{subsec:trickle-down}, we prove \cref{thm:cor-bound} which establishes spectral stability with respect to the field dynamics using the trickle-down equation given in \cite{anari2024trickle}.
Then in \cref{subsec:comparison}, we prove \cref{lem:limit-field}, establishing a simple comparison between the field dynamics and the Glauber dynamics.

\subsection{Trickle-down theorem for field dynamics}
\label{subsec:trickle-down}
The trickle-down theorem for field dynamics (\Cref{thm:cor-bound}) relies on the following trickle-down equation for field dynamics, which is a special case of the trickle-down equation for general linear-tilt localization schemes presented in~\cite{anari2024trickle}.
\begin{proposition}[trickle-down equation for field dynamics,~\cite{anari2024trickle}]\label{prop:trickle-down-eq}
    Let $\mu$ be a distribution supported over $\XX \subseteq \{0,1\}^n$. Let $(X_t)_{t \in [0,1]}$ and $(Y_t)_{t \in [0,1]}$ be the continuous-time down and up processes respectively for the target distribution $\mu$ where $X_0 = Y_1 \sim \mu$. 
    Then for sufficiently small $h > 0$, we have
    \begin{align*}
        \Sigma = \E[]{\mathrm{Cov}(Y_1 \mid Y_h)} + \Sigma \Pi^{-1} \Sigma \cdot h + o(h),
    \end{align*}
    where $\Sigma = \mathrm{Cov}(\mu)$ denotes the covariance matrix and $\Pi = \diag\tp{\*m(\mu)}$ denotes the diagonal matrix of the mean vector of $\mu$.
    We let $\Pi^{-1} := \-{diag}(\Pi_{ii}^{-1})$ with convention $0^{-1} = 0$ (i.e., the pseudoinverse).
\end{proposition}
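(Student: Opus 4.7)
The plan is to combine the law of total covariance with a first-order expansion in $h$ of the conditional mean $\E{Y_1 \mid Y_h}$, exploiting that for small $h$ the sparsified sample $Y_h$ is empty or a singleton with overwhelming probability. The resulting correction then reduces to a sum of rank-one contributions, one per coordinate, that assemble into $\Sigma \Pi^{-1} \Sigma$.

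By the law of total covariance,
\begin{align*}
\Sigma = \mathrm{Cov}(Y_1) = \E{\mathrm{Cov}(Y_1 \mid Y_h)} + \mathrm{Cov}\tp{\E{Y_1 \mid Y_h}},
\end{align*}
so the statement reduces to showing $\mathrm{Cov}\tp{\E{Y_1 \mid Y_h}} = h \cdot \Sigma \Pi^{-1} \Sigma + o(h)$. To unpack the joint law, observe from \cref{def:continuous-time-down-up-walk} that $Y_h$ has the same distribution as $X_{1-h}$, and conditionally on $X_0 = Y_1 = S$ each element of $S$ survives in $X_{1-h}$ independently with probability $h$. Hence $\Pr{Y_h = T} = \sum_{S \in \XX,\, S \supseteq T} \mu(S)\, h^{\abs{T}}(1-h)^{\abs{S}-\abs{T}}$, and the conditional law of $Y_1$ given $Y_h = T$ is $T \cup T'$ with $T' \sim (1-h) * \mu^T$; in particular,
\begin{align*}
\E{Y_1 \mid Y_h = T} = \*1_T + \*m\bigl((1-h) * \mu^T\bigr).
\end{align*}

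Set $\phi(T) := \E{Y_1 \mid Y_h = T} - \*m$; since $\E{\phi(Y_h)} = 0$, one has $\mathrm{Cov}\tp{\E{Y_1 \mid Y_h}} = \E{\phi(Y_h)\phi(Y_h)^\intercal}$. A direct expansion gives $\Pr{Y_h = \emptyset} = 1 - O(h)$, $\Pr{Y_h = \{i\}} = h\, m_i + O(h^2)$ and $\Pr{\abs{Y_h} \ge 2} = O(h^2)$. Since $\phi(\emptyset) = \*m((1-h)*\mu) - \*m = O(h)$, the empty-set piece contributes $O(h^2)$; the $\abs{T}\ge 2$ piece is $O(h^2)$ from its prefactor. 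For $T=\{i\}$, using the elementary identity $\*m^i_j - m_j = \Sigma_{ij}/m_i$ (where $\*m^i_j := \Pr{j \in S \mid i \in S}$ with $S \sim \mu$), one obtains $\phi(\{i\}) = \Sigma_{:,i}/m_i + O(h)$. Summing,
\begin{align*}
\mathrm{Cov}\tp{\E{Y_1 \mid Y_h}} = \sum_i h\, m_i \cdot \frac{\Sigma_{:,i}\,\Sigma_{i,:}}{m_i^2} + O(h^2) = h \cdot \Sigma \Pi^{-1}\Sigma + o(h),
\end{align*}
using symmetry of $\Sigma$ to identify the sum with $\Sigma \Pi^{-1} \Sigma$, which is the desired equation.

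The main obstacle is essentially careful bookkeeping of the error terms: one must verify that $\*m((1-h)*\mu^T)$ is smooth in $h$ near $0$ with derivatives bounded uniformly in $T$, which is automatic because $\XX$ is finite, and that the contribution from $\abs{T}\ge 2$ is genuinely $O(h^2)$ despite $\phi$ not being bounded by a constant free of $n$. The pseudoinverse convention is harmless: any coordinate $i$ with $m_i = 0$ satisfies $\Sigma_{:,i} = 0$ and $\Pr{Y_h = \{i\}} = 0$, so that coordinate is dropped consistently on both sides.
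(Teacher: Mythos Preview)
Your proposal is correct and follows essentially the same approach as the paper: both apply the law of total covariance, expand the law of $Y_h$ to first order in $h$ (empty set, singletons, and negligible higher-order terms), and identify the singleton contributions with $\Sigma\Pi^{-1}\Sigma$ via the identity $\*m(\mu^{\{i\}})_j - m_j = \Sigma_{ij}/m_i$. Your bookkeeping is slightly more explicit (introducing $\phi$ and tracking the $\*1_T$ term separately), but the argument is the same.
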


\begin{proof}
By the total law of covariance, we have
\begin{align*}
    \Sigma = \mathrm{Cov}(Y_1) = \E[]{\mathrm{Cov}(Y_1 \mid Y_h)} + \mathrm{Cov}\tp{\E[]{Y_1 \mid Y_h}}.
\end{align*}
It suffices to show that $\mathrm{Cov}\tp{\E[]{Y_1 \mid Y_h}} = \Sigma \Pi^{-1} \Sigma \cdot h + o(h)$.
By definition, we have
\begin{align} \label{eq:Cov(E[Y1|Yt+h]|Yt)}
  \-{Cov}\tp{\E{Y_1 \mid Y_{h}}}
  &= \E{ (\E{Y_1 \mid Y_{h}} - \E{Y_1})^{\otimes 2}}.
\end{align}
Therefore, we can calculate the law of $Y_{h}$ as follows: for $x\supseteq y$,
\begin{align*}
  \Pr{Y_{h} = x}
  &= \sum_{z\supseteq x} \mu(z) h^{\abs{x}} (1-h)^{\abs{z}-\abs{x}} 
  = \begin{cases}
      o(h), &\abs{x} \geq 2;\\
      (h+o(h)) \; \*m(\mu)_i,  &x = \set{i}; \\
      1 - h \sum_i \*m(\mu)_i + o(h),  &x = \emptyset.
  \end{cases}
\end{align*}
Similarly, for every $y \in [n]$, we have $\E{Y_1 \mid Y_h = \set{y}} = \*m(\mu^{\set{y}}) + o(h)$, where we recall that $\mu^{\set{y}}$ denotes the distribution of $S \sim \mu$ conditioned on $y \in S$.
This implies
\begin{align*}
    \-{Cov}\tp{\E{Y_1 \mid Y_{h}}} 
    &= \sum_i h\; \*m(\mu)_i \tp{\*m(\mu^{\set{i}}) -\*m(\mu)}^{\otimes 2} + o(h).
\end{align*}
Hence, for every $j, k \in [n],$ we have
\begin{align*}
    \-{Cov}\tp{\E{Y_1 \mid Y_{h}}}_{ij} &= \sum_i \*m(\mu)_i (\*m(\mu^{\set{i}})_j - \*m(\mu)_j) (\*m(\mu^{\set{i}})_k - \*m(\mu)_k) + o(h)\\
    &= \sum_i \*m(\mu)_i^{-1} \-{Cov}(\mu)_{ji}\-{Cov}(\mu)_{ik} + o(h).
\end{align*}
Therefore, we conclude that
\begin{align*}
\mathrm{Cov}\tp{\E[]{Y_1 \mid Y_h}} 
    &=\Sigma \Pi^{-1} \Sigma \cdot h + o(h). \qedhere
\end{align*}
\end{proof}

We first outline the core idea in the proof of~\Cref{thm:cor-bound}. Let $\mu$ be a distribution over $\XX \subseteq \{0,1\}^n$. Define $f:[0,1] \to \mathbb{R}_{\ge 1}$ as follows:
\begin{align}\label{eq:def-f}
    \forall \lambda \in [0,1],\quad f(\lambda) = \begin{cases}
        \sup_{S \in \XX} \norm{ \diag\tp{\*m(\lambda * \mu^S)}^{-1} \Cov(\lambda * \mu^S)}_2 & \lambda > 0;\\
        1 & \lambda = 0.
    \end{cases}
\end{align}
We remark that $f(\lambda)$ is well-defined, as the $\ell_2$-norm of the correlation matrix has a crude bound~$n$. Furthermore, $f$ is continuous on the interval $[0,1]$,
since $f$ is the supremum of finitely many continuous functions and $\lim_{\lambda\to 0^+} f(\lambda) =1 $, see \cref{prop:ode-boundary,eq:small-lambda}.

Fix $\lambda \in (0,1]$ and a feasible subset $S \subseteq [n]$.
Let $(X_t)_{t \in [0,1]}$ and $(Y_t)_{t \in [0,1]}$ be the continuous-time down and up walks respectively with target distribution $\lambda * \mu^S$ where $X_0 = Y_1 \sim \lambda * \mu^S$.
Note that
the distribution $Y_1 \mid Y_h$ follows the law of $Y_h \cup Z$, where $Z \sim (1-h) \lambda *\mu^{S \cup Y_h}$. 
Hence, the covariance matrix $\mathrm{Cov}(Y_1 \mid Y_h)$ is obtained by augmenting $\mathrm{Cov}((1-h)\lambda * \mu^{S \cup Y_h})$ with additional rows and columns of zeroes, and the mean vector $\*m(Y_1 \mid Y_h)$ is obtained by appending ones to $\*m((1-h) \lambda * \mu^{S \cup Y_h})$.
By~\Cref{prop:trickle-down-eq}, it holds
\begin{align}\label{eq:ineq-Sigma}
    \nonumber \Sigma &= \E[]{\mathrm{Cov}(Y_1 \mid Y_h)} + \Sigma \Pi^{-1}\Sigma \cdot h + o(h)\\
    \nonumber(\text{by~\eqref{eq:def-f}})\quad&\preceq 
    f((1-h) \lambda) \cdot \E[]{\diag \tp{\*m(Y_1 \mid Y_h) - \*1_{Y_h}}} + \Sigma \Pi^{-1} \Sigma \cdot h + o(h)\\
    &\overset{(\star)}{=} (1-h) f((1-h)\lambda) \cdot \Pi + \Sigma \Pi^{-1} \Sigma \cdot h + o(h),
\end{align}
where $\Sigma = \mathrm{Cov}(Y_1) = \mathrm{Cov}(\lambda*\mu^S)$ denotes the covariance matrix, $\Pi=\diag\tp{\*m(\lambda *\mu^S)}$ denotes the diagonal matrix of mean vector, $\*1_{Y_h}$ denotes the indicator vector, and $(\star)$ follows from the law of total expectation and $\Pr[]{i \in Y_h} = h \cdot \Pr[]{i \in Y_1} = h \cdot \Pi_{i,i}$.

Without loss of generality, we assume $\Pi_{i,i} > 0$ for every $i$; otherwise, we can restrict to a smaller ground set and apply the same argument.
From~\eqref{eq:ineq-Sigma}, we have
\begin{align*}
  \Pi^{-1/2}\Sigma \Pi^{-1/2} \preceq (1-h) f((1-h)\lambda) \cdot I + h \cdot (\Pi^{-1/2}\Sigma \Pi^{-1/2})^2 + o(h),
\end{align*}
Let $S$ be the set that achieves supremum in \eqref{eq:def-f}, then
$f(\lambda) = \norm{\Pi^{-1/2} \Sigma}_2 = \norm{\Pi^{-1/2}\Sigma \Pi^{-1/2}}_2$.
Taking $2$-norm at both sides and using its sub-additivity, we have the following constraint on $f$:
\begin{align}\label{eq:f-derivative}
    f(\lambda) \le (1-h) f((1-h)\lambda) + f(\lambda)^2 \cdot h + o(h).
\end{align}
Therefore, for any $\lambda \in (0,1]$, the Dini derivative of $f$ satisfies:
\begin{align}\label{eq:ode-main}
    D^{-}f(\lambda) := \limsup_{h \to 0^+} \frac{f(\lambda) - f((1-h)\lambda)}{h\lambda} \overset{\eqref{eq:f-derivative}}{\le} \limsup_{h \to 0^+} \frac{f(\lambda)^2 - f((1-h)\lambda)}{\lambda} = \frac{f(\lambda)^2-f(\lambda)}{\lambda},
\end{align}
where the last equality follows from the continuity of $f$. 

We now assert that the assumption~\eqref{eq:weak-assump-trickle} characterizes the Dini derivative of $f$ at origin.
\begin{proposition}\label{prop:ode-boundary}
    If the support $\XX$ of $\mu$ is non-empty and downward closed,
    then~\eqref{eq:weak-assump-trickle} implies
    \begin{align}\label{eq:ode-boundary}
    D^+f(0) = \limsup_{\lambda \to 0^+} \frac{f(\lambda) - 1}{\lambda} \le 1-\delta.
\end{align}
\end{proposition}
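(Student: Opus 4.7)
The plan is to Taylor expand the symmetric correlation matrix of $\nu_\lambda := \lambda * \mu^S$ around $\lambda = 0$ and read off its first-order coefficient in terms of $M_S$ and $\*r_S$, so that the hypothesis~\eqref{eq:weak-assump-trickle} controls precisely what we need.

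First I would expand the generating polynomial $g_{\mu^S}(\lambda \cdot \*1) = \mu^S(\emptyset) + \lambda \sum_{i \in V_S} \mu^S(\{i\}) + O(\lambda^2)$ and normalize. Fixing a non-maximal $S \in \XX$ and distinct $i,j \in V_S$, this gives
\begin{align*}
\*m(\nu_\lambda)_i &= \lambda\,\*r_S(i) + O(\lambda^2), \\
\Cov(\nu_\lambda)_{ii} &= \*m(\nu_\lambda)_i\bigl(1 - \*m(\nu_\lambda)_i\bigr) = \lambda\,\*r_S(i) + O(\lambda^2), \\
\Cov(\nu_\lambda)_{ij} &= \lambda^2 \frac{\mu^S(\{i,j\})}{\mu^S(\emptyset)} - \*m(\nu_\lambda)_i\*m(\nu_\lambda)_j + O(\lambda^3) = \lambda^2\,\*r_S(i)\*r_S(j)\,M_S(i,j) + O(\lambda^3),
\end{align*}
where the last equality uses the identity $\mu^S(\{i,j\})/\mu^S(\emptyset) = \*r_S(i)\*r_S(j)(1 + M_S(i,j))$, an immediate rewrite of the definition of $M_S$.

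Writing $R := \diag(\*r_S)$ and $A_S(\lambda) := \diag(\*m(\nu_\lambda))^{-1/2}\Cov(\nu_\lambda)\diag(\*m(\nu_\lambda))^{-1/2}$, the expansions above simplify to
\begin{align*}
A_S(\lambda) = I + \lambda\bigl(R^{1/2} M_S R^{1/2} - R\bigr) + O(\lambda^2).
\end{align*}
Conjugating the hypothesis $M_S \preceq I + (1-\delta)R^{-1}$ by $R^{1/2}$ yields $R^{1/2} M_S R^{1/2} \preceq R + (1-\delta)I$, so $R^{1/2}M_S R^{1/2} - R \preceq (1-\delta)I$, and therefore $A_S(\lambda) \preceq (1 + (1-\delta)\lambda) I + O(\lambda^2)$. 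Since $A_S(\lambda)$ is symmetric PSD and is similar to $\diag(\*m(\nu_\lambda))^{-1}\Cov(\nu_\lambda)$, the two have the same spectral radius, so the contribution of $S$ to $f(\lambda)$ is at most $1 + (1-\delta)\lambda + O(\lambda^2)$.

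Because $\XX$ is a finite family over $[n]$, the implicit $O(\lambda^2)$ constant is uniform over $S$, so $f(\lambda) \le 1 + (1-\delta)\lambda + O(\lambda^2)$ in a right-neighborhood of $0$, and taking $\limsup_{\lambda \to 0^+}$ yields $D^+ f(0) \le 1-\delta$. The main bookkeeping concerns, neither serious, are: (i) handling coordinates $i$ for which $\*m(\nu_\lambda)_i$ is zero at higher order, which is resolved by restricting the ambient coordinate set to $V_S$ and using the pseudoinverse convention from~\Cref{prop:trickle-down-eq}; and (ii) reconciling the $\|\cdot\|_2$ in the definition of $f$ with the spectral radius of $A_S(\lambda)$, which is legitimate because Step 2 gives pointwise PSD control of the symmetrized matrix, and the unsymmetrized and symmetrized correlation matrices are similar. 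The heart of the argument is the one-line manipulation of conjugating the hypothesis by $R^{1/2}$.
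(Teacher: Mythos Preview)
Your proposal is correct and follows essentially the same approach as the paper: both Taylor-expand the mean vector and covariance of $\lambda*\mu^S$ to second order in $\lambda$, identify the first-order coefficient of the (symmetrized) correlation matrix as $R^{1/2}M_S R^{1/2}-R$ (equivalently, the paper writes $\Cov-\diag(\*m)=\lambda^2 R(M_S-I)R+o(\lambda^2)$), and then conjugate the hypothesis $M_S\preceq I+(1-\delta)R^{-1}$ by $R^{1/2}$ (resp.\ sandwich by $R$) to obtain the bound $1+(1-\delta)\lambda+o(\lambda)$ uniformly in $S$. Your explicit remark reconciling the operator $2$-norm in the definition of $f$ with the spectral radius of the symmetrized matrix is a welcome clarification of a step the paper leaves implicit.
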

The proof of~\Cref{prop:ode-boundary} is deferred to the end of this section. 

Constraints~\eqref{eq:ode-main},~\eqref{eq:ode-boundary}, together with trivial restrictions on $f$, form the following differential inequalities:
\begin{align}\label{eq:dif-ineq}
\begin{cases}
D^{-}f(\lambda) \le \frac{f(\lambda)^2 - f(\lambda)}{\lambda}, & \forall \lambda \in (0,1]; \\
D^+ f(0) \le 1-\delta;\\
f(0) = 1 \text{ and }
f(\lambda) > 0, & \forall \lambda \in [0,1].
\end{cases}
\end{align}
As will be shown soon, the solution to the differential inequalities~\eqref{eq:dif-ineq} must satisfy 
\begin{align}\label{eq:f-bound}
    f(\lambda) \le \frac{1}{1-(1-\delta) \lambda}.
\end{align}
We remark that solving the differential inequalities~\eqref{eq:dif-ineq} can be viewed as a modified version of the Gr\"onwall's inequality, and our proof approach can also be understood as a continuous analog of the original trickle-down argument for discrete-time down and up walks \cite{Opp18,anari2024trickle}.

\begin{proof}[Proof of~\Cref{thm:cor-bound}]
From our arguments above, it remains to solve the differential inequalities~\eqref{eq:dif-ineq} and show \eqref{eq:f-bound}.
Let $\alpha = 1-\delta$ and $g:(0,1] \to \mathbb{R}$ be a continuous function satisfying 
\begin{align*}
    \forall \lambda \in (0,1],\quad  f(\lambda) = \frac{1}{1-\alpha \lambda g(\lambda)}.
\end{align*}
By a straightforward calculation, the Dini derivative of $f$ satisfies
\begin{align*}
    D^- f(\lambda) = \frac{\alpha g(\lambda) + \alpha \lambda \cdot D^- g(\lambda)  }{\tp{1-\alpha \lambda g(\lambda) }^2} = \frac{f(\lambda)^2-f(\lambda)}{\lambda} + \frac{\alpha \lambda \cdot D^{-} g(\lambda)}{(1-\alpha \lambda g(\lambda))^2}.
\end{align*}
Together with~\eqref{eq:ode-main}, it implies $D^- g(\lambda) \le 0$ for all $\lambda \in (0,1]$. By~\cite[Remark 2.1]{szarski1965}, this implies $g$ is monotone decreasing on $(0,1]$. Furthermore, the boundary condition $D^+ f(0) \le \alpha$ reads
\begin{align*}
D^+f(0) = \limsup_{\lambda \to 0^+} \frac{\alpha g(\lambda)}{1-\alpha \lambda \cdot g(\lambda)} \le \alpha. 
\end{align*}
By monotonicity, $g$ can be extended to $[0,1]$ with $g(0):=\lim_{\lambda \to 0^+} g(\lambda)  = \limsup_{\lambda \to 0^+} g(\lambda) \le 1$. This proves $g(\lambda) \le 1$ for all $\lambda \in [0,1]$, thus proving $f(\lambda) \le \frac{1}{1-\alpha \lambda}$. We conclude the proof by plugging back $\alpha = 1-\delta$. 
\end{proof}


\begin{proof}[Proof of~\Cref{prop:ode-boundary}]
    It suffices to show that for all non-maximal $S \in \XX$ and sufficiently small $\lambda > 0$, it holds
    \begin{align*}
        \mathrm{Cov}(\lambda*\mu^S) \preceq (1+(1-\delta) \lambda + o(\lambda)) \cdot \diag\tp{\*m(\lambda * \mu^S)}.
    \end{align*}
    By a straightforward calculation, the matrix $\mathrm{Cov}(\lambda * \mu^S) - \diag\tp{\*m(\lambda * \mu^S)}$ satisfies, for $i,j \in V_S$,
    \begin{align*}
        \tp{\mathrm{Cov}(\lambda*\mu^S) - \diag\tp{\*m(\lambda * \mu^S)}}(i,j) 
        &=  \begin{cases}
            \tp{\frac{\mu(S \cup \{i,j\})}{\mu(S)} - \frac{\mu(S \cup \{i\}) \mu(S \cup \{j\})}{\mu(S)^2}} \cdot \lambda^2 + o(\lambda^2), & i \neq j\\
            -\tp{\frac{\mu(S \cup \{i\})}{\mu(S)}}^2 \cdot \lambda^2 + o(\lambda^2), & i = j
        \end{cases}\\
        &= \tp{\diag(\*r_S) (M_S - I) \diag(\*r_S)}(i,j) \cdot \lambda^2 + o(\lambda^2).
    \end{align*}
    Similarly, for every $i$, it holds that
    \begin{align*}
      \*m(\lambda * \mu^S)_i = \frac{\mu(S\cup \set{i})}{\mu(S)} \lambda + o(\lambda) = \lambda \cdot \*r_S(i) + o(\lambda).
    \end{align*}
    Therefore, we have
    \begin{align}\label{eq:small-lambda}
        \nonumber \mathrm{Cov}(\lambda*\mu^S) - \diag\tp{\*m(\lambda * \mu^S)} &= \diag(\*r_S) (M_S-I) \diag(\*r_S) \cdot \lambda^2 + o(\lambda^2)\\
        \nonumber &\preceq (1-\delta) \diag(\*r_S) \cdot \lambda^2 + o(\lambda^2)\\
        &= (1 - \delta) \lambda \cdot \diag\tp{\*m(\lambda * \mu^S)} + o(\lambda^2).
    \end{align}
    We finish the proof of~\Cref{prop:ode-boundary} by noticing that $o(\lambda^2) = o(\lambda) \cdot \diag\tp{\*m(\lambda * \mu^S)}$.
\end{proof}

\subsection{Comparison between field dynamics and Glauber dynamics}
\label{subsec:comparison}

In this section, we prove~\Cref{lem:limit-field}.
The proof relies on the observation that the field dynamics $P^{\mathrm{FD}}_{1 \leftrightarrow 1-\epsilon}$, when run every $\epsilon$ time, converges roughly to the continuous-time Glauber dynamics $P$ as $\epsilon$ approaches zero.
More precisely, if we run the field dynamics $P^{\mathrm{FD}}_{1 \leftrightarrow 1-\epsilon}$ in continuous time and wait for an exponentially distributed random amount of time with mean $\eps$ between two adjacent runs, then the process will converge to some variant of the continuous-time Glauber dynamics.
In other words, one may view the field dynamics as a discretization of the continuous-time Glauber dynamics.
Therefore, we can naturally relate key quantities associated with the field dynamics to those for the Glauber dynamics, especially the generator and the Dirichlet form, allowing us to establish the comparison result \Cref{lem:limit-field}.

\begin{proof}[Proof of~\Cref{lem:limit-field}]
Note that the transition probability of field dynamics satisfies
\begin{align}\label{eq:transition-prob}
    \nonumber\forall S\neq T \in \XX, \quad (P^{\mathrm{FD}}_{1 \leftrightarrow 1-\epsilon} - I)(S,T) &= 
    \begin{cases}
        \epsilon + o(\epsilon) & S = T \cup \{u\} \text{ for some $u \in [n] \setminus T$}\\  
        \frac{\mu(T)}{\mu(S)} \cdot \epsilon + o(\epsilon) &  T = S \cup \{u\} \text{ for some $u \in [n] \setminus S$}\\
        o(\epsilon) & \text{otherwise}
    \end{cases}\\
    \nonumber &= \frac{\mu(S)+\mu(T)}{\mu(S \cap T)} n \epsilon \cdot (P-I)(S,T) + o(\epsilon)\\
    &\le (1+r_{\max}) n \epsilon \cdot (P-I)(S,T) + o(\epsilon).
\end{align}
Hence, for any $f \in \mathbb{R}^\Omega$, the Dirichlet form of the Glauber dynamics $P$ on $\mu$ satisfies
\begin{align*}
\+E_P(f,f) &= \inner{f}{(I-P)f}_{\mu} =
\frac{1}{2} \sum_{S,T \in \Omega} \mu(S) P(S,T) \tp{f(S) - f(T)}^2 \\
&= \frac{1}{2} \sum_{S,T \in \Omega} \mu(S) (P-I)(S,T) \tp{f(S) - f(T)}^2\\
&\overset{\eqref{eq:transition-prob}}{\ge} \frac{1}{2(1+r_{\max})n\epsilon} \sum_{S,T \in \Omega} \mu(S) (P^{\mathrm{FD}}_{1 \leftrightarrow 1-\epsilon} - I)(S,T) (f(S)-f(T))^2 + o(1)\\
&= \frac{1}{(1+r_{\max}) n\epsilon} \+E_{P^{\mathrm{FD}}_{1 \leftrightarrow 1-\epsilon}}(f,f) + o(1) \ge \frac{\delta(\epsilon)}{(1+r_{\max}) n \epsilon} \cdot \Var[\mu]{f} + o(1).
\end{align*}
By setting $\epsilon \to 0^+$, we have
\begin{align*}
    \forall f \in \mathbb{R}^\Omega,\quad \+E_P(f,f) \ge \frac{\delta^\star}{(1+r_{\max})n} \cdot \Var[\mu]{f}.
\end{align*}
This proves the Poincar\'e inequality with constant $\frac{\delta^\star}{(1+r_{\max})n}$. The proof for the modified log-Sobolev inequality follows from an identical argument.
\end{proof}

\section{Applications}

\subsection{Random cluster model}
\label{subsec:RC}
Let $\+M=([n],\+I)$ be a matroid on ground set $[n]$ with the collection of independent sets $\+I = \+I(M)$. The rank function $\mathrm{rk}:2^{[n]} \to \mathbb{Z}$ takes a subset $S \subseteq [n]$ as input, and output the maximum size of independent set $I \in \+I$ contained in $S$. Fix a constant $q > 0$ and weight $\lambda > 0$. The distribution $\mu_{\+M,q,\lambda}$ of the random cluster model is given by
\begin{align*}
\forall S \subseteq [n], \quad \mu_{\+M,q,\lambda}(S) \propto q^{- \mathrm{rk}(S)} \lambda^{\abs{S}}. 
\end{align*}
By taking $\lambda^\star = q \lambda$ and letting $q \to 0^+$, the distribution $\mu_{\+M,q,\lambda^\star}$ converges to the distribution $\mu_{\+I,\lambda}$ of $\lambda$-weighted independent sets. Specifically, the distribution $\mu_{\+I,\lambda}$ is supported over $\+I$, and satisfies
\begin{align*}
\forall I \in \+I, \quad \mu_{\+I,\lambda}(I) \propto {\lambda}^{\abs{I}}.
\end{align*}

In recent developments of the polynomial paradigm~\cite{branden2020lorentzian,anari2019logconcaveII}, the authors established the log-concavity for the homogenized generating polynomial of random cluster model when $q \le 1$. This polynomial is defined as
\begin{align*}
    Z_{\+M}(z_0,z_1,\ldots,z_n) = \sum_{A \subseteq [n]} q^{-\mathrm{rk}(A)} \cdot z_0^{n-\abs{A}} \cdot \prod_{i \in A} z_i.
\end{align*}
As a result, an optimal $O(n \log n)$ mixing time bound for the \emph{polarized walk} (i.e. the down-up walk for the polarized polynomial \cite{CGZZ24,anari2024trickle}) on the random cluster model has been derived when $q \le 1$~\cite{anari2019logconcaveII, cryan2019modified}. However, the tight mixing time bound for the Glauber dynamics remains unclear. There exist two partial responses for this question: Mousa showed an $O_{\lambda,q}(n^2 \log n)$ mixing time bound~\cite{mousa2022local}, and Liu derived an $O(n^{1+1/\sqrt{q}} \log n)$ mixing time bound~\cite{liu2023spectral}. 

In the following theorem, we address this problem by showing an $O_{q,\lambda}(n \log n)$ mixing time bound. As a side product, we also prove the log-concavity of the generating polynomial of $\mu_{\+M,q,\lambda}$ by \cref{prop:log-concave}, which is inhomogeneous.

\begin{theorem}\label{thm:random-cluster}
    Let $\+M=([n],\+I)$ be a matroid with  the collection of independent sets $\+I$.
    \begin{enumerate}

      \item  Let $\mu_{\+M,q,\lambda}$ be the distribution of random cluster model with $q \in (0,1]$ and $\lambda > 0$.
       The Glauber dynamics on $\mu_{\+M,q,\lambda}$ satisfies the modified log-Sobolev inequality with constant 
       \begin{align*}
           \frac{1}{n} \cdot \max \left \{\frac{1}{1+q^{-1}\lambda} , \frac{1}{1+\lambda^{-1}} \right \},
       \end{align*}
       implying an $O_{q,\lambda}(n \log n )$ mixing time.
       
       \item Let $\mu_{\+I,\lambda}$ be the distribution of $\lambda$-weighted independent sets. The Glauber dynamics on $\mu_{\+I,\lambda}$ satisfies the modified log-Sobolev inequality with constant $\frac{1}{n(1+\lambda)}$, implying an $O_\lambda(n (\log r + \log \log n))$
       mixing time, where $r$ is the rank of matroid $\+M$.
    \end{enumerate}
\end{theorem}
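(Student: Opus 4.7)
The plan is to apply \cref{thm:main-0} in each case, verifying the key hypothesis $M_S \preceq I$ on the pairwise dependency matrices, computing the maximum marginal ratio $r_{\max}$, and then converting the resulting MLSI constant $\rho_0 \ge \frac{1}{(1+r_{\max})n}$ into a mixing time via the standard inequality $T_{\mathrm{mix}} \le \frac{1}{\rho_0}(\log\log(1/\mu_{\min}) + 4)$.

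\textbf{Part 1 (random cluster $\mu_{\mathcal{M},q,\lambda}$).} The support is $2^{[n]}$, which is trivially downward closed, so non-maximality of $S$ just means $S \ne [n]$. To verify $M_S \preceq I$, I would invoke \cref{prop:log-concave}: the multivariate random cluster polynomial
\[
Z_{\mathcal{M}}(z_0, z_1, \dots, z_n) = \sum_{A \subseteq [n]} q^{-\mathrm{rk}(A)}\, z_0^{n-|A|} \prod_{i \in A} z_i
\]
is Lorentzian for every $q \in (0,1]$ by \cite{branden2020lorentzian, anari2019logconcaveII}, and Lorentzianness is preserved both by dehomogenizing ($z_0 = 1$) and by the affine scaling $z_i \mapsto \lambda z_i$, whose result is $g_{\mu_{\mathcal{M},q,\lambda}}$ up to a constant. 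Since every partial derivative of a Lorentzian polynomial is again Lorentzian, and hence log-concave on $\mathbb{R}_{>0}^n$, the generating polynomial $g_{\mu_{\mathcal{M},q,\lambda}}$ is strongly log-concave and \cref{prop:log-concave} yields $M_S \preceq I$ for every non-maximal $S$. A direct calculation gives
\[
\*r_S(i) = \frac{\mu(S \cup \{i\})}{\mu(S)} = q^{\mathrm{rk}(S) - \mathrm{rk}(S \cup \{i\})}\, \lambda \in \{\lambda,\, q^{-1}\lambda\},
\]
so $r_{\max} = q^{-1}\lambda$ and \cref{thm:main-0} gives $\rho_0 \ge \frac{1}{n(1 + q^{-1}\lambda)}$.

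To obtain the second term in the maximum, I would exploit matroid duality. Using $\mathrm{rk}^*(T) = |T| + \mathrm{rk}([n] \setminus T) - \mathrm{rk}(\mathcal{M})$, a short substitution shows that the pushforward of $\mu_{\mathcal{M},q,\lambda}$ under the involution $S \mapsto [n] \setminus S$ is exactly $\mu_{\mathcal{M}^*,\, q,\, q/\lambda}$. Since Glauber dynamics commutes with global complementation, the two MLSI constants coincide, and applying the previous bound to the dual distribution gives $r_{\max} = q^{-1}\cdot (q/\lambda) = \lambda^{-1}$ and hence $\rho_0 \ge \frac{1}{n(1+\lambda^{-1})}$. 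Taking the maximum yields the claimed bound. Finally, since $\mu_{\min} \ge \exp(-O_{q,\lambda}(n))$, we have $\log\log(1/\mu_{\min}) = O(\log n)$, giving $T_{\mathrm{mix}} = O_{q,\lambda}(n \log n)$.

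\textbf{Part 2 (independent sets $\mu_{\mathcal{I},\lambda}$).} This is precisely the independent-sets-of-matroid instance already treated in \cref{subsec:applications}, where $M_S = I - (J_1 \oplus \cdots \oplus J_k) \preceq I$ because the dependency graph of the rank-2 truncation of $\mathcal{M}/S$ is a disjoint union of cliques. The marginal ratios are $\*r_S(i) = \lambda$, so $r_{\max} = \lambda$ and \cref{thm:main-0} gives MLSI with constant $\frac{1}{n(1+\lambda)}$. The rank bound $|\mathcal{I}| \le (n+1)^r$ together with $\mu(I) \propto \lambda^{|I|}$ yields $\log(1/\mu_{\min}) = O_\lambda(r\log n)$, hence $\log\log(1/\mu_{\min}) = O_\lambda(\log r + \log\log n)$, producing $T_{\mathrm{mix}} = O_\lambda(n(\log r + \log\log n))$.

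\textbf{Main obstacle.} The principal difficulty is the matrix inequality $M_S \preceq I$ for the random cluster model. Submodularity of $\mathrm{rk}$ alone yields only nonpositive off-diagonal entries of $M_S$, which is far from sufficient; we genuinely need the stronger Lorentzian (equivalently, strongly log-concave) property of $Z_{\mathcal{M}}$, and must carry it through dehomogenization and reparameterization. An alternative via Hodge-type inequalities for matroid rank functions appears to invoke the same machinery, so the cleanest route is to black-box the Brändén--Huh result and pass through \cref{prop:log-concave}.
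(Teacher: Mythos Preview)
Your proposal is correct, but it takes a different route from the paper for Part~1. The paper does \emph{not} invoke the Lorentzian property of $Z_{\mathcal M}$ at all; instead it computes $M_S$ directly for the random cluster model. Using submodularity of $\mathrm{rk}$, one gets
\[
M_S(i,j)+1 = q^{\,\mathrm{rk}(S\cup\{i\})+\mathrm{rk}(S\cup\{j\})-\mathrm{rk}(S)-\mathrm{rk}(S\cup\{i,j\})},
\]
and the exponent is $0$ unless $i,j$ are both non-loops and parallel in $\mathcal M/S$, in which case it is $1$. Since the parallel relation on non-loops of $\mathcal M/S$ is an equivalence relation, $M_S$ is (up to zero rows/columns for loops) the matrix $(q-1)(J_1\oplus\cdots\oplus J_k - I)$, and $I - M_S = qI + (1-q)(J_1\oplus\cdots\oplus J_k)\succeq 0$ follows immediately from $q\in(0,1]$. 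This elementary combinatorial argument avoids the deep Hodge-theoretic input behind Br\"and\'en--Huh entirely, which is somewhat the point: the paper's criterion is advertised as ``simple and easy to check'', and the random cluster example illustrates that one can verify $M_S\preceq I$ by hand. Your route via \cref{prop:log-concave} is logically fine and arguably quicker to write, but it buys the key inequality with heavier machinery that the paper deliberately sidesteps.

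The duality step (pushing forward under $S\mapsto[n]\setminus S$ to $\mu_{\mathcal M^*,q,q/\lambda}$) and the $r_{\max}$ computations are identical in both proofs. For Part~2, the paper obtains the result as the $q\to 0^+$ limit of Part~1, whereas you re-do the direct $M_S$ computation from the matroid example in \cref{subsec:applications}; either works. Your mixing-time bookkeeping via $|\mathcal I|\le (n+1)^r$ is also fine.
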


It is interesting to see if one could improve the mixing time bound $O(n (\log r + \log \log n))$ for weighted independent sets to $O(n \log r)$, matching the result in \cite{anari2021logconcaveIV} for sampling bases of a matroid. Here the dependency on $\log \log n$ comes from $\log \log (1/\mu_{\min})$.

\begin{proof}[Proof of \cref{thm:random-cluster}]
Fix any $S \subsetneq [n]$ and let $\+M / S$ denote the matroid contraction. It holds that
\begin{align*}
    \forall i,j \in V_S,\quad M_S(i,j) = 
    \begin{cases}
    0, & i = j;\\
    q-1, & \{i\},\{j\} \in \+I(\+M/S) \text{ and } \{i,j\} \not\in \+I(M/S);\\
    0, & \text{otherwise.}
    \end{cases}
\end{align*}
According to standard matroid theory, $\set{i,j}\not\in \+I(M/S)$ gives an equivalence relation $\simeq$ between singletons in $\+I(M/S)$. 
Therefore, $M_S$ can be represented as
    \begin{align*}
        M_S = I + (q-1)
        \begin{pmatrix}
            J_1 & & & \\
            & J_2 & & \\
            & & \ddots & \\
            & & & J_k
        \end{pmatrix} \preceq I,
    \end{align*}
    where the matrices $J_1, \dots, J_k$ are all-one matrices of potentially distinct sizes, corresponding to equivalence classes defined by the equivalence relation $\simeq$. 
    By~\Cref{thm:main-0} and~\Cref{lem:limit-field}, the Glauber dynamics $P$ satisfies the modified log-Sobolev inequality with constant
    \begin{align*}
        \rho_0(P) \ge \frac{1}{(1+r_{\max}) n} = \frac{1}{(1+q^{-1}\lambda)n}. 
    \end{align*}
    Note that the distribution $\mu_{\+M,q,\lambda}$ random cluster model can be treated as a distribution of random cluster model on its dual matroid $\overline{\+M}$ with a different choice of parameters.
    In particular, $\ol{\+M}$ is another matroid on the same ground set $[n]$ with the rank function $\ol{\-{rk}}: 2^{[n]} \to \=N$ satisfying
    \begin{align*}
      \forall S \subseteq [n], \quad \ol{\-{rk}}(S) = \abs{S} + \-{rk}([n]\setminus S) - \-{rk}([n]).
    \end{align*}
    Hence, let $S \sim \mu_{\+M,q,\lambda}$, the sample $[n]\setminus S$ follows the law of $\mu_{\overline{\+M},q,q/\lambda}$, which means $\mu_{\+M,q,\lambda}$ and  $\mu_{\overline{\+M},q,q/\lambda}$ are essentially the same distribution.
    This implies that the Glauber dynamics $P$ also satisfies the modified log-Sobolev inequality with constant
    \begin{align*}
        \rho_0(P) \ge \frac{1}{(1+r_{\max}) n} = \frac{1}{(1+\lambda^{-1})n}. 
    \end{align*}
    This proves the first part of~\Cref{thm:random-cluster}. The modified log-Sobolev constant and mixing time upper bound of Glauber dynamics for the distribution $\mu_{\+I}$ follows from the fact that $\mu_{\+M,q,q\lambda}$ converges to $\mu_{\+I}$ when $q$ approaches zero.
\end{proof}

\subsection{Antiferromagnetic two-spin system}
\label{subsec:anti-2-spin}
Given a graph $G = (V,E)$ of $n$ vertices, the two-spin system on $G$ is defined as a distribution $\mu_G$ over $2^n$ configurations, which are assignments of spins $0,1$ to the vertices.
Formally, it is defined by two edge parameters $\beta\geq 0, \gamma > 0$ and one vertex parameter $\lambda > 0$. The parameter $\beta$ corresponds to the strength of interaction between neighboring $1$-spins, $\gamma$ corresponds to the strength of interaction between neighboring $0$-spins, and $\lambda$ is the external field assigned to each $1$-spin.
The weight of each configuration $\sigma \in \{0,1\}^V$ is assigned as follows:
$$
    w_{G}(\sigma) := \beta^{m_1(\sigma)}\gamma^{m_0(\sigma)}\lambda^{n_1(\sigma)},
$$
where $m_s(\sigma) = |\{(u,v)\in E\mid \sigma(u) = \sigma(v) = s\}|$ and $n_s(\sigma) = |\{v\in V\mid \sigma(v) = s\}|$ for $s=0,1$.
The Gibbs distribution $\mu_G$ and the partition function $Z_G$ is defined as
$$
    \mu_G(\sigma) := \frac{w_G(\sigma)}{Z_G},\quad \text{where } Z_G := \sum_{\sigma\in \{0,1\}^V} w_G(\sigma).
$$
We may treat $\mu_G$ as a distribution over subsets of $V$, and $\gamma > 0$ implies that $\mu_G$ is indeed a distribution over a downward closed set family.
When $\beta\gamma > 1$, the model is called \emph{ferromagnetic} and it is \emph{antiferromagnetic} if $\beta\gamma < 1$.
We mention two most well-studied examples:
When $\beta = 0, \gamma = 1$, we obtain the hardcore model with fugacity $\lambda$; when $\beta = \gamma < 1$, we obtain the antiferromagnetic Ising model with inverse temperature $\beta < 1$ and external field $\lambda$.

In the last decades, a fascinating computational phase transition phenomenon was established regarding sampling from antiferromagnetic two-spin systems. 
Take the hardcore model as an example. The critical threshold $\lambda_c(\Delta) \approx \frac{\e}{\Delta}$ characterizes the uniqueness/non-uniqueness of the Gibbs measure on the infinite $\Delta$-regular tree. 
It was shown that, for graphs of maximum degree $\Delta$, a polynomial-time sampler exists when $\lambda < \lambda_c(\Delta)$~\cite{weitz2006counting}, and there is no polynomial-time sampler when $\lambda > \lambda_c(\Delta)$ assuming $\textsf{NP}\neq \textsf{RP}$~\cite{sly2010computational}.
In a recent line of studies, it was further established that the Glauber dynamics exhibits an optimal $O(n \log n)$ mixing time when $\lambda$ is within the uniqueness regime, i.e., $\lambda < \lambda_c(\Delta)$~\cite{ALO2021spectral,chen2021optimal,chen2022localization,CFYZ22optimal}. 
Similar results were also derived for general antiferromagnetic two-spin systems~\cite{LLY13,sly2014counting,galanis2016inapproximability,chen2021optimal,CFYZ22optimal}. 
    
Despite the worst-case hardness results, Glauber dynamics is observed to perform well on random instances beyond the uniqueness threshold.
For example, Glauber dynamics for the antiferromagnetic Ising model on random $\Delta$-regular graphs mixes optimally with high probability when the graph is sampled uniformly at random and the inverse temperature satisfies $1 - \beta \lesssim \frac{1}{2\sqrt{\Delta}} $, much larger than the uniqueness regime $1 - \beta \le \frac{2}{\Delta}$~\cite{anari2024trickle}. 

In the following theorem, we establish a new criterion related to the minimum eigenvalue of the underlying graph for rapid mixing of Glauber dynamics on general antiferromagnetic two-spin systems. 
For the hardcore model this gives \cref{thm:HC-main-lambda-min} from the introduction. 
As a corollary, we prove the rapid mixing of Glauber dynamics for the hardcore model and the Ising model on random regular graphs beyond the uniqueness threshold.

\begin{theorem}\label{thm:mixing_antifero_2spin}
Let $\mu$ be the Gibbs distribution of an antiferromagnetic two-spin system specified by parameters $(\beta,\gamma,\lambda)$ on a $\Delta$-regular graph $G=(V,E)$ of $n$ vertices.
Furthermore, let $-\lambda^\star$ be the minimum eigenvalue of $A_G$, the adjacency matrix of $G$.
\begin{enumerate}  
    \item If $1 - \beta\gamma \leq \frac 1{\lambda^\star}$, then the modified log-Sobolev inequality for Glauber dynamics $P_{\mathrm{GD}}$ holds with constant
    \begin{align*}
    \rho_0(P_{\mathrm{GD}}) \ge \frac{1}{(1+\lambda/\gamma^\Delta) n},
    \end{align*}
    implying an $O(n \log n)$ mixing time.

    \item If $1 - \beta\gamma > \frac 1{\lambda^\star}$ and $\lambda \leq \frac{(1-\delta)\gamma^\Delta}{(1-\beta\gamma)\lambda^\star - 1}$ for some $\delta \in (0,1)$, then the Poincar\'e inequality for Glauber dynamics $P_{\mathrm{GD}}$ holds with constant
    \begin{align*}
    \lambda(P_{\mathrm{GD}}) \ge \frac{\delta}{(1+\lambda/\gamma^\Delta) n},
    \end{align*}
    implying an $O(n^2/\delta)$ mixing time.
\end{enumerate}
\end{theorem}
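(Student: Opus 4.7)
The plan is to reduce to the framework of \cref{thm:main-0,thm:main-delta} by computing the pairwise dependency matrix $M_S$ and the maximum marginal ratio $r_{\max}$ for the antiferromagnetic two-spin system, viewed as a distribution supported on the downward closed family $\XX \subseteq 2^V$ (where we identify a configuration $\sigma \in \{0,1\}^V$ with $S = \{v : \sigma(v) = 1\}$; $\XX = 2^V$ when $\beta > 0$ and $\XX = \II(G)$ when $\beta = 0$). The strategy closely parallels the hardcore example in \Cref{subsec:applications}, but now with a nontrivial $\beta,\gamma$ contributing a uniform scalar factor to $M_S$ and a $\gamma^\Delta$ factor in $r_{\max}$.

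The marginal ratio is computed by tracking how $m_1$ and $m_0$ change when flipping $v$ to spin-$1$: since $G$ is $\Delta$-regular and each of the $|N(v) \cap S|$ neighbors converts a $10$-edge to a $11$-edge while each of the remaining $\Delta - |N(v)\cap S|$ neighbors converts a $00$-edge to a $10$-edge, one obtains
\begin{align*}
    \*r_S(v) \;=\; \frac{\mu(S \cup \{v\})}{\mu(S)} \;=\; \frac{\lambda}{\gamma^\Delta} \cdot (\beta\gamma)^{|N(v) \cap S|}.
\end{align*}
Since $\beta\gamma < 1$, this is maximized at $|N(v) \cap S| = 0$, giving $r_{\max} \le \lambda/\gamma^\Delta$. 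For $M_S$, I would decompose $\mu(S \cup \{u,v\})/\mu(S)$ into two sequential additions; the only discrepancy from independence is the edge $(u,v)$ itself, yielding $M_S(u,v) + 1 = \beta\gamma$ if $uv \in E$ and $= 1$ otherwise. In matrix form,
\begin{align*}
    M_S \;=\; -(1-\beta\gamma) \, A_{G[V_S]},
\end{align*}
where $G[V_S]$ is an induced subgraph of $G$ (this is true whether $V_S = V \setminus S$ or, in the hardcore case, $V_S = V \setminus N[S]$). By Cauchy interlacing, $\lambda_{\min}(A_{G[V_S]}) \ge \lambda_{\min}(A_G) = -\lambda^\star$, hence $\lambda_{\max}(M_S) \le (1-\beta\gamma) \lambda^\star$.

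The two cases now follow immediately. Under the hypothesis of Case~1, $(1-\beta\gamma)\lambda^\star \le 1$ implies $M_S \preceq I$, so \cref{thm:main-0} applies and yields modified log-Sobolev constant $\rho_0 \ge 1/((1+r_{\max})n) \ge 1/((1+\lambda/\gamma^\Delta)n)$; the stated $O(n \log n)$ mixing time follows from the standard estimate $\log\log(1/\mu_{\min}) = O(\log n)$. For Case~2, using the scalar lower bound $\diag(\*r_S)^{-1} \succeq (\gamma^\Delta/\lambda) I$, Condition~\eqref{eq:weak-assump-main} reduces to $(1-\beta\gamma)\lambda^\star \le 1 + (1-\delta)\gamma^\Delta/\lambda$, which is exactly the hypothesis of Case~2 after rearrangement. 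Then \cref{thm:main-delta} gives the Poincaré constant $\lambda(P_{\mathrm{GD}}) \ge \delta/((1+\lambda/\gamma^\Delta)n)$, and combined with $\log(1/\mu_{\min}) = O(n)$ this yields the $O(n^2/\delta)$ mixing time. The main obstacle, if any, is purely bookkeeping: verifying the scalar reduction $\diag(\*r_S)^{-1} \succeq (\gamma^\Delta/\lambda) I$ is sharp enough to recover the claimed threshold, and handling the $\beta = 0$ case uniformly (where $V_S$ is strictly smaller, but Cauchy interlacing still applies to $G[V_S]$).
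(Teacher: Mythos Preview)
Your proposal is correct and follows essentially the same approach as the paper: compute $\*r_S(v) = \lambda(\beta\gamma)^{|N(v)\cap S|}/\gamma^\Delta$ and $M_S = -(1-\beta\gamma)A_{G[V_S]}$, bound $\lambda_{\max}(M_S) \le (1-\beta\gamma)\lambda^\star$ via Cauchy interlacing, use the scalar lower bound $\diag(\*r_S)^{-1} \succeq (\gamma^\Delta/\lambda)I$, and then apply \cref{thm:main-0,thm:main-delta}. The paper's proof is identical in structure and detail.
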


    
\begin{proof}
    Fix non-maximal $S \in \XX$, where $\XX$ is the support of $\mu$. It holds that
    \begin{align*}
    \*r_S(i) = \frac{\mu^S(\{i\})}{\mu^S(\emptyset)} = \frac{\lambda (\beta\gamma)^{d_i^S}}{\gamma^\Delta} \quad \text{and}\quad 
    M_S(i,j) =  (\beta\gamma)^{\ind \set{ \{i,j\} \in E}} - 1,
    \end{align*}
    where $d_i^S$ is the number of vertices in $S$ that is adjacent to $i$.
    Let $A_{G,S}$ be the adjacency matrix of subgraph $G[V_S]$ where $V_S = \{u\notin S\mid \{u\}\cup S \in \XX\}$.
    By the Cauchy interlacing theorem, it holds 
    $$-\lambda_{\min}(A_{G,S}) \leq -\lambda_{\min}(A_{G,\emptyset})  = \lambda^\star.$$
    Thus, the maximum eigenvalue of $M_S$ satisfies
    \begin{align}\label{eq:max-eigenvalue}
        \lambda_{\max}\tp{M_S} = \lambda_{\max}\tp{(\beta \gamma -1) A_{G,S}} \le (1-\beta \gamma) \lambda^\star.
    \end{align}
    Meanwhile, the minimum eigenvalue of $\diag(\*r_S)^{-1}$ is lower bounded as follows:
    \begin{align}\label{eq:min-eigenvalue}
        \lambda_{\min}\tp{\diag\tp{\*r_S}^{-1}} \ge \min_{0 \le \ell \le \Delta} \frac{\gamma^\Delta}{\lambda (\beta \gamma)^\ell} = \frac{\gamma^\Delta}{\lambda}.
    \end{align}
    Combining~\eqref{eq:max-eigenvalue} and~\eqref{eq:min-eigenvalue}, we have
    \begin{align*}
        M_S - I \preceq 
        \begin{cases}
            0, & \text{if } (1-\beta\gamma) \lambda^\star \le 1; \\
            \left( (1-\beta\gamma) \lambda^\star - 1 \right) \frac{\lambda}{\gamma^\Delta} \cdot \diag\tp{\*r_S}^{-1}, & \text{if } (1-\beta\gamma) \lambda^\star > 1.
        \end{cases}
    \end{align*}
    Together with~\Cref{thm:main-0,thm:main-delta}, this completes the proof of~\Cref{thm:mixing_antifero_2spin}.
\end{proof}

Our main applications lie in sampling from hardcore model (i.e., $\lambda$-weighted independent sets) and antiferromagnetic Ising model on random $\Delta$-regular graphs. 
To apply~\Cref{thm:mixing_antifero_2spin}, the following result on the spectrum of adjacency matrices of random $\Delta$-regular graphs is required.
\begin{lemma}[\cite{friedman2008proof,bordenave2020new}]\label{lem:lammin_dregular}
    For any constant $\Delta\geq 3$, with probability $1 - o_n(1)$ over the choice of a random $\Delta$-regular graph $G = (V,E)$ on $n$ vertices, we have that
    $$
        \abs{\lambda_{\min}(A_G)} \leq (2 + o_n(1))\sqrt{\Delta-1}.
    $$
\end{lemma}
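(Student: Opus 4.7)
This lemma is the celebrated near-Ramanujan property for random $\Delta$-regular graphs, originally due to Friedman and given a shorter proof by Bordenave; since the full argument is lengthy and orthogonal to the rest of this paper, I will only sketch the high-level strategy one would adopt. The target is to bound $|\lambda|$ for every non-trivial eigenvalue $\lambda$ of $A_G$ by $(2+o_n(1))\sqrt{\Delta-1}$, and then rule out the case $\lambda_{\min}(A_G) = -\Delta$.

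The first step is to reduce to a more tractable random graph model. The uniform $\Delta$-regular graph distribution is contiguous (for constant $\Delta$) to the configuration model conditioned on being simple, and is also contiguous to the \emph{permutation model} in which $G$ is obtained by overlaying $\Delta/2$ i.i.d.\ uniform random perfect matchings on $[n]$ (when $\Delta$ is even, with a standard modification otherwise). Since ``with probability $1-o_n(1)$'' is preserved under contiguity, it suffices to prove the bound in the permutation model. In parallel, a first-moment computation shows that a random $\Delta$-regular graph on $n$ vertices is non-bipartite with probability $1-o_n(1)$ for $\Delta \ge 3$, so $-\Delta$ is not an eigenvalue of $A_G$ with high probability; hence $\lambda_{\min}(A_G)$ coincides with one of the non-trivial eigenvalues.

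The main step is the trace method applied to the non-backtracking walk matrix $B$ indexed by oriented edges. The spectral radius of $B$ on the infinite $\Delta$-regular tree is exactly $\sqrt{\Delta-1}$, and the Ihara--Bass formula converts any non-trivial eigenvalue $\mu$ of $B$ into an eigenvalue $\mu + (\Delta-1)/\mu$ of $A_G$; thus bounding $|\mu| \le (1+o_n(1))\sqrt{\Delta-1}$ yields the desired bound on $|\lambda_{\min}(A_G)|$. One estimates $\mathbb{E}\bigl[\operatorname{tr}((BB^*)^\ell)\bigr]$ for $\ell = \Theta(\log n)$ and applies Markov's inequality, using the fact that the local neighborhood of a random vertex in $G$ looks like a $\Delta$-regular tree so that closed non-backtracking walks count almost as they do on the tree.

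The principal obstacle, as in every proof of Friedman's theorem, is the control of \emph{tangles}: small subgraphs containing more than one cycle, on which non-backtracking walks can concentrate and inflate the trace far beyond the tree contribution. Friedman's original proof removes tangled contributions via an intricate ``selective trace,'' while Bordenave's argument decomposes the walk into tangle-free and tangled parts and bounds each separately using the fact that tangles are rare (probability $n^{-\Omega(1)}$ per vertex for any fixed tangle of bounded size). Executing this tangle analysis carefully is where essentially all of the technical difficulty lies; everything else is bookkeeping around the tree computation and the Ihara--Bass conversion.
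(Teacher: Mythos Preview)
The paper does not prove this lemma at all: it is stated as a black-box citation to Friedman and Bordenave and used without further justification. So there is no ``paper's own proof'' to compare your proposal against, and your decision to only sketch the argument is exactly what the paper itself does (indeed, the paper does even less). Your sketch is a faithful high-level summary of Bordenave's approach via the non-backtracking matrix, Ihara--Bass, and tangle-free decomposition; nothing more is needed here.
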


The mixing time results for hardcore model and antiferromagnetic Ising model are thereby straightforward corollaries of~\Cref{thm:mixing_antifero_2spin,lem:lammin_dregular}.
\begin{corollary}[Hardcore model]\label{cor:SI_hardcore}
    Let $\Delta \ge 3$ and $\delta \ge 0$ be fixed constants, and $\mu$ be the Gibbs distribution of hardcore model with fugacity $\lambda \le \frac{1-\delta}{2\sqrt{\Delta-1}-1}$ on a random $\Delta$-regular graph $G$. With probability $1 - o_n(1)$ over the choice of $G$, the mixing time of Glauber dynamics for $\mu$ satisfies 
    \begin{align*}
    T_{\mathrm{mix}} = O_{\Delta,\lambda} \tp{ \frac{n^2}{\delta} }.
    \end{align*}
\end{corollary}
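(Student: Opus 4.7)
The plan is a direct application of \cref{thm:mixing_antifero_2spin}, specialized to the hardcore model ($\beta = 0$, $\gamma = 1$), combined with the Friedman--Bordenave spectral bound on random regular graphs (\cref{lem:lammin_dregular}).

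First, I would specialize the parameters in \cref{thm:mixing_antifero_2spin}. With $\beta = 0$ and $\gamma = 1$, we have $1 - \beta\gamma = 1$ and $\gamma^\Delta = 1$, so the second branch of the theorem activates whenever $\lambda^\star > 1$, and its hypothesis on $\lambda$ simplifies to $\lambda \le (1-\delta')/(\lambda^\star - 1)$ for some $\delta' \in (0,1)$. The conclusion then delivers a Poincar\'e constant at least $\delta'/((1+\lambda)n)$ and hence a mixing time of order $O_{\Delta,\lambda}(n^2/\delta')$.

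Next, I would invoke \cref{lem:lammin_dregular}: with probability $1 - o_n(1)$ over the choice of $G$, we have $\lambda^\star = |\lambda_{\min}(A_G)| \le (2 + o_n(1))\sqrt{\Delta - 1}$. Since $\Delta \ge 3$, this upper bound exceeds $1$ for $n$ sufficiently large, so the second branch of the theorem indeed applies. Moreover $\lambda^\star - 1 \le 2\sqrt{\Delta-1} - 1 + o_n(1)$, and the assumed bound $\lambda \le (1-\delta)/(2\sqrt{\Delta-1}-1)$ yields
\begin{align*}
\lambda\,(\lambda^\star - 1) \le (1-\delta)\cdot \frac{2\sqrt{\Delta-1} - 1 + o_n(1)}{2\sqrt{\Delta-1}-1} = 1 - \delta + o_n(1).
\end{align*}
For $n$ sufficiently large (depending on $\Delta$ and $\delta$), this is at most $1 - \delta/2$, so the hypothesis of \cref{thm:mixing_antifero_2spin} holds with $\delta' := \delta/2$.

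Combining the two inputs, on the high-probability event \cref{thm:mixing_antifero_2spin} gives $T_{\mathrm{mix}} = O_{\Delta,\lambda}(n^2/\delta') = O_{\Delta,\lambda}(n^2/\delta)$, which is the claim. There is no real obstacle here: the only bookkeeping is absorbing the $o_n(1)$ slack from the spectral bound into a harmless constant factor on $\delta$, which is permissible because $\Delta$ and $\delta$ are treated as fixed while $n \to \infty$.
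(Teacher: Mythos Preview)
Your proposal is correct and follows exactly the approach the paper intends: the paper states that \cref{cor:SI_hardcore} is a ``straightforward corollary of \cref{thm:mixing_antifero_2spin,lem:lammin_dregular}'' without spelling out the details, and your write-up supplies precisely those details (specializing $\beta=0$, $\gamma=1$, invoking the Friedman--Bordenave bound, and absorbing the $o_n(1)$ slack into the constant on $\delta$). One tiny quibble: the sentence ``this upper bound exceeds $1$\dots so the second branch applies'' conflates an upper bound on $\lambda^\star$ with a lower bound; what you actually need is $\lambda^\star > 1$, which holds because any non-empty graph has $\lambda_{\min}(A_G)\le -1$ (and in the degenerate case $\lambda^\star=1$ the first branch gives an even stronger conclusion), so the argument goes through regardless.
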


\begin{corollary}[Ising model]\label{cor:SI_Ising}
    Let $\Delta \ge 3$ be a fixed constant, and $\mu$ be the Gibbs distribution of antiferromagnetic Ising model specified by $\beta \ge 1 - \frac{1}{4\sqrt{\Delta-1}}$ and arbitrary external field $\lambda > 0$ on a random $\Delta$-regular graph $G$. With probability $1 - o_n(1)$ over the choice of $G$, the mixing time of Glauber dynamics for $\mu$ satisfies 
    \begin{align*}
    T_{\mathrm{mix}} = O_{\Delta,\beta,\lambda}\tp{n \log n}.
    \end{align*}
\end{corollary}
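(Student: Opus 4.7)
The antiferromagnetic Ising model is the special case of the antiferromagnetic two-spin framework where $\beta=\gamma<1$, so that $\beta\gamma=\beta^2$, while the external field $\lambda>0$ is arbitrary. My plan is to apply the first part of \Cref{thm:mixing_antifero_2spin}, whose hypothesis reads $1-\beta^2 \le 1/\lambda^\star$ in this setting, where $-\lambda^\star$ is the minimum eigenvalue of the adjacency matrix of the random graph $G$. Once this hypothesis is verified with high probability, I will directly read off $\rho_0(P_{\mathrm{GD}}) \ge \frac{1}{(1+\lambda/\beta^\Delta)n}$ and convert this into a mixing time bound via the standard modified log-Sobolev $\Rightarrow$ mixing time estimate.

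The main quantitative step is the verification of the inequality $1-\beta^2 \le 1/\lambda^\star$. By \Cref{lem:lammin_dregular}, with probability $1-o_n(1)$ the bound $\lambda^\star \le (2+o_n(1))\sqrt{\Delta-1}$ holds. On the algebraic side, writing $\beta=1-\eps$ with $\eps \le \frac{1}{4\sqrt{\Delta-1}}$, we have
\begin{align*}
1-\beta^2 = \eps(2-\eps) \le \tfrac{1}{4\sqrt{\Delta-1}}\bigl(2-\tfrac{1}{4\sqrt{\Delta-1}}\bigr) = \tfrac{1}{2\sqrt{\Delta-1}} - \tfrac{1}{16(\Delta-1)},
\end{align*}
which is strictly less than $\frac{1}{2\sqrt{\Delta-1}}$ with slack of order $\frac{1}{\Delta-1}$. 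Inverting gives $\frac{1}{1-\beta^2} \ge 2\sqrt{\Delta-1} + \Omega_\Delta(1)$, which absorbs the $o_n(1)\sqrt{\Delta-1}$ error in the eigenvalue bound once $n$ is large enough (depending only on $\Delta$). Consequently $\lambda^\star(1-\beta^2)\le 1$ holds with probability $1-o_n(1)$, and \Cref{thm:mixing_antifero_2spin}~(1) becomes applicable.

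The final step is routine: the theorem yields $\rho_0(P_{\mathrm{GD}}) \ge \frac{1}{(1+\lambda/\beta^\Delta) n} = \Omega_{\Delta,\beta,\lambda}(1/n)$, and the mixing time satisfies $T_{\mathrm{mix}}(P_{\mathrm{GD}}) \le \rho_0^{-1}\bigl(\log\log(1/\mu_{\min})+4\bigr)$. A crude weight comparison $\mu_{\min} \ge \beta^{|E|}\min(1,\lambda)^n / \bigl(2^n \max(1,\lambda)^n\bigr)$ together with $|E| = \Delta n/2$ gives $\log(1/\mu_{\min}) = O_{\Delta,\beta,\lambda}(n)$, hence $\log\log(1/\mu_{\min}) = O(\log n)$, and the claimed bound $T_{\mathrm{mix}} = O_{\Delta,\beta,\lambda}(n\log n)$ follows.

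The only delicate point in this proof is the quantitative accounting described above: the boundary case $\beta = 1 - \frac{1}{4\sqrt{\Delta-1}}$ saturates $\frac{1}{2\sqrt{\Delta-1}}$ to leading order, so one must use the second-order term $-\frac{1}{16(\Delta-1)}$ to absorb the $o_n(1)\sqrt{\Delta-1}$ slack coming from Friedman/Bordenave. Everything else is a direct invocation of the already established machinery, with no new structural ideas required beyond those already developed for the hardcore case in \Cref{cor:SI_hardcore}.
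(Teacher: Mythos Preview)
Your proposal is correct and follows essentially the same approach as the paper: the paper treats this corollary as an immediate consequence of \Cref{thm:mixing_antifero_2spin} and \Cref{lem:lammin_dregular}, and the accompanying remark spells out exactly the quantitative point you make, namely that $1-\beta \le \tfrac{1}{4\sqrt{\Delta-1}}$ yields $1-\beta^2 \le \tfrac{1-o_n(1)}{2\sqrt{\Delta-1}}$, so that the first part of \Cref{thm:mixing_antifero_2spin} applies with high probability. Your handling of the second-order slack and of $\mu_{\min}$ is more explicit than the paper's, but the argument is the same.
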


\begin{remark}
    Our condition $1-\beta \le \frac{1}{4\sqrt{\Delta-1}}$ in \Cref{cor:SI_Ising} implies that $1-\beta^2 \le \frac{1-o_n(1)}{2\sqrt{\Delta-1}}$, which is required to apply \cref{thm:mixing_antifero_2spin}.
    A more precise mixing time upper bound for the Ising model without external fields (i.e., $\lambda=1$), following from \Cref{thm:mixing_antifero_2spin}, is given by 
    \begin{align*}
        T_{\mathrm{mix}} = O\tp{ \frac{n}{\beta^\Delta} \log n} = O\tp{ e^{\sqrt{\Delta}} n \log n}.
    \end{align*}
     In~\cite{anari2024trickle}, the authors established the modified log-Sobolev inequalities with constant $\exp(-\e^{\sqrt{\Delta}})$ with inverse temperature $1 - \beta \lesssim \frac{1}{2\sqrt{\Delta-1}}$. 
     \Cref{cor:SI_Ising} matches the order of the regime in~\cite{anari2024trickle}, and significantly improves the constant from double exponential to single exponential.
\end{remark}

\subsection{Binary symmetric Holant problem}
\label{subsec:Holant}
The binary Holant problem is a graphical model defined over the subsets of edges of a given graph. 
Many classical problems can be defined within the family of Holant problems, including ($b$-)matchings, ($b$-)edge covers, high-temperature expansions of Ising model, etc. 
A long line of research was dedicated to interesting Holant problems spanning several decades~\cite{jerrum1993ising,jerrum2003counting,cai2009holant,guo2013complexity,huang2016canonical,guo2021zeros,feng2023swendsen,chen2023nearlinear,chen2024fast}. Notable examples include the rapid mixing of Sinclair--Jerrum chain in the monomer-dimer model on arbitrary graphs~\cite{jerrum2003counting}, and the optimal mixing of the Glauber dynamics for the distributions of weighted $b$-matchings and $b$-edge covers on bounded degree graphs~\cite{chen2024fast}.
In this section, we consider the binary symmetric Holant problem and revisit some classical examples.

Given a graph $G = (V,E)$ of $n$ vertices, let $d_v$ denote the degree of $v$ and let $E_v$ denote $\{e\in E\mid v \in e\}$. The binary symmetric Holant problem is specified by external fields $\*\lambda = (\lambda_e)_{e\in E}$ on edges, and a family of constraint functions $\*f = (f_v)_{v\in V}$ on vertices. Each of these constraints function maps an integer to a non-negative real number.
The sequence $f_v = [f_v(0), f_v(1),...,f_v(d_v)]$ is called the \emph{signature} of $v$, where $d_v$ is the degree of vertex $v \in V$.
The Gibbs distribution $\mu_{G,\*f,\*\lambda}$ and the partition function $Z_{G,\*f,\*\lambda}$ is defined as follows:
\begin{align*}
    &\forall S\subseteq E, \quad \mu_{G,\*f,\*\lambda}(S) = \frac 1{Z_{G,\*f,\*\lambda}}\prod_{v\in V}f_v(|E_v \cap S|)\prod_{e\in S}\lambda_e\\
    &\text{where} \quad Z_{G,\*f,\*\lambda} = \sum_{S\subseteq E}\prod_{v\in V}f_v(|E_v \cap S|)\prod_{e\in S}\lambda_e.
\end{align*}
In this work, we consider uniform external fields $\*\lambda = \lambda \*1$ for simplicity, where $\lambda > 0$ is applied to all edges.
If $f_v(i) = \ind[i\leq 1]$ at every vertex $v$, then we recover the distribution of $\lambda$-weighted matchings of $G$, i.e., the monomer-dimer model. Similarly, if $f_v(i) = \ind[i\leq b]$ for all $v$, we obtain the distribution of $\lambda$-weighted $b$-matchings.

Our main application is to sample from binary symmetric Holant problems with log-concave signatures, which are defined as follows.
\begin{definition}[\text{\cite[Definition 2]{chen2024fast}}]\label{def:log_concave_signature}
    A sequence $f = [f(0),f(1),...,f(d)]$ of non-negative real numbers is called a \emph{downward closed log-concave} signature if it satisfies the following conditions:
    \begin{enumerate}
        \item Log-concavity: $f(k)^2\geq f(k-1)f(k+1)$ for all $1\leq k\leq d-1$;
        \item No internal zeros: if $f(k_1) > 0$ and $f(k_2) > 0$ for some $0\leq k_1 < k_2\leq d$, then $f(k) > 0$ for all $k_1\leq k \leq k_2$;
        \item Downward closed: $f(0) > 0$.
    \end{enumerate}
\end{definition}

For instance, the signature of $b$-matchings $f(i) = \ind[i\leq b]$ is log-concave.
We remark that the third condition ensures $\mu_{G,\*f,\*\lambda}$ is a distribution supported on downward closed set families, which was also required in \cite{chen2024fast}.

For Holant problems with log-concave signatures, we prove the following mixing results via \Cref{thm:main-0,thm:main-delta}.

\begin{theorem}\label{thm:Mixing_holant}
    Consider a binary symmetric Holant problem defined by a simple graph $G = (V,E)$ of $n$ vertices and $m$ edges, downward closed log-concave signatures $\*f$, and uniform external fields $\*\lambda = \lambda \*1$ with $\lambda > 0$. Furthermore, define $Q$ and $R$ as follows:
    $$
        Q = \min_{\substack{u\in V\\0\leq k\leq d_u-2}}\frac{f_u(k)f_u(k+2)}{f_u(k+1)^2}
        \quad\text{and}\quad
        R = \max_{u\in V} \frac{f_u(1)}{f_u(0)}.
    $$
    \begin{enumerate}
        \item If $Q \geq \frac 12$, then the modified log-Sobolev inequality for Glauber dynamics $P_{\mathrm{GD}}$ holds with constant 
        \begin{align*}
        \rho_0(P_{\mathrm{GD}}) \ge \frac{1}{(1+R)m},
        \end{align*}
        implying an $O(m \log m)$ mixing time.
        
        \item If $Q < \frac 12$ and $\lambda \leq \frac{1-\delta}{(1-2Q)R^2}$ for some $\delta \in (0,1)$, then the Poincar\'e inequality for Glauber dynamics $P_{\mathrm{GD}}$ holds with constant
        \begin{align*}
            \lambda(P_{\mathrm{GD}}) \ge \frac{\delta}{(1+R) m},
        \end{align*}
        implying an $O(m^2/\delta)$ mixing time.
    \end{enumerate}
\end{theorem}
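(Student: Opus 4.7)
The plan is to apply \cref{thm:main-0} and \cref{thm:main-delta} by computing the pairwise dependency matrix $M_S$ and the vector of marginal ratios $\*r_S$ explicitly for the Holant distribution on the edge ground set, and then to establish a uniform spectral bound on $M_S$ using log-concavity of the signatures. Throughout, write $d_w^S := |E_w \cap S|$ for the ``degree'' of vertex $w$ inside the edge subset $S$.

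For any non-maximal $S$ and $e = \{u,v\} \in V_S$, a direct calculation from the weight $\mu(S) \propto \lambda^{|S|}\prod_w f_w(d_w^S)$ gives the marginal ratio
\begin{align*}
\*r_S(e) = \lambda \cdot \frac{f_u(d_u^S+1)}{f_u(d_u^S)} \cdot \frac{f_v(d_v^S+1)}{f_v(d_v^S)},
\end{align*}
which, since log-concavity makes $f_w(k+1)/f_w(k)$ non-increasing in $k$, is at most $\lambda R^2$. For the pairwise dependency, two distinct edges $e_1, e_2 \in V_S$ that share no vertex contribute $M_S(e_1, e_2) = 0$ by multiplicativity, while two edges sharing exactly one vertex $u$ contribute $M_S(e_1, e_2) = q_u(d_u^S) - 1$, where $q_u(k) := f_u(k+2) f_u(k)/f_u(k+1)^2 \in [Q, 1]$ by the definition of $Q$ and log-concavity. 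This yields the decomposition
\begin{align*}
M_S = \sum_{u \in V}(q_u(d_u^S) - 1)(J_u - I_u),
\end{align*}
where $J_u$ and $I_u$ are the all-ones matrix and the identity matrix on the index set $E_u \cap V_S$.

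I would then bound the maximum eigenvalue of $M_S$ by a Cauchy--Schwarz-style computation: for any $x \in \R^{V_S}$, using $q_u(d_u^S) - 1 \leq 0$,
\begin{align*}
x^\top M_S x = \sum_{u \in V}(q_u(d_u^S) - 1)\left[\Bigl(\sum_{e \in E_u \cap V_S} x_e\Bigr)^2 - \sum_{e \in E_u \cap V_S} x_e^2\right] \leq (1-Q)\sum_{u \in V}\sum_{e \in E_u \cap V_S} x_e^2 = 2(1-Q)\|x\|^2,
\end{align*}
where the first inequality discards the non-positive quadratic form $(q_u - 1)(\sum_e x_e)^2$ and the last equality uses that every edge has exactly two endpoints. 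Hence $M_S \preceq 2(1-Q) I$. If $Q \geq 1/2$, then $M_S \preceq I$, satisfying \eqref{eq:strong-assump-main}, and \cref{thm:main-0} yields an MLSI constant of order $1/((1+r_{\max})m)$, which together with the bound on $r_{\max}$ gives the claimed $O(m \log m)$ mixing time. If $Q < 1/2$, the bound $M_S - I \preceq (1-2Q) I$ combined with $\diag(\*r_S)^{-1} \succeq (\lambda R^2)^{-1} I$ shows that \eqref{eq:weak-assump-main} holds precisely under the case-2 hypothesis $\lambda(1-2Q)R^2 \leq 1-\delta$; \cref{thm:main-delta} then provides a Poincar\'e constant of order $\delta/((1+r_{\max})m)$ and an $O(m^2/\delta)$ mixing time.

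The main technical obstacle is the spectral bound $M_S \preceq 2(1-Q) I$. Although $M_S$ has non-positive off-diagonal entries and a zero diagonal, such matrices can still possess large positive eigenvalues (e.g., the negative adjacency matrix of a star), so the bound cannot be obtained by trivial entry-wise estimates. The Cauchy--Schwarz identity above is precisely the ingredient that converts the vertex-wise log-concavity $q_u \geq Q$ into a global operator inequality, and the factor $2$---arising because every edge is counted once at each of its two endpoints---pins down $Q = 1/2$ as the exact threshold separating the strong log-concavity regime (Condition~\eqref{eq:strong-assump-main}, \cref{thm:main-0}) from the bounded-fugacity regime (Condition~\eqref{eq:weak-assump-main}, \cref{thm:main-delta}).
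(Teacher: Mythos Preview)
Your proposal is correct and follows essentially the same route as the paper: the paper packages the key spectral inequality as a separate lemma (\cref{lem:line_graph_adj_bound}) proved via the incidence-matrix factorization $\*M = \*B\*B^\intercal - \*D$, which is exactly your vertex-wise decomposition $M_S=\sum_u (q_u-1)(J_u-I_u)$ together with the observation that $\sum_u(1-q_u)J_u=\*B\*B^\intercal\succeq 0$. The only cosmetic difference is that the paper first obtains the edge-dependent bound $M_S - I \preceq \diag\{1-g_S(u)-g_S(v)\}$ before passing to the scalar bound $1-2Q$, whereas you bound $1-q_u\le 1-Q$ one step earlier; both arrive at $M_S\preceq 2(1-Q)I$ and then invoke \cref{thm:main-0,thm:main-delta} identically.
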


The following technical lemma is helpful to us in the proof of \cref{thm:Mixing_holant}.
\begin{lemma}\label{lem:line_graph_adj_bound}
    Given a simple graph $G = (V,E)$ and a positive vector $\*w \in \R_{\geq 0}^V$, let $\*M\in \R^{E\times E}$ denote the following matrix:
    $$
        \*M(e_1,e_2) = \begin{cases}
            \*w(u), & \text{if } e_1\cap e_2 = \{u\};
            \\0, &\mathrm{otherwise};
        \end{cases}
    $$
    and let $\*D = \diag\tp{\*w(u) + \*w(v)}_{\{u,v\}\in E}$.
    Then we have that
    $\*M \succeq -\*D$.
\end{lemma}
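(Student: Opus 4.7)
The plan is to show that $\*M + \*D$ admits a decomposition as a nonnegative combination of rank-one positive semidefinite matrices, one for each vertex of $G$.

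For each vertex $u \in V$, let $E_u = \set{e \in E : u \in e}$ denote the set of edges incident to $u$, and let $\*1_{E_u} \in \set{0,1}^E$ denote the indicator vector of $E_u$. I will verify the identity
\begin{align*}
    \*M + \*D \;=\; \sum_{u \in V} \*w(u) \, \*1_{E_u} \*1_{E_u}^\intercal.
\end{align*}
This clearly yields $\*M + \*D \succeq 0$, since each summand is a nonnegative scalar times a rank-one PSD matrix, and hence $\*M \succeq -\*D$.

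To verify the identity, I would check it entrywise. Consider two edges $e_1, e_2 \in E$. If $e_1 = e_2 = \set{u,v}$, the diagonal entry of $\*1_{E_{u'}} \*1_{E_{u'}}^\intercal$ at $(e_1, e_1)$ is $1$ iff $u' \in \set{u,v}$, so the right-hand side equals $\*w(u) + \*w(v) = \*D(e_1, e_1)$, matching $(\*M+\*D)(e_1,e_1)$ since $\*M$ has zero diagonal. If $e_1 \ne e_2$ and they share exactly one vertex $u$, then since $G$ is simple this $u$ is the unique common vertex, so the right-hand side at $(e_1, e_2)$ is $\*w(u)$, matching $\*M(e_1,e_2)$. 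If $e_1 \cap e_2 = \emptyset$, the sum is empty and both sides are $0$.

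The only subtlety to watch for is the simplicity assumption on $G$, which ensures that two distinct edges share at most one vertex (ruling out the possibility that $e_1 \ne e_2$ share two vertices and would otherwise contribute $\*w(u) + \*w(v)$ on the right while $\*M$ is only defined via a single-vertex intersection). With this in place the computation is entirely mechanical, and there is no serious obstacle.
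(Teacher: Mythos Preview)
Your proof is correct and is essentially the same as the paper's: the paper writes $\*M + \*D = \*B\*B^\intercal$ for the weighted incidence matrix $\*B(e,u) = \ind[u\in e]\sqrt{\*w(u)}$, and your rank-one decomposition $\sum_{u\in V}\*w(u)\,\*1_{E_u}\*1_{E_u}^\intercal$ is exactly the column-by-column expansion of $\*B\*B^\intercal$. Your version spells out the entrywise verification and the role of simplicity more explicitly, but the underlying idea is identical.
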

\begin{proof}
    Let $\*B \in \R^{E\times V}$ denote the matrix which satisfies that
    $\*B(e,u) = \ind[u\in e] \sqrt{\*w(u)}$ for any $e\in E,u\in V$.
    Then the following always holds
    $$
        \*M = \*B\*B^\intercal - \*D \succeq -\*D,
    $$
    since $\*B\*B^\intercal$ is positive semidefinite.
\end{proof}
Now we use~\Cref{lem:line_graph_adj_bound} to prove the mixing time of Glauber dynamics for binary symmetric Holant problems.

\begin{proof}[Proof of \cref{thm:Mixing_holant}]
    Fix non-maximal $S \in \XX$, where $\XX$ is the support of the Gibbs distribution of the binary symmetric Holant problem.
    Let $E_S := \{e \in E\setminus S: S\cup \{e\} \in \+X\}$, $d_u$ denote the degree of vertex $u \in V$, and $k_v = k_v^S$ denote the number of edges in $S$ that are incident to $v$.
    It holds that:
    \begin{align*}
        \forall e = (u,v) \in E_S,\quad  \*r_S(e) &=  \frac{\lambda f_u(k_u+1)f_v(k_v+1)}{f_u(k_u)f_v(k_v)}; \\
       \forall e_i,e_j\in E_S,\quad  M_S(e_i,e_j) &= \begin{cases}
        \frac{f_u(k_u)f_u(k_u+2)}{f_u^2(k_u+1)} - 1, &  \text{if } e_i\cap e_j = \{u\};
        \\ 0, & \text{otherwise.}
       \end{cases}
    \end{align*}
    For any $u \in V$, let $g_S(u) = \frac{f_u(k_u)f_u(k_u+2)}{f_u^2(k_u+1)}$ if $k_u+2 \le d_u$, and $g_S(u) = 0$ otherwise.
    Log-concavity of $\*f$ implies that $g_S(u) \leq 1$ for any $u\in V$.
    By~\Cref{lem:line_graph_adj_bound}, we have
    $$
        M_S \preceq D_S + I
    $$
    where $D_S\in \R^{E_S\times E_S}$ denotes $\diag\{1-g_S(u)-g_S(v)\}_{e = \{u,v\} \in E_S}$.
    Therefore, it holds
    \begin{align*}
        \lambda_{\max}(M_S - I) \le \lambda_{\max}\tp{D_S} \le 1-2Q.
    \end{align*}
    Meanwhile, the minimum eigenvalue of $\diag\tp{\*r_S}^{-1}$ is lower bounded as follows:
    \begin{align*}
        \lambda_{\min}(\diag\tp{\*r_S}^{-1}) \ge \min_{u,v \in V} \frac{f_u(k_u) f_v(k_v)}{\lambda f_u(k_u+1) f_v(k_v+1)} \ge \frac{1}{\lambda R^2},
    \end{align*}
    where the second inequality follows from the log-concavity of signature $\*f$.
    Therefore,
    \begin{align}\label{eq:M_S-I}
        M_S - I \preceq 
        \begin{cases}
            0, & \text{if } Q \ge \frac{1}{2};\\
            (1-2Q)\lambda R^2 \cdot \diag\tp{\*r_S}^{-1}, & \text{if } Q < \frac{1}{2}. 
        \end{cases}
    \end{align}
    Plugging in~\Cref{thm:main-0,thm:main-delta}, we conclude the proof.
\end{proof}
As immediate consequences of \cref{thm:Mixing_holant}, we obtain new rapid mixing results for Glauber dynamics on the monomer-dimer model and the $b$-matching problem, thus proving \cref{thm:matching-main} from the introduction.

\begin{corollary}[matching]\label{cor:matching}
Let $\mu$ be the Gibbs distribution of the monomer-dimer model specified by a \emph{simple} graph $G=(V,E)$ and fugacity $\lambda \le 1- \delta$ where $\delta \in [0,1)$. 
Then the mixing time of the Glauber dynamics on $\mu$ satisfies
\begin{align*}
    T_{\mathrm{mix}} = O_\lambda \tp{ \min\left\{ \frac{1}{\delta}, \sqrt{\Delta} \right\} mn \log n},
\end{align*}
where $m = |E|$ is the number of edges in graph $G$ and $\Delta$ is the maximum degree of $G$.
\end{corollary}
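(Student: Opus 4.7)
The plan is to invoke Thm~\ref{thm:Mixing_holant}, viewing the monomer-dimer model as a binary symmetric Holant problem with vertex signatures $f_v = [1, 1, 0, \ldots, 0]$ of length $d_v + 1$. Each such $f_v$ is downward closed (since $f_v(0) = 1 > 0$), has no internal zeros, and is log-concave since $f_v(1)^2 = 1 \ge 0 = f_v(0) f_v(2)$, so Def~\ref{def:log_concave_signature} is satisfied. A direct computation of the Holant constants gives
\begin{align*}
R = \max_v \frac{f_v(1)}{f_v(0)} = 1
\qquad \text{and} \qquad
Q = \min_{v,\,k} \frac{f_v(k) f_v(k+2)}{f_v(k+1)^2} = 0,
\end{align*}
the latter attained at $k=0$ for every vertex of degree at least $2$ (the case $\Delta \le 1$ is trivial).

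Since $Q = 0 < 1/2$, part (2) of Thm~\ref{thm:Mixing_holant} applies whenever $\lambda \le (1-\delta)/((1-2Q)R^2) = 1 - \delta$, giving a Poincar\'e constant $\lambda(P_{\mathrm{GD}}) \ge \delta/((1+R)m) = \delta/(2m)$. Combining this with the routine bound $\log(1/\mu_{\min}) = O_\lambda(n \log n)$, which follows since the number of matchings is at most $m^{n/2}$ and each has weight $\lambda^{|M|}$ with $|M| \le n/2$, delivers $T_{\mathrm{mix}} = O_\lambda(mn\log n/\delta)$. This establishes the first half of the $\min\{\cdot,\cdot\}$ in the corollary.

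The alternative bound $O_\lambda(\sqrt{\Delta}\,mn\log n)$ remains finite as $\delta \to 0$ and requires sharpening the spectral-stability rate inside the proof of Thm~\ref{thm:Mixing_holant}. The idea is to exploit that matchings of $G$ correspond to independent sets of the line graph $L(G)$, so the pairwise-dependency matrix is $M_S = -A_H$ for $H = L(G)[E_S]$; the incidence-matrix factorization $A_{L(G)} = B^\intercal B - 2I$, valid for every simple graph $G$, gives $\lambda_{\min}(A_H) \ge -2$ uniformly in $S$. Running the field dynamics at the auxiliary scale $\theta^\star = 1 - 1/\sqrt{\Delta}$ rescales the fugacity to $\lambda(1-\theta^\star) \le 1/\sqrt{\Delta}$, well inside the hardcore-uniqueness regime on $L(G)$ with margin $\delta^\star = \Theta(1)$. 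Invariance of $M_S$ under external fields (\cref{remark:invariant}) lets us apply Thm~\ref{thm:cor-bound} to the rescaled distribution $(1-\theta^\star)\mu$, yielding $\Cov((1-\eta)\mu^S) \preceq O(\sqrt{\Delta})\,\diag(\*m((1-\eta)\mu^S))$ for $\eta \in [0, 1-1/\sqrt{\Delta}]$; on the complementary range the naive trickle-down bound $\Cov \preceq (1/\eta)\diag(\*m(\cdot))$ already suffices. Integrating the piecewise rate $C(\eta) = O(\min\{1/\eta, \sqrt{\Delta}\})$ against $1/(1-\eta)$ gives $\int_0^{1-\epsilon} C(\eta)/(1-\eta)\,d\eta = O(1) + \log(\sqrt{\Delta}/\epsilon)$, hence $\lambda(P^{\mathrm{FD}}_{1\leftrightarrow 1-\epsilon}) = \Omega(\epsilon/\sqrt{\Delta})$ by Thm~\ref{thm:mixing-bound}(1). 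The comparison lemma~\ref{lem:limit-field} with $\delta^\star = \Theta(1/\sqrt{\Delta})$ then yields $\lambda(P_{\mathrm{GD}}) = \Omega(1/(\sqrt{\Delta}\,m))$, and multiplying by $\log(1/\mu_{\min}) = O_\lambda(n\log n)$ gives the desired bound.

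The main obstacle is ensuring the rescaled trickle-down step produces a $\Theta(1)$-margin inside the hardcore-uniqueness regime of $L(G)$ after arbitrary matching pinnings $S$, which requires $\lambda_{\min}(A_{L(G)[E_S]}) \ge -2$ to hold uniformly in $S$. The simple-graph hypothesis is precisely what rules out multi-edges that would otherwise inflate this eigenvalue through the $B^\intercal B$ factorization. Everything else is a routine composition of the theorems already established in the paper.
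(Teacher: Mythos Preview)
Your $1/\delta$ branch is correct and coincides with the paper's argument: invoke \cref{thm:Mixing_holant} with $Q=0$, $R=1$, together with the crude bound $\log(1/\mu_{\min})=O_\lambda(n\log n)$.

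The $\sqrt{\Delta}$ branch, however, has a real gap. Your plan is to obtain the uniform rate $C(\eta)=O(\sqrt{\Delta})$ for small $\eta$ by applying \cref{thm:cor-bound} to the rescaled distribution $(1-\theta^\star)*\mu$ with $1-\theta^\star=1/\sqrt{\Delta}$. But trickle-down is scale-invariant: since $M_S$ does not change under tilts and $\*r_S((1-\theta^\star)*\mu)=(1-\theta^\star)\*r_S(\mu)$, the best admissible $1-\delta'$ for the rescaled measure is exactly $(1-\theta^\star)(1-\delta)$, and the conclusion of \cref{thm:cor-bound} for $(1-\theta')*((1-\theta^\star)*\mu^S)=(1-\eta)*\mu^S$ reads
\[
\Cov\!\big((1-\eta)*\mu^S\big)\ \preceq\ \frac{1}{1-(1-\delta)(1-\eta)}\,\diag\!\big(\*m((1-\eta)*\mu^S)\big),
\]
identical to what you get by applying trickle-down to $\mu$ directly. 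In particular, when $\lambda\to 1$ (so $1-\delta\to 1$) this bound is $\sim 1/\eta$ and gives no finite control at $\eta\to 0$; it does not yield $C(\eta)=O(\sqrt{\Delta})$ on $[0,1-1/\sqrt{\Delta}]$. Relatedly, the phrase ``well inside the hardcore-uniqueness regime on $L(G)$'' is not right: $L(G)$ has maximum degree $\approx 2\Delta$, so its uniqueness threshold is $\approx e/(2\Delta)$, far below the rescaled fugacity $1/\sqrt{\Delta}$.

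The paper closes this gap by importing an external fact: the monomer-dimer measure (and all its pinnings/tilts) is $2\sqrt{1+\lambda\Delta}$-spectrally independent \cite{chen2021optimal}, hence $C(\eta)\le 2\sqrt{1+\lambda\Delta}=O(\sqrt{\Delta})$ for all $\eta$ via \cref{rmk:SI-SS}. Combining this with the trickle-down bound $C(\eta)\le 1/(1-\lambda(1-\eta))\le 1/\eta$ gives precisely your piecewise rate $C(\eta)=O(\min\{1/\eta,\sqrt{\Delta}\})$, and from there your integral computation, the use of \cref{thm:mixing-bound}, and the comparison \cref{lem:limit-field} are all correct. So the missing ingredient is not a rescaling trick but the $O(\sqrt{\Delta})$ spectral-independence bound for matchings, which does not follow from \cref{thm:cor-bound} alone.
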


\begin{proof}
The first part ($1/\delta$ bound) follows from~\Cref{thm:Mixing_holant} with $Q = 0$ and $R = 1$, and $\mu_{\min} \ge \frac{\lambda^n}{m^n}$ when $\lambda \le 1$. 
Meanwhile, when $\lambda \le 1$, by~\eqref{eq:M_S-I},~\Cref{thm:cor-bound} and the fact that the distribution $\mu$ is $(2\sqrt{1+\lambda\Delta})$-spectrally independent~\cite{chen2021optimal} (see also \cref{rmk:SI-SS} for the comparison between spectral stability with respect to field dynamics and spectral independence), it holds
\begin{align*}
    \Cov\left( (1-\theta)*\mu^S \right) \preceq \underbrace{\min\set{\frac{1}{\theta},2\sqrt{1+\lambda\Delta}}}_{=: C(\theta)} \cdot \diag \tp{ \*m\tp{ (1-\theta) * \mu^S } }.
\end{align*}
Together with~\Cref{thm:mixing-bound}, the Poincar\'e inequality holds with constant
\begin{align*}
    \lambda(P_{1 \leftrightarrow \theta}^{\mathrm{FD}}) \ge \exp\tp{-\int_0^{\theta} \frac{C(\eta)}{1-\eta} \dif \eta}  
    = \Omega\tp{\frac{1-\theta}{\sqrt{\Delta}}}.
\end{align*}
By~\Cref{lem:limit-field}, this implies a Poincar\'e inequality for the Glauber dynamics with constant 
\begin{align}\label{eq:GD-poincare-2}
\lambda(P_\mathsc{gd}) = \Omega \left( \frac{1}{\sqrt{\Delta} m} \right).
\end{align}
This proves the desired lower bound on the spectral gap of the Glauber dynamics when $\lambda \le 1$.
\end{proof}

\begin{corollary}[$b$-matchings]\label{cor:SI_b_matchings}
    Let $\mu$ be the distribution of $\lambda$-weighted $b$-matchings of a \emph{simple} graph $G=(V,E)$ with $\lambda \le 1 -\delta$ where $\delta \in [0,1)$.
    Then the mixing time of the Glauber dynamics on $\mu$ satisfies
    \begin{align*}
        T_{\mathrm{mix}} = O_\lambda \tp{ \min\left\{ \frac{1}{\delta}, \Delta^b, bn \right\} \cdot b m n \log n},
    \end{align*}
    where $m = |E|$ is the number of edges in graph $G$ and $\Delta$ is the maximum degree of $G$.
\end{corollary}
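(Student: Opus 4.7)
The strategy parallels \Cref{cor:matching}, extending the blueprint to $b$-matchings by producing three distinct covariance bounds, one for each term in the $\min$. Throughout I exploit three elementary facts about the $b$-matching signature $f_v(i) = \ind[i\le b]$: it is downward closed log-concave with $Q=0$ and $R=1$; the marginal ratio $\*r_S(e)=\lambda$ is uniformly constant, so $r_{\max}\le \lambda \le 1$; and every $b$-matching has at most $bn/2$ edges, which gives $\log(1/\mu_{\min}) = O(bn \log n)$ after a crude estimate of the partition function. Putting these together, it suffices to establish a Glauber Poincar\'e gap of order $\Omega(1/(cm))$ for $c = \min\{1/\delta,\Delta^b, bn\}$, since the mixing time is bounded by $(1+r_{\max}) m \cdot \log(1/\mu_{\min})/\lambda(P_\mathsc{gd}) = O(c \cdot bmn\log n)$.

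The $1/\delta$ branch is immediate from the second case of \Cref{thm:Mixing_holant}: since $\lambda \le 1-\delta$, equation \eqref{eq:M_S-I} gives $M_S \preceq I + (1-\delta)\diag(\*r_S)^{-1}$, yielding a Glauber Poincar\'e gap of at least $\delta/((1+\lambda)m)$.

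For the $bn$ branch, I would first prove a general size-based spectral stability lemma: if $\mu$ is any distribution on $\{0,1\}^n$ supported on sets of size at most $k$, then $\Cov(\mu)\preceq k\cdot \diag(\*m(\mu))$. By Cauchy--Schwarz, for any test vector $v$,
\begin{align*}
v^\trans \diag(\*m)^{-1/2}\Cov(\mu)\diag(\*m)^{-1/2}v \le \mathbb{E}\left[\left(\sum_{i\in S} \frac{v_i}{\sqrt{m_i}}\right)^{2}\right] \le \mathbb{E}\left[|S|\sum_{i\in S}\frac{v_i^2}{m_i}\right] \le k\|v\|^2.
\end{align*}
The size bound $bn/2$ is preserved under pinning and under any positive external field, so this applies uniformly to $(1-\theta)*\mu^S$, giving spectral stability with constant $O(bn)$. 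Combined with the trickle-down bound $C(\theta)\le 1/\theta$ from \Cref{thm:cor-bound} (valid because $M_S \preceq I + \lambda\cdot \diag(\*r_S)^{-1}$ by \eqref{eq:M_S-I}), we obtain $C(\theta)=\min\{1/\theta, O(bn)\}$. Integrating as in the proof of \Cref{cor:matching} yields a field dynamics Poincar\'e gap of $\Omega(1/(bn))$, and \Cref{lem:limit-field} then gives a Glauber gap of $\Omega(1/(bmn))$.

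The $\Delta^b$ branch follows the same pattern but substitutes an $O(\Delta^b)$ spectral independence bound for $b$-matchings on max-degree-$\Delta$ graphs in place of the $O(bn)$ bound. Such an estimate can be derived either by the standard tree-recursion for $b$-matchings, where correlations decay over depth-$b$ self-avoiding walks with branching factor $\Delta$, or by reducing $b$-matchings on $G$ to ordinary matchings on a capacity-$b$ blowup graph $G'$ and inheriting a $\sqrt{\Delta'}$-type spectral independence bound. The main obstacle I expect is establishing this $O(\Delta^b)$ bound rigorously in the uniform form required (over all pinnings $S$ and all tilting parameters $\theta \in (0,1)$), since transferring the matching arguments through the $b$-capacity constraint requires some care; the other two branches are essentially immediate from tools already developed in the paper.
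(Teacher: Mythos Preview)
Your proposal is correct and follows essentially the same route as the paper: combine the $1/\theta$ covariance bound from \Cref{thm:cor-bound} (via \eqref{eq:M_S-I} with $Q=0$, $R=1$) with a uniform spectral-independence/stability constant, integrate to get the field-dynamics gap, and transfer to Glauber via \Cref{lem:limit-field}; the $\log(1/\mu_{\min})=O(bn\log n)$ estimate and the $1/\delta$ branch from \Cref{thm:Mixing_holant} are identical to the paper's.

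The one difference worth flagging is your treatment of the $bn$ branch. The paper simply imports the $O(\min\{\Delta^b, bn\})$ spectral-independence bound from \cite{chen2024fast} as a black box, whereas you give a direct, self-contained derivation of the $O(bn)$ part via the size constraint and Cauchy--Schwarz. Your argument is correct and slightly more elementary: it yields spectral \emph{stability} (the $\diag(\*m)$ form needed in \Cref{thm:mixing-bound}) directly, rather than going through the stronger spectral-independence form from \cite{chen2024fast}. For the $\Delta^b$ branch, both you and the paper ultimately defer to \cite{chen2024fast}; your sketches (tree recursion or capacity-$b$ blowup) are reasonable heuristics, but the paper does not attempt an independent proof either, so there is no gap relative to what the paper claims.
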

\Cref{cor:SI_b_matchings} follows similarly by $\mu_{\min} \ge \frac{\lambda^{bn}}{m^{bn}}$ and the fact that $\mu$ is $O(\min\{\Delta^b, bn\})$-spectrally independent~\cite{chen2024fast}.

\subsection{Determinantal point process}
\label{subsec:DPP}

The determinantal point process was first introduced in~\cite{benard1973detection} to describe ``fermions''  in thermal equilibrium. It has since found many applications in machine learning, as comprehensively surveyed in~\cite{kulesza2012determinantal}.
Let $L \in \mathbb{R}^{n \times n}$ be a symmetric positive semidefinite matrix. 
Given the parameter $\alpha \in [0,1]$, the determinantal point process defines a distribution $\mu$ supported over $\{0,1\}^{n}$ given as follows:
\begin{align*}
    \forall S \subseteq [n], \quad \mu(S) \propto \left( \mathrm{det}(L_{S,S}) \right)^\alpha,
\end{align*}
with the convention $\mathrm{det}(L_{\emptyset,\emptyset}) = 1$.
Here, $L_{S,S}$ is the submatrix of $L$ by taking rows and columns with indices in $S$. 
When $\alpha=1$, the distribution is easier to study since the partition function admits a simple determinantal expression and is in fact strongly log-concave as a polynomial in external fields.

Recent years have seen several advances in sampling from determinantal point processes (DPPs). 
For instances, in~\cite{AJKPV2022entropic}, the authors showed an $\widetilde{O}(n^2)$ mixing of the $2$-block dynamics for distribution $\mu$; In~\cite{anari2022optimal}, the authors provides an efficient sampler for $k$-DPP distribution. In this section, we prove an optimal mixing time for the Glauber dynamics (i.e., $1$-block dynamics).

\begin{theorem}
\label{thm:dpp-alpha}
    Let $L \in \mathbb{R}^{n \times n}$ be a symmetric positive semidefinite matrix and $\mu$ be the distribution of the determinantal point process with the parameter $\alpha \in [0,1]$. Then the generating polynomial $g_\mu$ is strongly log-concave, thus implying that $\mu$ satisfies the modified log-Sobolev inequality with constant at least $\frac{1}{(1+\max_{i} L_{i,i}^\alpha)n}$ and the mixing time of Glauber dynamics for $\mu$ is $O\tp{(1+\max_{i} L_{i,i}^\alpha)n \log n}$.
\end{theorem}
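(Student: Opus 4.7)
The plan is to verify Condition~\eqref{eq:strong-assump-main}, namely $M_S \preceq I$ for every non-maximal $S$ in the support $\XX := \{S : \det L_{S,S} > 0\}$. Once this is established, \Cref{thm:main-0} combined with the bound $r_{\max} \leq \max_i L_{ii}^\alpha$ derived below yields both the MLSI constant and the mixing time estimate, while \Cref{prop:log-concave} gives strong log-concavity of $g_\mu$.

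First, I would reduce to the case $S = \emptyset$. For any $S_0 \in \XX$, the Schur complement identity $\det L_{S_0 \cup T, S_0 \cup T} = \det L_{S_0,S_0} \cdot \det(L/L_{S_0,S_0})_{T,T}$ shows that the conditional distribution $\mu^{S_0}$ is again an $\alpha$-DPP on $[n] \setminus S_0$ with PSD kernel $L/L_{S_0,S_0}$; in particular, $M_{S_0}$ for $\mu$ coincides with $M_\emptyset$ for this contracted DPP. A direct computation from $\mu(\emptyset) \propto 1$, $\mu(\{i\}) \propto L_{ii}^\alpha$, and $\mu(\{i,j\}) \propto (L_{ii} L_{jj} - L_{ij}^2)^\alpha$ then gives, for $i \neq j$,
\[ M_\emptyset(i,j) = (1 - \rho_{ij}^2)^\alpha - 1, \qquad \rho_{ij} := L_{ij}/\sqrt{L_{ii} L_{jj}}, \]
so $(I - M_\emptyset)(i,j) = 1 - (1 - \rho_{ij}^2)^\alpha$ for $i \neq j$ and $(I - M_\emptyset)(i,i) = 1$.

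The core step is to show $I - M_\emptyset \succeq 0$ via the Taylor expansion
\[ 1 - (1-x)^\alpha = \sum_{k \geq 1} c_k x^k, \qquad c_k := \frac{\alpha (1-\alpha)(2-\alpha)\cdots(k-1-\alpha)}{k!}, \]
whose coefficients $c_k$ are nonnegative for every $\alpha \in [0,1]$ and which converges on $[0,1]$. Let $R := (\rho_{ij}^2)_{i,j} = B \circ B$, where $B := \diag(L_{ii})^{-1/2} L \diag(L_{ii})^{-1/2} \succeq 0$; by the Schur product theorem, $R \succeq 0$ and each Hadamard power $R^{\circ k}$ is PSD. Applying the series entrywise gives $I - M_\emptyset = \sum_{k \geq 1} c_k R^{\circ k}$, a convergent conic combination of PSD matrices, and hence PSD by closedness of the PSD cone.

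Finally, since $r_S(i) = \mu(S \cup \{i\})/\mu(S) = ((L/L_{S,S})_{i,i})^\alpha$ and $(L/L_{S,S})_{i,i} = L_{ii} - L_{i,S} L_{S,S}^{-1} L_{S,i} \leq L_{ii}$, we obtain $r_{\max} \leq \max_i L_{ii}^\alpha$. Invoking \Cref{thm:main-0} then yields the claimed MLSI constant and mixing-time bound. The only nontrivial step is the PSD argument for $I - M_\emptyset$ via Taylor coefficients and the Schur product theorem; all other steps are standard reductions and matrix identities.
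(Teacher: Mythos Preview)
Your proof is correct and rests on the same two ingredients as the paper's: the Schur product theorem applied to a Hadamard square of a PSD matrix, and the fractional-binomial expansion $1-(1-x)^\alpha=\sum_{k\ge1}c_kx^k$ with $c_k\ge0$ for $\alpha\in[0,1]$. The difference is organizational. You first reduce to $S=\emptyset$ via the Schur-complement identity (so that $\mu^{S_0}$ is again an $\alpha$-DPP) and then handle all $\alpha$ in one stroke by expanding $I-M_\emptyset=\sum_{k\ge1}c_k R^{\circ k}$ with $R=B\circ B$. The paper instead first settles the case $\alpha=1$ (\Cref{thm:dpp_alpha=1}) by computing $M_S$ directly at an arbitrary $S$ and writing $I-M_S=\Lambda^{-1}(N\circ N)\Lambda^{-1}$ with $N$ the Schur complement, and then proves a standalone preservation lemma (\Cref{lem:log-concave-g_alpha}): if $g_\mu$ is strongly log-concave then so is $g_\nu$ for $\nu\propto\mu^\alpha$, via the identical binomial expansion applied to $I-M_S(\mu)$. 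Since for the $\alpha=1$ DPP one has $I-M_\emptyset(\mu)=R$, the two expansions are literally the same series. Your route is shorter for this particular theorem; the paper's route isolates \Cref{lem:log-concave-g_alpha} as a reusable statement that applies to \emph{any} strongly log-concave $\mu$, not only DPPs.
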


To prove \cref{thm:dpp-alpha}, we first consider the special case where $\alpha = 1$, i.e., the standard DPP. 
It is known that the generating polynomial $g_{\mu}$ for the standard DPP is real stable, thus strongly log-concave; such an implication can be established via a homogenization argument. In this section, we include an alternative proof by using~\Cref{prop:log-concave,thm:main-0}.

\begin{lemma}\label{thm:dpp_alpha=1}
Let $L \in \mathbb{R}^{n \times n}$ be a symmetric positive semidefinite matrix and $\mu$ be the distribution of the determinantal point process with $\alpha = 1$. Then the generating polynomial $g_\mu$ is strongly log-concave, thus implying the modified log-Sobolev inequality with constant at least $\frac{1}{(1+\max_{i} L_{i,i})n}$ and an $O\tp{(1+\max_{i} L_{i,i})n \log n}$ mixing time for Glauber dynamics.
\end{lemma}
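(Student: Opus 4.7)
The plan is to verify Condition~\eqref{eq:strong-assump-main} ($M_S \preceq I$ for every non-maximal $S$) for the DPP, then invoke \Cref{prop:log-concave} to conclude strong log-concavity and \Cref{thm:main-0} to conclude the modified log-Sobolev inequality and the mixing time bound.

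The key computation uses the standard fact that the conditional distribution $\mu^S$ is again a DPP, with respect to the Schur complement
\[
    L^{(S)} := L_{V_S, V_S} - L_{V_S, S} \, L_{S,S}^{-1} \, L_{S, V_S},
\]
where $V_S = [n] \setminus S$. Indeed, by the block-determinant identity,
\[
    \mu^S(T) \propto \det(L_{S \cup T, S \cup T}) = \det(L_{S,S}) \cdot \det\bigl(L^{(S)}_{T,T}\bigr) \propto \det\bigl(L^{(S)}_{T,T}\bigr).
\]
Applying this to $T = \emptyset, \{i\}, \{i,j\}$ yields
\[
    \*r_S(i) = L^{(S)}_{ii}, \qquad
    M_S(i,j) = \frac{L^{(S)}_{ii} L^{(S)}_{jj} - (L^{(S)}_{ij})^2}{L^{(S)}_{ii} L^{(S)}_{jj}} - 1 = -\frac{(L^{(S)}_{ij})^2}{L^{(S)}_{ii} L^{(S)}_{jj}}
\]
for distinct $i,j \in V_S$, and $M_S(i,i) = 0$.

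Next, I would normalize: set $D = \diag(L^{(S)}_{ii})$ (which is positive on $V_S$ because $S$ is non-maximal) and $\tilde L = D^{-1/2} L^{(S)} D^{-1/2}$, so $\tilde L \succeq 0$ and $\tilde L_{ii} = 1$. Then the entries above give exactly
\[
    M_S = I - \tilde L \circ \tilde L,
\]
where $\circ$ denotes the Hadamard product. By the Schur product theorem, $\tilde L \circ \tilde L \succeq 0$, hence $M_S \preceq I$. This verifies Condition~\eqref{eq:strong-assump-main}, so \Cref{prop:log-concave} delivers strong log-concavity of $g_\mu$.

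Finally, to extract the constant in the MLSI and mixing time from \Cref{thm:main-0}, I need to bound $r_{\max}$. Since $L$ is PSD, $L_{i,S} L_{S,S}^{-1} L_{S,i} \ge 0$, so the Schur complement satisfies $L^{(S)}_{ii} \le L_{ii}$, giving $r_{\max} \le \max_i L_{ii}$. Combined with $\log \log(1/\mu_{\min}) = O(\log n)$ (using the crude bound $\mu_{\min} \ge \mu_\emptyset \cdot (\min_S \det L_{S,S}/\det L)$ on the support, or directly from boundedness of entries), \Cref{thm:main-0} yields the claimed $O((1+\max_i L_{ii}) n \log n)$ mixing time. The only conceptual step is recognizing the Hadamard-product structure of $M_S$; the rest is bookkeeping with Schur complements and should cause no difficulty.
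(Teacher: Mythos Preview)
Your proof is correct and essentially identical to the paper's: both compute $M_S$ via the Schur complement $L^{(S)}$ (the paper calls it $N$), recognize that $I - M_S = \Lambda^{-1}(L^{(S)} \circ L^{(S)})\Lambda^{-1}$ (equivalently your $\tilde L \circ \tilde L$), and apply the Schur product theorem, then bound $r_{\max} \le \max_i L_{ii}$ since the Schur complement shrinks diagonals. The only cosmetic difference is that you phrase the first step as ``$\mu^S$ is again a DPP with kernel $L^{(S)}$,'' whereas the paper writes out the block-determinant identities explicitly; also note $V_S$ should technically be $\{i \notin S : L^{(S)}_{ii} > 0\}$ rather than all of $[n]\setminus S$, but this is harmless for the argument.
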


\begin{proof}
   Fix a non-maximal set $S \subseteq [n]$ with $\mathrm{det}(L_{S,S}) > 0$. For any $i,j \in V_S$, we write the principal submatrix $L_{S \cup \{i,j\},S \cup \{i,j\} }$ as follows:
   \begin{align*}
   L_{S \cup \{i,j\}, S \cup \{i,j\} } = 
   \begin{bmatrix}
    L_S & u_i & u_j\\
    u_i^\intercal & \ell_{i,i} & \ell_{i,j}\\
    u_j^\intercal & \ell_{j,i} & \ell_{j,j}
   \end{bmatrix},
   \end{align*}
   where $u_i,u_j \in \mathbb{R}^{S}$. Therefore, the determinants of $L_{S \cup \{i\}, S \cup \{i\} }$ and $L_{S \cup \{i,j\}, S \cup \{i,j\}}$ satisfy:
   \begin{align*}
    \mathrm{det} \left( L_{S \cup \{i\}, S \cup \{i\} } \right) &= \mathrm{det}(L_{S,S}) \cdot \mathrm{det}\left( \ell_{i,i} - u_i^\intercal L_{S,S}^{-1} u_i \right) 
    = \mathrm{det}(L_{S,S}) \cdot \left( \ell_{i,i} - u_i^\intercal L_{S,S}^{-1} u_i \right); \\
       \mathrm{det} \left( L_{S \cup \{i,j\}, S \cup \{i,j\}} \right) &= \mathrm{det}(L_{S,S}) \cdot 
       \tp{ \left( \ell_{i,i} - u_i^\intercal L_{S,S}^{-1} u_i \right) \left( \ell_{j,j} - u_j^\intercal L_{S,S}^{-1} u_j \right) - \left( \ell_{i,j} - u_i^\intercal L_{S,S}^{-1} u_j \right)^2}.
   \end{align*}
   Hence, the matrix $M_S - I$ is given by
   \begin{align}\label{eq:M-I}
        \forall i,j \in V_S,\quad (M_S - I)(i,j) = -\frac{\left( \ell_{i,j} - u_i^\intercal L_{S,S}^{-1} u_j \right)^2}{\left( \ell_{i,i} - u_i^\intercal L_{S,S}^{-1} u_i \right) \left( \ell_{j,j} - u_j^\intercal L_{S,S}^{-1} u_j \right) }.
   \end{align}
   Note that the column vectors of $L_{S,V_S}$ are $u_i$ for $i \in V_S$. Therefore, we have
   \begin{align}\label{eq:N}
   N := \left[ \ell_{i,j} - u_i^\intercal L_{S,S}^{-1} u_j \right]_{i,j \in V_S} = L_{V_S, V_S} - L_{S, V_S}^\intercal L_{S,S}^{-1} L_{S, V_S} \succeq 0,
   \end{align}
   where the last inequality follows from that $L$ is positive semidefinite.
   Therefore, by~\eqref{eq:M-I} and~\eqref{eq:N}, the matrix $M_S - I$ satisfies:
   \begin{align*}
    M_S - I = - \Lambda^{-1} \cdot \tp{N \circ N} \cdot \Lambda^{-1} \preceq 0,
   \end{align*}
   where $\Lambda = \mathrm{diag}\tp{\ell_{i,i} - u_i^\intercal L_{S,S}^{-1} u_i}_{i \in V_S}$, $N \circ N$ denotes the Hadamard product of matrix $N$ with itself, and the last inequality follows from $N \succeq 0$ and the Schur product theorem.
   We then conclude the lemma from~\Cref{prop:log-concave}, \Cref{thm:main-0}, and that 
   \begin{align*}
       r_{\max} &= \max_{\substack{\text{non-maximal } S\\i \in V_S}} {\ell_{i,i} - u_i^\intercal L_{S,S}^{-1} u_i} \le \max_{i \in [n]}{\ell_{i,i}}. \qedhere
   \end{align*}
\end{proof}

Next, we prove \cref{lem:log-concave-g_alpha} which shows that strong log-concavity is preserved at higher temperatures.
Thus, we can generalize~\Cref{thm:dpp_alpha=1} to the determinantal point process with arbitrary $\alpha \in [0,1]$, establishing \cref{thm:dpp-alpha}.
We note that \cref{lem:log-concave-g_alpha} is analogous to \cite[Theorem 1.7]{anari2019logconcaveII} for homogeneous polynomials, and can be proved by homogenization arguments.
Here, and later for \cref{lem:log-concave-varphi}, we present direct proofs of both facts using Condition \eqref{eq:strong-assump-main} and \Cref{prop:log-concave}.

\begin{lemma}\label{lem:log-concave-g_alpha}
    Let $\XX \subseteq \{0,1\}^n$ be a non-empty downward closed set family, and let $\mu$ be a distribution fully supported on $\XX$.
    For $\alpha \in [0,1]$, let $\nu$ be a distribution on $\XX$ defined as
    \begin{align*}
        \nu(S) \propto \mu(S)^\alpha, \quad \forall S\in \+X.
    \end{align*}
    If the generating polynomial of $\mu$ is strongly log-concave, then the generating polynomial of $\nu$ is also strongly log-concave.
\end{lemma}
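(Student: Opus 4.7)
The plan is to invoke \cref{prop:log-concave} to reduce the claim to verifying Condition~\eqref{eq:strong-assump-main} for $\nu$: that is, $M_S(\nu)\preceq I$ for every non-maximal $S\in\XX$. The first step is a direct computation. Since $\nu(S)\propto \mu(S)^\alpha$ on $\XX$, the conditional $\nu^S$ satisfies $\nu^S(T)\propto \mu^S(T)^\alpha$ on $\XX^S$, and hence, for any pair of distinct $i,j\in V_S$,
\begin{align*}
M_S(\nu)(i,j)+1 \;=\; \frac{\nu^S(\{i,j\})\,\nu^S(\emptyset)}{\nu^S(\{i\})\,\nu^S(\{j\})} \;=\; \left(\frac{\mu^S(\{i,j\})\,\mu^S(\emptyset)}{\mu^S(\{i\})\,\mu^S(\{j\})}\right)^{\alpha} \;=\; \bigl(M_S(\mu)(i,j)+1\bigr)^{\alpha}.
\end{align*}
Writing $B_S := M_S(\mu)+J$ for the matrix of marginal-ratio products (with ones on the diagonal, non-negative off-diagonal entries, and $J$ the all-ones matrix), the hypothesis $M_S(\mu)\preceq I$ becomes $B_S\preceq I+J$, while the desired conclusion is the entrywise-power Loewner bound $B_S^{\odot\alpha}\preceq I+J$, where $\odot\alpha$ denotes entrywise $\alpha$-power.

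To establish this, I would interpolate in the power: define $\nu_t(S)\propto \mu(S)^t$ for $t\in[0,1]$, for which $M_S(\nu_t)+J = B_S^{\odot t}$. The bound $M_S(\nu_t)\preceq I$ holds at $t=1$ by hypothesis, and at $t=0$ the matrix $B_S^{\odot 0}$ equals $1$ precisely on entries where $B_S(i,j)>0$ and $0$ elsewhere, which is readily verified to satisfy $B_S^{\odot 0}\preceq I+J$ (the ``missing pair'' perturbation is negative semidefinite). For intermediate $t\in(0,1)$, the concavity of $x\mapsto(1+x)^\alpha$ and the tangent inequality $(1+x)^\alpha\le 1+\alpha x$ for $x\ge-1$ give an entrywise comparison $M_S(\nu_t)(i,j)\le t\cdot M_S(\mu)(i,j)$. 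To upgrade this to a Loewner inequality, I would employ the integral representation
\begin{align*}
(1+x)^\alpha - 1 \;=\; \alpha x\int_0^1 (1+sx)^{\alpha-1}\,\diff s,
\end{align*}
which writes the pointwise transform as an average of appropriately shifted copies; coupled with the hypothesis $B_S\preceq I+J$ and the non-negativity of the entries of $B_S$, this should yield $B_S^{\odot\alpha}\preceq I+J$ by a continuity argument along $t$ starting from the two verified endpoints.

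The main obstacle is precisely this lifting from entrywise to spectral domination, since entrywise domination of signed symmetric matrices does not in general imply Loewner domination. This mirrors the core difficulty in the analogous closure property for homogeneous completely log-concave polynomials, where homogenization is the usual device to sidestep the issue; the plan here is to carry out this matrix-analytic transfer directly in the language of the pairwise dependency matrix, leveraging the structural constraint that $B_S(i,j)\in[0,2]$ (a consequence of $M_S(\mu)\preceq I$ together with the zero diagonal of $M_S(\mu)$ via Cauchy interlacing on $2\times 2$ principal submatrices) to ensure uniform control of the integral kernel $(1+sx)^{\alpha-1}$.
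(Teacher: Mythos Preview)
Your reduction via \cref{prop:log-concave} and the entrywise identity $M_S(\nu)+J=(M_S(\mu)+J)^{\circ\alpha}$ are correct, and you correctly isolate the core difficulty: pushing the Loewner bound through the entrywise power. However, the proposed route does not close. The interpolation in $t$ only verifies the endpoints $t\in\{0,1\}$; nothing in the argument propagates $M_S(\nu_t)\preceq I$ into the interior, and the set of good $t$ is merely closed, not a priori connected. The integral representation $(1+x)^\alpha-1=\alpha x\int_0^1(1+sx)^{\alpha-1}\,ds$ is of no help either, because the kernel $(1+sx)^{\alpha-1}$ varies entry by entry with $x$, so the integrand is a Hadamard product of $M_S(\mu)$ with a matrix whose definiteness you cannot control. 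You flag this as the ``main obstacle,'' and it remains unresolved in your proposal.

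The paper's argument fills this gap by expanding about a different base point. Write $A:=M_S(\mu)-I+J$ (this differs from your $B_S$ only on the diagonal), so that the identity reads $M_S(\nu)-I+J=A^{\circ\alpha}$ and the goal becomes $A^{\circ\alpha}\preceq J$. Since $A=J+(M_S(\mu)-I)$ and $|(M_S(\mu)-I)_{ij}|\le1$ entrywise (the off-diagonal upper bound is exactly your $2\times2$ interlacing remark, and the lower bound $M_S(\mu)_{ij}\ge-1$ is immediate from the ratio definition), the generalized binomial series converges entrywise:
\[
A^{\circ\alpha}\;=\;J+\sum_{k\ge1}\binom{\alpha}{k}(M_S(\mu)-I)^{\circ k}\;=\;J+\sum_{k\ge1}(-1)^k\binom{\alpha}{k}\,(I-M_S(\mu))^{\circ k}.
\]
By hypothesis $I-M_S(\mu)\succeq0$, so the Schur product theorem gives $(I-M_S(\mu))^{\circ k}\succeq0$ for every $k\ge1$; and for $\alpha\in[0,1]$ one has $(-1)^k\binom{\alpha}{k}\le0$ for all $k\ge1$. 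Hence $A^{\circ\alpha}\preceq J$, i.e., $M_S(\nu)\preceq I$. The missing idea is simply to expand the entrywise power as a series in $M_S(\mu)-I$ rather than in $M_S(\mu)$, so that the hypothesis enters as a PSD matrix and Schur's theorem controls each term with the correct sign.
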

\begin{proof}
    By~\Cref{prop:log-concave}, it is sufficient to show that for any non-maximal $S\in \+X$,
    $$
        I - M_S(\mu) \succeq \*0 \implies I - M_S(\nu) \succeq \*0.
    $$
    Let $J = \*1\*1^\intercal$ be the all-$1$ matrix.
    It holds that
    \begin{align*}
      M_S(\nu) - I + J &= (M_S(\mu) - I + J)^{\circ \alpha},
    \end{align*}
    where $A^{\circ k}$ denotes $A$ to the Hadamard power of $k$.
    
    Since $M_S(\mu) \preceq I$, for every $i, j \in V_S$ such that $i\neq j$, we have 
    \begin{align*}
        2M_S(\mu)_{i,j} = (\*1_i + \*1_j)^\intercal M_S(\mu)(\*1_i + \*1_j) \leq (\*1_i + \*1_j)^\intercal I(\*1_i + \*1_j) = 2.
    \end{align*}
    Hence, $M_S(\mu)_{i,j} \leq 1$, which implies $\abs{M_S(\mu)_{i,j}} \leq 1$.
    This means for $\alpha \in [0,1]$, we can use the generalized binomial theorem, where the condition $\abs{M_S(\mu)_{i,j}} \leq 1$ guarantees the convergence of the series, and get
    \begin{align*}
      M_S(\nu) - I + J
      &= J + \sum_{k=1}^\infty \binom{\alpha}{k} (M_S(\mu) - I)^{\circ k}
       = J + \sum_{k=1}^\infty \binom{\alpha}{k} (-1)^{k} (I - M_S(\mu))^{\circ k}.
    \end{align*}
    According to the Schur product theorem, we have for every $k \geq 0$, 
    \begin{align*}
      (I - M_S(\mu))^{\circ k} \succeq 0.
    \end{align*}
    By the definition of the generalized binomial coefficients, we have
    \begin{align*}
      (-1)^{k}\binom{\alpha}{k} = (-1)^k \frac{\prod_{i=0}^{k-1}(\alpha - i)}{k!} \leq 0, \quad \forall k \in \mathbb{N}_+
    \end{align*}
    which implies that
    \begin{align*}
     M_S(\nu) - I &\preceq 0. \qedhere
    \end{align*}
\end{proof}

We also present a lemma that, similar as \cref{lem:log-concave-g_alpha}, establishes the preservation of strong log-concavity when the density of each subset is reweighted by a log-concave function on its size.
An immediate consequence is the rapid mixing of Glauber dynamics for the determinantal point process conditioned on subsets of size \emph{at most} $k$.

\begin{lemma}\label{lem:log-concave-varphi}
    Let $\XX \subseteq \{0,1\}^n$ be a non-empty downward closed set family, and let $\mu$ be a distribution fully supported on $\XX$.
    Suppose $\varphi: \mathbb{N}\mapsto \mathbb{R}_{\geq 0}$ is a log-concave function without internal zeros such that $\varphi(0) > 0$.
    Let $\nu$ be a distribution on $\XX$ defined as
    \begin{align*}
        \nu(S) \propto \varphi(|S|)\mu(S), \quad \forall S\in \XX.
    \end{align*}
    If the generating polynomial of $\mu$ is strongly log-concave, then the generating polynomial of $\nu$ is also strongly log-concave.
\end{lemma}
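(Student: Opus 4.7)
The plan is to apply \cref{prop:log-concave} in both directions: the hypothesis on $\mu$ gives $M_S(\mu) \preceq I$ for every non-maximal $S \in \XX$, and the goal is to prove the same inequality for $\nu$. First I would identify the support of $\nu$: since $\varphi(0) > 0$ and $\varphi$ has no internal zeros, setting $k^\star := \min\{k \in \N : \varphi(k) = 0\}$ (with $k^\star = \infty$ if no such $k$ exists) gives $\supp(\nu) = \XX_\nu := \{S \in \XX : |S| < k^\star\}$, which is still downward closed. For any non-maximal $S \in \XX_\nu$ with $|S| = k$, the eligible extension set $V_S$ coincides with the one under $\mu$, and both $\varphi(k)$ and $\varphi(k+1)$ are strictly positive.

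The key step is a direct entrywise calculation. Setting $c_k := \varphi(k+2)\varphi(k)/\varphi(k+1)^2 \in [0,1]$ (non-negativity is immediate and the upper bound is exactly log-concavity of $\varphi$, with $c_k := 0$ when $\varphi(k+2) = 0$), for distinct $i,j \in V_S$ one finds
\begin{align*}
M_S(\nu)(i,j) + 1 \;=\; c_k \bigl(M_S(\mu)(i,j) + 1\bigr),
\end{align*}
and this identity remains true in the degenerate cases $S\cup\{i,j\} \notin \XX$ (both sides vanish) and $\varphi(k+2) = 0$ (both sides vanish). Since the diagonals of $M_S(\nu)$ and $M_S(\mu)$ are zero, the entrywise identity rearranges into the matrix equation
\begin{align*}
M_S(\nu) \;=\; c_k\, M_S(\mu) \;+\; (1 - c_k)(I - J),
\end{align*}
where $J$ denotes the all-ones matrix on $V_S \times V_S$.

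The conclusion is now immediate:
\begin{align*}
I - M_S(\nu) \;=\; c_k\,(I - M_S(\mu)) \;+\; (1 - c_k)\, J \;\succeq\; 0,
\end{align*}
since $I - M_S(\mu) \succeq 0$ by hypothesis, $J \succeq 0$ trivially, and both $c_k,\,1-c_k \ge 0$. Applying \cref{prop:log-concave} to $\nu$ then yields the strong log-concavity of $g_\nu$. The only subtle part I foresee is the bookkeeping around values of $\varphi$ that are zero at the boundary $k+2 = k^\star$; this is precisely why the ``no internal zeros'' and $\varphi(0)>0$ hypotheses are built into the lemma, as they guarantee that $\XX_\nu$ is downward closed and the scalar $c_k$ is well-defined (with the natural convention) in every non-maximal case.
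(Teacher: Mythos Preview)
Your proposal is correct and is essentially the same argument as the paper's proof: both compute the entrywise relation $M_S(\nu) = c_k\, M_S(\mu) + (1-c_k)(I-J)$ with $c_k = \varphi(k+2)\varphi(k)/\varphi(k+1)^2 \in [0,1]$, and then conclude $I - M_S(\nu) = c_k(I - M_S(\mu)) + (1-c_k)J \succeq 0$ via \cref{prop:log-concave}. Your treatment of the support $\XX_\nu$ and the boundary case $\varphi(k+2)=0$ is in fact more careful than the paper's, which simply works over $\XX$ without explicitly discussing these points.
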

\begin{proof}
    By~\Cref{prop:log-concave}, it is sufficient to show that for any non-maximal $S\in \XX$,
    $$
        M_S(\mu) \preceq I \implies M_S(\nu) \preceq I .
    $$
    Fix a non-maximal $S\in \XX$ of size $k = |S|$. We have that
    \begin{align*}
        M_S(\nu) &= \frac{\varphi(k+2)\varphi(k)}{\varphi(k+1)^2}(M_S(\mu) + J - I) - (J - I)
        \\&= \frac{\varphi(k+2)\varphi(k)}{\varphi(k+1)^2} (M_S(\mu) - I) - \left( 1 - \frac{\varphi(k+2)\varphi(k)}{\varphi(k+1)^2} \right)J + I
        \\&\preceq - \left( 1 - \frac{\varphi(k+2)\varphi(k)}{\varphi(k+1)^2} \right)J + I 
        \preceq I,
    \end{align*}
    where the first inequality follows from the assumption $M_S(\mu) \preceq I$, and the second inequality follows from $\varphi(k+2)\varphi(k) \le \varphi(k+1)^2$, the log-concavity of $\varphi$.
\end{proof}


\bibliographystyle{alpha}
\bibliography{hc_random.bib}

\appendix
\section{Missing proofs in \texorpdfstring{\cref{subsec:stability}}{Section 3}}
\label{appendix-stab-proof}

\subsection{Equivalent definitions of spectral stability: Proof of~\texorpdfstring{\Cref{prop:equiv-SI-stability}}{Proposition 3.3}}

The proof of~\Cref{prop:equiv-SI-stability} relies on the following lemma.

\begin{lemma}\label{lem:property-f}
Let $\mu$ be a distribution over $\XX \subseteq \{0,1\}^n$. Furthermore, define the function $F:\+A \to \mathbb{R}$ as follows, where $\+A = \{\*m(\*\lambda * \mu) \mid \*\lambda \in \mathbb{R}^n_{>0} \} \subseteq [0,1]^{n}$:
\begin{align}\label{eq:def-fc}
    F_C(\*q) = \sum_{i \in [n]} q_i \log \frac{q_i}{p_i} - (q_i - p_i) - C \cdot \inf_{\nu \mid \*m(\nu) = \*q} D_{\mathrm{KL}}(\nu \parallel \mu),
\end{align}
where $\*p = \*m(\mu)$ denotes the mean vector of $\mu$. The following conditions are equivalent:
\begin{enumerate}
\item The function $F_C(\*q)$ is concave at $\*q = \*m(\*\lambda * \mu)$;
\item $\mathrm{Cov}(\*\lambda * \mu) \preceq C \cdot \diag\tp{\*m(\*\lambda * \mu)}$.
\end{enumerate}
Furthermore, $F_C(\*p) = 0$ and $\nabla F_C(\*p) = \*0_n$.
\end{lemma}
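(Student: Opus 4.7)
The plan is to reinterpret $F_C$ via convex duality and then compute its Hessian directly. Setting $\*\theta = \log \*\lambda$ componentwise and defining the log-partition function $\psi(\*\theta) := \log \sum_{S \in \XX} \mu(S) e^{\inner{\*\theta}{\*1_S}}$, the standard exponential-family identities give $\grad \psi(\*\theta) = \*m(\*\lambda * \mu)$ and $\hessian \psi(\*\theta) = \Cov(\*\lambda * \mu)$. The Gibbs variational principle identifies the KL infimum in~\eqref{eq:def-fc} with the Legendre conjugate,
\begin{align*}
    \inf_{\nu : \*m(\nu) = \*q} \KL{\nu}{\mu} = \psi^*(\*q),
\end{align*}
and the infimum is attained uniquely by $\nu = \*\lambda * \mu$ whenever $\*q = \*m(\*\lambda * \mu) \in \+A$. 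Rewriting $F_C(\*q) = \Phi(\*q) - C \cdot \psi^*(\*q)$ with $\Phi(\*q) := \sum_i [q_i \log (q_i/p_i) - (q_i - p_i)]$ reduces the lemma to a purely analytic computation.

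Next I would compute the Hessians. The separable term satisfies $\Phi(\*p) = 0$, $\grad \Phi(\*p) = \*0$, and $\hessian \Phi(\*q) = \diag(\*q)^{-1}$. By the standard Legendre-duality identity, $\hessian \psi^*(\*q) = (\hessian \psi(\*\theta))^{-1} = \Cov(\*\lambda * \mu)^{-1}$ at $\*q = \grad \psi(\*\theta)$, so
\begin{align*}
    \hessian F_C(\*q) = \diag(\*q)^{-1} - C \cdot \Cov(\*\lambda * \mu)^{-1}.
\end{align*}
Concavity of $F_C$ at $\*q$ is equivalent to $\hessian F_C(\*q) \preceq 0$, i.e., $\diag(\*q)^{-1} \preceq C \cdot \Cov(\*\lambda * \mu)^{-1}$, and inverting this inequality on the positive definite cone yields exactly $\Cov(\*\lambda * \mu) \preceq C \cdot \diag(\*m(\*\lambda * \mu))$. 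The base-point identities $F_C(\*p) = 0$ and $\grad F_C(\*p) = \*0$ follow from the corresponding facts for $\Phi$ together with $\psi^*(\*p) = \*0^\top \*p - \psi(\*0) = 0$ and $\grad \psi^*(\*p) = \*0$, since $\*p = \grad \psi(\*0)$.

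The main technical obstacle is handling rank-degeneracies of $\Cov(\*\lambda * \mu)$: if the indicator vectors $\{\*1_S\}_{S \in \XX}$ do not affinely span $\R^n$, then $\psi$ is only strictly convex modulo its null space, and the Legendre inversion must be interpreted via the Moore--Penrose pseudoinverse on the tangent space of the affine hull of $\+A$. I would resolve this by restricting all derivative computations to this affine hull, on which $\psi$ is strictly convex, the Legendre inversion is literal, and the monotonicity of matrix inversion on PSD matrices applies without caveat; the PSD inequality on the ambient space then follows by zero-extension in the complementary directions. A brief justification that the infimum in~\eqref{eq:def-fc} is indeed attained on the exponential family (using strict convexity of KL and the assumption $\*q \in \+A$) rounds out the argument.
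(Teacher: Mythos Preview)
Your proposal is correct and follows essentially the same approach as the paper: identify the KL infimum with the Legendre conjugate $\psi^*$ of the log-partition function (the paper cites \cite{bubeck2019entropic} for this), compute $\hessian F_C(\*q) = \diag(\*q)^{-1} - C\cdot \Cov(\*\lambda*\mu)^{-1}$ via Legendre inversion, and read off the equivalence by operator-antitonicity of matrix inversion. Your treatment is in fact slightly more careful than the paper's, which does not discuss the rank-degenerate case.
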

\begin{proof}
Let $\phi(\*q) = \inf_{\nu \mid \*m(\nu) = \*q} D_{\mathrm{KL}}(\nu \parallel \mu)$. By~\cite[Lemma 1]{bubeck2019entropic}, the Legendre dual of $\phi$ is $\psi(\*\lambda) = \log \tp{\sum_{S \in \XX} \exp\tp{\sum_{i \in S} \lambda_i} \mu(S)}$, implying that
\begin{align*}
    \nabla \phi(\*m(\*\lambda * \mu)) = \log (\*\lambda) \quad \text{and} \quad \nabla^2 \phi(\*m(\*\lambda * \mu)) = \tp{\mathrm{Cov}(\*\lambda * \mu)}^{-1}.
\end{align*}
Therefore, the Hessian matrix of $F(\*q)$ evaluated at $\*q = \*m(\*\lambda * \mu)$ is 
$$\nabla^2 F_C(\*m(\*\lambda * \mu)) = \tp{\diag(\*m(\*\lambda * \mu))}^{-1} - C\cdot \tp{\mathrm{Cov}(\*\lambda * \mu)}^{-1}.$$
Therefore, $F(\*q)$ is concave at $\*q = \*m(\*\lambda * \mu)$ if and only if $\mathrm{Cov}(\*\lambda * \mu) \preceq C \cdot \diag\tp{\*m(\*\lambda * \mu)}$.
The verification of $F_C(\*p) = 0$ and $\nabla F_C(\*p) = \*0_n$ is straightforward.
\end{proof}

We present the following useful corollary of~\Cref{lem:property-f}.

\begin{corollary}\label{cor:property-f}
   Let $\mu$ be a distribution over $\XX \subseteq \{0,1\}^n$. The following conditions are equivalent:
\begin{enumerate}
 \item $F_C(\*q)$ (defined in~\eqref{eq:def-fc}) is concave at $\*q = \*p$;
 \item For any distribution $\nu \ll \mu$, it holds \begin{align}\label{eq:desired-stability}
 \sum_{i\in [n]} p_i \tp{\frac{q_i}{p_i} - 1}^2 \le C \cdot D_{\chi^2}(\nu\parallel \mu),
 \end{align}
 where $\*p = \*m(\mu)$ and $\*q = \*m(\nu)$.
 \end{enumerate}
\end{corollary}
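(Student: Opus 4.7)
The plan is to combine \Cref{lem:property-f} with a short variational argument. By \Cref{lem:property-f} applied with $\*\lambda = \*1$ (so that $\*\lambda*\mu = \mu$ and $\*m(\*\lambda*\mu) = \*p$), the concavity of $F_C$ at $\*q = \*p$ is equivalent to the spectral bound $\Cov(\mu) \preceq C \cdot \diag(\*p)$. Hence it suffices to show that this matrix inequality is in turn equivalent to \eqref{eq:desired-stability} holding for every $\nu \ll \mu$.

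To set up the equivalence, I would parameterize $\nu$ by $g := \nu/\mu - 1$, which satisfies $\E[\mu]{g} = 0$ and $D_{\chi^2}(\nu\parallel\mu) = \E[\mu]{g^2}$. A direct computation then gives $q_i - p_i = \E[\mu]{g \cdot (\*1_{i\in\cdot} - p_i)}$, so the left-hand side of \eqref{eq:desired-stability} equals $(\*q-\*p)^\intercal D^{-1}(\*q-\*p)$, where $D := \diag(\*p)$ (coordinates with $p_i = 0$ contribute trivially and can be discarded, or handled using the pseudoinverse convention $0^{-1}=0$).

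For the forward direction, assume $\Cov(\mu) \preceq CD$. Setting $u := D^{-1}(\*q - \*p)$ and applying Cauchy--Schwarz gives
\[
\tp{u^\intercal (\*q - \*p)}^2 = \tp{\E[\mu]{g \cdot u^\intercal(\*1_{\cdot} - \*p)}}^2 \le \E[\mu]{g^2} \cdot u^\intercal \Cov(\mu) u \le C \cdot \E[\mu]{g^2} \cdot u^\intercal D u.
\]
Since $u^\intercal(\*q - \*p) = u^\intercal D u = (\*q-\*p)^\intercal D^{-1}(\*q-\*p)$, dividing by this quantity (the trivial case when it vanishes is immediate) yields \eqref{eq:desired-stability}. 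For the converse, given any $u \in \R^n$ I would test \eqref{eq:desired-stability} against the perturbation $g_\epsilon := \epsilon \cdot u^\intercal(\*1_{\cdot} - \*p)$, choosing $\epsilon > 0$ small enough that $g_\epsilon \ge -1$ on the support of $\mu$, so that $\nu_\epsilon := (1+g_\epsilon)\mu$ is a valid probability distribution. A direct calculation gives $\*q - \*p = \epsilon \Cov(\mu) u$ and $D_{\chi^2}(\nu_\epsilon \parallel \mu) = \epsilon^2 u^\intercal \Cov(\mu) u$; substituting into \eqref{eq:desired-stability} and cancelling $\epsilon^2$ yields $u^\intercal \Cov(\mu)\, D^{-1} \Cov(\mu)\, u \le C \cdot u^\intercal \Cov(\mu)\, u$ for all $u$. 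Equivalently, with $M := D^{-1/2}\Cov(\mu)\, D^{-1/2} \succeq 0$, this reads $M^2 \preceq CM$, which on the eigenbasis of $M$ forces $M \preceq CI$, i.e., $\Cov(\mu) \preceq CD$.

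The argument is largely routine bookkeeping; the only subtle points are choosing the test perturbation $g_\epsilon$ so that $\nu_\epsilon$ is a valid distribution while still ranging over all directions $u$, and the small linear-algebra step $M^2 \preceq CM \Rightarrow M \preceq CI$ which crucially uses $M \succeq 0$.
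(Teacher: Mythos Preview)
Your proof is correct and takes a genuinely different route from the paper's.

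The paper works directly with the definition of $F_C$: for $\nu_\epsilon = (1+\epsilon f)\mu$ with $\mu(f)=0$, the concavity of $F_C$ at $\*p$ together with $F_C(\*p)=0$ and $\nabla F_C(\*p)=\*0$ (from \Cref{lem:property-f}) gives $F_C(\*q_\epsilon)\le o(\epsilon^2)$, which unfolds to an entropy-type inequality; Taylor-expanding $x\log x$ on both sides then linearizes KL to $\chi^2$ and yields \eqref{eq:desired-stability}. The converse is asserted to follow ``similarly.'' You instead invoke the full Hessian characterization in \Cref{lem:property-f} to convert concavity at $\*p$ directly into the matrix inequality $\Cov(\mu)\preceq C\,\diag(\*p)$, and then prove that this matrix inequality is equivalent to \eqref{eq:desired-stability} via Cauchy--Schwarz in one direction and a linear test perturbation plus the spectral step $M^2\preceq CM\Rightarrow M\preceq CI$ in the other. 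Your argument is purely linear-algebraic and sidesteps the KL/entropy layer entirely; the paper's argument stays closer to the definition of $F_C$ and mirrors the style used for the entropic analogue in \Cref{prop:entropic-sufficient-condition}. Both are short and clean; yours makes the converse direction fully explicit, which the paper leaves to the reader.
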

\begin{proof}
We first prove that the concavity of $F_C(\*q)$ at $\*q = \*p$ implies~\eqref{eq:desired-stability}. For any fixed $f: \XX \to \mathbb{R}$ with $\mu(f) = 0$, we let $f_\epsilon = 1 + \epsilon f$. The concavity of $F_C(\*q)$ implies
\begin{align}\label{eq:entropy-ineq-f}
    \sum_{i \in [n]} p_i \cdot \tp{\mu_{[n] \setminus i}(f_\epsilon )(1) \cdot \log \mu_{[n] \setminus i}(f_\epsilon)(1) - \mu_{[n] \setminus i}(f_\epsilon )(1) + 1} \le C \cdot \Ent[\mu]{f_{\epsilon}} + o(\epsilon^2).
\end{align}
This follows from that for any distribution $\nu$ absolutely continuous to $\mu$ and $h = \frac{\nu}{\mu}$, we have 
\begin{align*}
    \mu_{[n] \setminus i}(h)(1) = \frac{q_i}{p_i}\quad \text{and} \quad \Ent[\mu]{h} = D_{\mathrm{KL}}(\nu \parallel \mu),
\end{align*}
where $\*p=\*m(\mu)$ and $\*q=\*m(\nu)$ denote the mean vectors. By~\eqref{eq:entropy-ineq-f}, linearity of expectation, and the Taylor's expansion $(1+\epsilon) \log (1+ \epsilon) = \epsilon + \frac{1}{2}\epsilon^2 + o(\epsilon^2)$, it holds that
\begin{align*}
    \frac{1}{2} \cdot \tp{\sum_{i \in [n]} p_i \cdot \tp{\mu_{[n] \setminus i}(f)(1)}^2} \epsilon^2 \le \frac{C}{2} \cdot \Var[\mu]{f} \cdot \epsilon^2 + o(\epsilon^2).
\end{align*}
Combining the assumption $\mu(f) = 0$, we have
\begin{align}\label{eq:result}
    \sum_{i \in [n]} p_i \cdot \tp{\mu_{[n] \setminus i}(f)(1)- \mu(f)}^2 \le C\cdot \Var[\mu]{f}.
\end{align}
Finally, note that~\eqref{eq:result} holds for all $f^\star = f + t$, where $t \in \mathbb{R}$ is an arbitrary value. Thus, we may remove the constraint $\mu(f)=0$. Specifically, when $f=\frac{\nu}{\mu}$, it implies
\begin{align*}
    \sum_{i \in [n]} p_i \tp{\frac{q_i}{p_i}-1}^2 \le C\cdot D_{\chi^2}(\nu\parallel \mu).
\end{align*}
This concludes the proof. The converse follows similarly.
\end{proof}

\begin{remark}
    By~\Cref{prop:entropic-sufficient-condition}, the entropic stability with respect to the field dynamics is equivalent to $F_{C(\eta)}(\*q) \le 0$ for all conditional distributions $\mu^S$ and $\eta \in (0,1)$. This implies the concavity of $F_{C(\eta)}(\*q)$ at $\*q = \*p$ as $F_{C(\eta)}(\*p) = 0$. Together with~\Cref{cor:property-f}, this shows that entropic stability is a stronger condition than spectral stability.
\end{remark}
We are now ready to prove~\Cref{prop:equiv-SI-stability}.

\begin{proof}[Proof of~\Cref{prop:equiv-SI-stability}]
    The equivalence between~\Cref{item:spectral-1,item:spectral-2} has been implicitly established in~\cite{chen2024rapid}. 
    We first note that establishing the spectral stability with rate $C$ with respect to the field dynamics is equivalent to establishing~\eqref{eq:spec-stab-def} for all $f : \XX \to \mathbb{R}_{\ge 0}$ with $\E[(1-\eta)* \mu^S]{f} = 1$. This follows from that $\Var[\mu]{af + b} = a^2 \Var[\mu]{f}$ for all $a,b \in \mathbb{R}$.
    Therefore, we may assume $f = \frac{\dif \nu}{\dif (1-\eta) * \mu^S}$, the Radon-Nikodym derivative between an arbitrary $\nu$ and the stationary distribution $\mu$. By~\cite[Appendix A.1]{chen2024rapid}, the derivative of $\Var[Q_{\eta \to \theta}(S,\cdot)]{Q_{\theta \to 1}f}$ satisfies
    \begin{align*}
        \frac{\dif \Var[Q_{\eta \to \theta}(S,\cdot)]{f_\theta}}{\dif \theta} \Bigg|_{\theta = \eta^+} &= \frac{1}{1-\eta} \cdot \sum_{i \in [n] \setminus S} \Pr[R \sim Q_{\eta \to 1}(S,\cdot)]{i \in R} \cdot (f_\eta(S \cup \{i\}) - f_\eta(S))^2\\
        &= \frac{1}{1-\eta} \sum_{i \in [n] \setminus S} q_i \tp{\frac{q_i}{p_i} - 1}^2,
    \end{align*}
    where $\frac{\dif}{\dif \theta}|_{\theta = \eta^+}$ denotes the limit operator $\lim_{\theta \to \eta^+} \frac{1}{\theta - \eta}$, $\*p = \*m((1-\eta)* \mu^S)$ and $\*q = \*m(\nu)$. Combining $\Var[Q_{\eta \to 1}(S,\cdot)]{f} = D_{\chi^2}(\nu \parallel (1-\eta)*\mu^S)$, this proves the desired equivalence.

    The equivalence between~\Cref{item:spectral-2,item:spectral-3} follows from~\Cref{lem:property-f} and~\Cref{cor:property-f}.
    \end{proof}

We also include an alternative approach to establish the equivalence of~\Cref{item:spectral-1,item:spectral-3} in~\Cref{prop:equiv-SI-stability} using the localization schemes in~\cite{chen2022localization}. A similar proof strategy was used in proving the spectral stability for the coordinate-by-coordinate localization scheme.

Let $(Y_t)_{t \in [0,1]}$ be the continue time up walk (see \Cref{def:continuous-time-down-up-walk}).
For each $t \in [0, 1]$, we use $\nu_t := \-{Law}(Y_1 \mid Y_t)$.
The main idea in~\cite{chen2022localization} is that we can find a $\=R^n$ valued random variable $Z$ adapted to $Y_t$ such that $\nu_{t+h}$ can be well approximated by $\nu_t$ and $Z$.
The random variable $Z$ usually has good properties (e.g., independent coordinates) and is easier to analyze.
Let $Z$ be a $\=R^n$ valued random variable, we define a random distribution $\hat{\nu}_{t+h}$ as follow
\begin{align*}
  \hat{\nu}_{t+h}(x) := \nu_t(x)(1 + \inner{\*1_x - \*m(\nu_t)}{Z}),
\end{align*}
where we use the notation $\*1_x$ to denote the indicator function of set $x$ in $\set{0,1}^n$.
\begin{lemma}[\cite{chen2022localization}] \label{lem:CE-cauchy}
  Suppose $\E{Z \mid Y_t} = o(h)$ and for every function $f:2^n\to \=R$ we have
  \begin{align} \label{eq:CE-good-approx}
    \E{\E[\nu_{t+h}]{f}^2 \mid Y_t} = \E{\E[\hat{\nu}_{t+h}]{f}^2 \mid Y_t} + o(h),
  \end{align}
  then let $C_t := \-{Cov}(Z \mid Y_t)$, we have
  \begin{align*}
    \Var{\E{f(Y_1) \mid Y_{t+h}} \mid Y_t} \leq \norm{C_t^{1/2}\-{Cov}(\*1_{Y_1}\mid Y_t) C_t^{1/2}}_2 \Var{f(Y_1) \mid Y_t} + o(h).
  \end{align*}
\end{lemma}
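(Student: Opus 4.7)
The plan is to combine the variance decomposition, the approximation hypothesis~\eqref{eq:CE-good-approx}, and a Cauchy--Schwarz argument, reducing the claim to a quadratic-form estimate on the vector $v := \Cov_{\nu_t}(f(Y_1), \*1_{Y_1})$. First I would use the tower property $\E{\E[\nu_{t+h}]{f} \mid Y_t} = \E[\nu_t]{f}$ to obtain the identity
\begin{align*}
\Var{\E{f(Y_1) \mid Y_{t+h}} \mid Y_t} = \E{\E[\nu_{t+h}]{f}^2 \mid Y_t} - \tp{\E[\nu_t]{f}}^2,
\end{align*}
and then invoke~\eqref{eq:CE-good-approx} to replace $\nu_{t+h}$ by $\hat{\nu}_{t+h}$ at the cost of an additive $o(h)$.

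The next step is an explicit computation starting from the definition $\hat{\nu}_{t+h}(x) = \nu_t(x)(1 + \inner{\*1_x - \*m(\nu_t)}{Z})$, which by linearity gives
\begin{align*}
\E[\hat{\nu}_{t+h}]{f} = \E[\nu_t]{f} + \inner{v}{Z}, \qquad v := \E[\nu_t]{f(Y_1)(\*1_{Y_1} - \*m(\nu_t))}.
\end{align*}
Note that $v$ is $Y_t$-measurable (since $\nu_t$ depends only on $Y_t$) and coincides with the covariance vector $\Cov_{\nu_t}(f(Y_1), \*1_{Y_1})$. Squaring and taking the conditional expectation, the cross term $2\E[\nu_t]{f}\inner{v}{\E{Z \mid Y_t}}$ is $o(h)$ by the hypothesis $\E{Z \mid Y_t} = o(h)$, and the quadratic term satisfies $\E{\inner{v}{Z}^2 \mid Y_t} = v^\intercal \E{ZZ^\intercal \mid Y_t} v = v^\intercal C_t v + o(h)$, where the last step uses that the rank-one correction $\E{Z \mid Y_t}\E{Z \mid Y_t}^\intercal$ has every entry of order $o(h)^2 = o(h)$. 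Combining, I would conclude
\begin{align*}
\Var{\E{f(Y_1) \mid Y_{t+h}} \mid Y_t} = v^\intercal C_t v + o(h).
\end{align*}

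Finally, I would bound $v^\intercal C_t v$ by Cauchy--Schwarz in $L^2(\nu_t)$. For any $w \in \R^n$,
\begin{align*}
\tp{w^\intercal v}^2 = \tp{\Cov_{\nu_t}\tp{f(Y_1),\, w^\intercal \*1_{Y_1}}}^2 \le \Var{f(Y_1) \mid Y_t} \cdot w^\intercal \Sigma w,
\end{align*}
where $\Sigma := \Cov(\*1_{Y_1} \mid Y_t)$. Substituting $w = C_t^{1/2} u$ for a unit vector $u$ and then maximizing yields
\begin{align*}
v^\intercal C_t v = \sup_{\norm{u}=1} \tp{u^\intercal C_t^{1/2} v}^2 \le \Var{f(Y_1) \mid Y_t} \cdot \sup_{\norm{u}=1} u^\intercal C_t^{1/2} \Sigma C_t^{1/2} u = \norm{C_t^{1/2} \Sigma C_t^{1/2}}_2 \cdot \Var{f(Y_1) \mid Y_t},
\end{align*}
which is the claim. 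The main subtlety I anticipate is the careful bookkeeping of the $o(h)$ error terms: in particular, the identity $\E{ZZ^\intercal \mid Y_t} = C_t + o(h)$ relies on the fact that $\E{Z \mid Y_t} = o(h)$ holds entrywise, so that the resulting outer product is of order $o(h^2)$ and hence negligible compared with the leading-order contribution $C_t$ (which is itself $O(h)$ in typical applications).
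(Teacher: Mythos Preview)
Your proposal is correct and follows essentially the same route as the paper. The only cosmetic difference is that the paper first centers $f$ by setting $g := f - \E[\nu_t]{f}$, which kills the constant term in $\E[\hat\nu_{t+h}]{g}$ and hence the cross term you track explicitly; your vector $v$ coincides with the paper's $v_t$ (adding a constant to $f$ does not change it), and the Cauchy--Schwarz step is identical. If anything, your bookkeeping of the $o(h)$ corrections---in particular the observation that $\E{ZZ^\intercal\mid Y_t}=C_t+o(h)$ because the outer product of the mean is $o(h^2)$---is more explicit than the paper's, which silently identifies $\E{ZZ^\intercal\mid Y_t}$ with $\Cov(Z\mid Y_t)$.
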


\begin{remark}
  In~\cite{chen2022localization}, the condition \eqref{eq:CE-good-approx} is usually written as
  \begin{align*}
    \nu_{t+h}(x) = \nu_t(x)(1 + \inner{\*1_x - \*m(\nu_t)}{Z}) + o(h).
  \end{align*}
  This form is easy to be misunderstood as an event that happens alomost surely, which is not the case in our application.
  Hence we choose to use \eqref{eq:CE-good-approx} instead of this form.
\end{remark}

\begin{proof}[Alternative proof of~\Cref{prop:equiv-SI-stability}]
In order to apply \Cref{lem:CE-cauchy} to the field dynamics, we only need to find a good $Z$.
According to the definition of the continuous-time down-up walk, we have
\begin{align*}
  \Pr{Y_{t+h} = x \mid Y_t = y} &= o(h) + \begin{cases}
    0, & \text{if } \abs{x \setminus y} \geq 2; \\
    \frac{h}{1-t} \*m((1-t)*\mu^y)_i, & \text{if } x\setminus y = \set{i}; \\
    1 - \frac{h}{1-t}\sum_{i\in [n]\setminus y} \*m((1-t)*\mu^y)_i, & \text{if } x = y.
  \end{cases}
\end{align*}
Moreover, we have
\begin{align*}
        \nu_{t+h} = (1 - t - h) * \mu^{Y_{t+h}} = \tp{1 - \frac{h}{1-t}} * ((1-t) * \mu^{Y_{t+h}}) = \tp{1 - \frac{h}{1-t}} * \nu_t^{Y_{t+h} \setminus Y_t}.
\end{align*}

For a fixed $h > 0$, this means we can set $Z$ to be a random vector with independent coordinate such that $Z(j) = 0$ for $j \in Y_t$ and for $i \in [n]\setminus Y_t$,
\begin{align*}
  Z (i):= \begin{cases}
    \frac{1}{\*m(\nu_t)_i}, &\text{with prob. } \frac{h}{1-t}\;\*m(\nu_t)_i;\\
    \frac{-h}{1-t}, &\text{with prob. } 1 - \frac{h}{1-t}\;\*m(\nu_t)_i.
  \end{cases}
\end{align*}
It is known that $Z$ satisfies the requirements in \Cref{lem:CE-cauchy} (see~\cite{chen2022localization}).
According to this definition, we have
\begin{align*}
  \-{Cov}(Z \mid Y_t) &= \frac{h}{1-t}\; \diag\set{\E{\*1_{Y_1} \mid Y_t}- \*1_{Y_t}}^{-1} + o(h).
\end{align*}
For convenience, let $\Pi(Y_t) := \-{diag}\set{\E{\*1_{Y_1} \mid Y_t}- \*1_{Y_t}}$.
According to \Cref{lem:CE-cauchy}, this implies that
\begin{align*}
  \Var{\E{f(Y_1) \mid Y_{t+h}} \mid Y_t} &\leq \frac{h}{1-t} \norm{\Pi(Y_t)^{-1} \-{Cov}(\*1_{Y_1}\mid Y_t)}_2 \Var{f(Y_1) \mid Y_t} + o(h). 
\end{align*}
We finish the proof by comparing above equation with \eqref{eq:spec-stab-def}.
\end{proof}

\begin{proof}[Proof of~\Cref{lem:CE-cauchy}]
Let $g(Y_1) := f(Y_1) - \E{f(Y_1) \mid Y_t}$, so that $\E{g(Y_1) \mid Y_t} = \E{\nu_t}{g} = 0$.
This means our goal becomes
\begin{align*}
  \E{ \E{g(Y_1) \mid Y_{t+h}}^2 \mid Y_t} \leq \norm{C_t^{1/2}\-{Cov}(\*1_{Y_1}\mid Y_t) C_t^{1/2}}_2 \E{g(Y_1)^2 \mid Y_t} + o(h).
\end{align*}
This means we have
\begin{align*}
  \E{\E{g(Y_1) \mid Y_{t+h}}^2 \mid Y_t}
  &= \E{\E[\nu_{t+h}]{g}^2 \mid Y_t}
   \overset{\eqref{eq:CE-good-approx}}{=} \E{\E[\hat{\nu}_{t+h}]{g}^2 \mid Y_t} + o(h) \\
  (\text{def. of } \hat{\nu}_{t+h}) \quad
  &= \E{ \tp{\int_\Omega \inner{\*1_y - \*m(\nu_t)}{Z} g(y) \-d\nu_{t}(y) }^2 \mid Y_t} + o(h)\\
  &= \E{ \tp{\inner{\int_\Omega (\*1_y - \*m(\nu_t))g(y) \-d\nu_{t}(y) }{Z}}^2 \mid Y_t} + o(h).
\end{align*}
Let $v_t := \int_\Omega (\*1_y - \*m(\nu_t))g(y) \-d\nu_{t}(y)$, then we have
\begin{align*}
  \E{\E{g(Y_1) \mid Y_{t+h}}^2 \mid Y_t} + o(h)
  &= \E{\inner{v_t}{Z}^2 \mid Y_t} = \E{v_t^\intercal Z Z^\intercal v_t \mid Y_t} \\
  &= v_t^\intercal \E{Z Z^\intercal\mid Y_t} v_t = v_t^\intercal \-{Cov}(Z \mid Y_t) v_t
  = \norm{C_t^{1/2} v_t}_2^2.
\end{align*}
This implies
\begin{align*}
  \E{\E{g(Y_1) \mid Y_{t+h}}^2 \mid Y_t} + o(h)
  &= \norm{\int_\Omega C_t^{1/2} (\*1_y - \*m(\nu_t))g(y) \-d\nu_{t}(y)}_2^2 \\
  &= \sup_{\theta: \norm{\theta}_2 = 1} \inner{\int_\Omega C_t^{1/2} (\*1_y - \*m(\nu_t))g(y) \-d\nu_{t}(y)}{\theta}^2 \\
  &= \sup_{\theta: \norm{\theta}_2 = 1} \tp{\int_\Omega \inner{C_t^{1/2} (\*1_y - \*m(\nu_t))}{\theta}g(y) \-d\nu_{t}(y)}^2 \\
  (\text{by Cauchy–Schwarz ineq.}) \quad
  &\leq \sup_{\theta: \norm{\theta}_2 = 1} \tp{\int_\Omega \inner{C_t^{1/2} (\*1_y - \*m(\nu_t))}{\theta}^2 \-d \nu_t(y)} \tp{\int_\Omega g(y)^2 \-d\nu_{t}(y)} \\
  &= \norm{C_t^{1/2} \-{Cov}(\*1_{Y_1} \mid Y_t) C_t^{1/2}}_2 \E{g(Y_1)^2 \mid Y_t}. \qedhere
\end{align*}

\end{proof}

\subsection{Conditions of entropic stability: Proof of~\texorpdfstring{\Cref{prop:entropic-sufficient-condition}}{Proposition 3.7}}

\begin{proof}[Proof of~\Cref{prop:entropic-sufficient-condition}]
The equivalence between~\Cref{item:entropic-1,item:entropic-2} can be established in a similar manner to the proof of~\Cref{prop:equiv-SI-stability}.
Specifically, it suffices to compute the derivative of $\Ent[Q_{\eta \to \theta}(S,\cdot)]{Q_{\theta \to 1} f}$. With a similar argument, we may assume $f = \frac{\dif \nu}{\dif (1-\eta) * \mu^S}$ for some distribution $\nu \ll \mu$. By~\cite[Appendix A.1]{chen2024rapid}, the transition rule of $Q_{\eta \to \eta + h}(\cdot,\cdot)$ satisfies
\begin{align}\label{eq:field-kernel}
    Q_{\eta \to \eta + h}(S,T) =
    \begin{cases}        
    1-\frac{1}{1-\eta}\sum_{v \in [n] \setminus S}\Pr[R \sim Q(S,\cdot)]{v \in R} \cdot h + o(h), & \text{if }S = T;\\
        \frac{1}{1-\eta} \Pr[R \sim Q_{\eta \to 1}(S,\cdot)]{v \in R} \cdot h + o(h), & \text{if }\abs{T\setminus S}=1;\\
        o(h), & \text{otherwise.}
    \end{cases}
\end{align}
Similarly, the derivative of $f_\theta(S) = \E[Q_{\theta \to 1}(S,\cdot)]{f}$ satisfies
\begin{align*}
\frac{\dif f_\theta(S)}{\dif \theta} = \frac{\dif}{\dif \theta} \tp{\sum_{S \subseteq T} \frac{\mu(T)(1-\theta)^{\abs{T}-\abs{S}}}{\sum_{S \subseteq R} \mu(R) (1-\theta)^{\abs{R}-\abs{S}}} f(T)} = -\frac{1}{1-\eta} \tp{\E[T \sim Q_{\eta \to 1}(S,\cdot)]{\abs{T}\tp{f_{\eta}(T)-1}}}.
\end{align*}
Therefore, the entropy $\Ent[Q_{\eta \to \eta +h}(S,\cdot)]{f_{\eta + h}}$ satisfies
\begin{align*}
    \Ent[Q_{\eta \to \eta+h}(S,\cdot)]{f_{\eta + h}} &= \sum_{T \mid S \subseteq T} Q_{\eta \to \eta+h}(S,T) \cdot f_{\eta + h}(T) \log \frac{f_{\eta + h}(T)}{f_\eta(S)}\\
    &= \frac{1}{1-\eta} \sum_{v \in [n] \setminus S} \Pr[R \sim Q_{\eta \to 1}(S,\cdot)]{v \in R} \cdot f_\eta(S \cup \{v\}) \cdot \log \frac{f_\eta(S \cup \{v\})}{f_\eta(S)} \cdot h\\
    &- \frac{1}{1-\eta} \tp{\E[T \sim Q_{\eta \to 1}(S,\cdot)]{\abs{T}\tp{f_{\eta}(T)-1}}} \cdot h  +o(h)\\
    &= \frac{1}{1-\eta} \tp{\sum_{i \in [n] \setminus S}q_i \log \frac{q_i}{p_i} - (q_i - p_i)} \cdot h + o(h),
\end{align*}
where $\*p = \*m((1-\eta)*\mu^S)$ and $\*q = \*m(\*\nu)$ denotes the mean vector. Together with $\Ent[Q_{\eta \to 1}(S,\cdot)]{f} = D_{\mathrm{KL}}(\nu \parallel (1-\eta) * \mu^S)$, we prove the equivalence between~\Cref{item:entropic-1,item:entropic-2}.
To establish~\Cref{item:entropic-2} via~\Cref{item:entropic-3}, we use~\Cref{lem:property-f}. Fix any feasible pinning $S\subseteq [n]$, let $F_C(\*q)$ be the function for distribution $\mu^S$ defined in~\Cref{lem:property-f}. By~\Cref{item:entropic-3} and~\Cref{lem:property-f}, $F_C(\*q)$ is concave at its convex domain $\+A$, implying that $F_C(\*q) \le F_C(\*p) = 0$ as $\nabla F_C(\*p) = \*0_n$.
\end{proof}


\subsection{Equivalent definitions of log-concavity: Proof of \texorpdfstring{\cref{prop:log-concave}}{Proposition 3.8}}
\begin{proof}[Proof of \cref{prop:log-concave}]
We will establish the equivalence using the trickle-down theorem, \Cref{thm:cor-bound}, and characterizations of spectral stability and entropic stability.

\medskip
    \noindent``1 $\Rightarrow$ 2'': This follows from~\Cref{thm:cor-bound} and~\Cref{prop:equiv-SI-stability}.

\medskip
    \noindent``2 $\Leftrightarrow$ 3'': This follows from~\Cref{prop:equiv-SI-stability},~\Cref{prop:entropic-sufficient-condition}, and~\Cref{cor:property-f}. 
    
\medskip    
    \noindent``2 $\Rightarrow$ 4'': Note that a distribution $\mu$ over $\XX \subseteq \{0,1\}^n$ is log-concave if and only if
    \begin{align}\label{eq:condition-concavity}
        \forall z \in \mathbb{R}_{>0}^n,\quad \nabla^2 \log g_\mu(z) = \frac{g_\mu(z) \cdot \nabla^2 g_\mu(z) - (\nabla g_\mu(z))^{\otimes 2}}{g_\mu^2(z)} \preceq 0.
    \end{align}
    Moreover, the quantities $\tp{\frac{\nabla g_\mu(z)}{g_\mu(z)}}_i$ and $\tp{\frac{\nabla^2 g_\mu(z)}{g_\mu(z)}}_{i,j}$ satisfy
    \begin{align*}
    \tp{\frac{\nabla g_\mu(z)}{g_\mu(z)}}_i = \frac{1}{z_i} \Pr[S \sim z*\mu]{i \in S} \quad \text{and} \quad \tp{\frac{\nabla^2 g_\mu(z)}{g_\mu(z)}}_{i,j} =\begin{cases}
    \frac{1}{z_i z_j} \Pr[S \sim z*\mu]{i,j\in S}, & i \neq j;\\
    0, & \text{otherwise.}
    \end{cases}
    \end{align*}
    Therefore,~\eqref{eq:condition-concavity} is equivalent to $\mathrm{diag}\tp{z^{-1}} \tp{\mathrm{Cov}(z*\mu) - \mathrm{diag}(\*m(z * \mu))} \mathrm{diag}\tp{z^{-1}} \preceq 0$, where $z^{-1}$ denotes the vector obtained by taking inverse on each entry of $z$. This proves the implication.

\medskip
    \noindent``4 $\Rightarrow$ 1'': By previous calculations and~\eqref{eq:small-lambda}, we have
    \begin{align*}
        \nabla^2 \log g_{\mu^S}(\lambda \*1) = \lambda^{-2} \tp{\mathrm{Cov}(\lambda * \mu) - \mathrm{diag}\tp{\*m(\lambda * \mu)}} = \diag(\*r_S) (M_S - I) \mathrm{diag}(\*r_S) + o(1),
    \end{align*}
    where $\*r_S$ is defined in~\Cref{def:marginal-ratios}. Therefore, by taking $\lambda \to 0^+$, we have $M_S - I \preceq 0$.
\end{proof}

\end{document}